\documentclass[a4paper]{article}

\usepackage[english]{babel}
\usepackage[utf8x]{inputenc}
\usepackage{url}
\usepackage{hyperref}
\usepackage{amsmath}
\usepackage{amssymb}
\usepackage{amsthm}
\usepackage{graphicx}
\usepackage[colorinlistoftodos]{todonotes}
\usepackage{cite}
\usepackage[blocks]{authblk}
\usepackage{mathtools}
\usepackage{bbm}
\usepackage{array}
\usepackage{cleveref}

\theoremstyle{plain}
\newtheorem{defi}{Definition}
\newtheorem{lemme}[defi]{Lemma}
\newtheorem{coro}[defi]{Corollary}
\newtheorem{prop}[defi]{Proposition}
\newtheorem{theo}[defi]{Theorem}
\theoremstyle{definition}
\newtheorem{example}[defi]{Example}
\theoremstyle{remark}
\newtheorem{remark}[defi]{Remark}

\numberwithin{defi}{section}
%\numberwithin{equation}{section}
%\numberwithin{theo}{section}
%\numberwithin{lemme}{section}
%\numberwithin{coro}{section}
%\numberwithin{prop}{section}
%\numberwithin{example}{section}
%\numberwithin{remark}{section}

\def\M{\mathfrak M}

\def\nn{\mathbb{N}}
\def\zz{\mathbb{Z}}

\def\cc{\mathbb{C}}
\def\H{{\mathcal H}}
\def\h{{\mathfrak h}}

\def\tr{\mathrm{Tr}}
\def\id{\mathrm{Id}}

\makeatletter
\newcommand{\oset}[2]{%
  {\mathop{#2}\limits^{\vbox to -.5\ex@{\kern-\tw@\ex@
   \hbox{\scriptsize #1}\vss}}}}
\makeatother

\newcommand\ind{\mathbbm{1}}
\def\pp{\mathbb{P}}
\def\ee{\mathbb{E}}

\def\rhoinv{{\tau^{\mathrm{inv}}}}
\def\eps{\varepsilon}
\def\vec#1{|#1\rangle}

\def\ket#1{|#1\rangle}
\def\braket#1#2{\langle #1  , #2\rangle}
\def\ketbra#1#2{|#1\rangle\langle#2|}

\def\bD{{\partial \! D}} 

\def\mfP{{\mathfrak P}}
\def\mfPD{{\mathfrak P}^D}

\def\mfN{{\mathfrak N}}
\def\mfND{{\mathfrak N}^D}
\def\mfT{{\mathfrak T}}

\def\E{{\mathcal E}}
\def\B{{\mathcal B}}
\def\cP{{\mathcal P}}

\newcommand{\sca}[2]{\langle #1 ,#2\rangle}

\newcommand{\Hcal}{\mathcal{H}}

\newcommand{\Rcal}{\mathcal{R}}

\newcommand{\Scal}{\mathcal{S}}
\newcommand{\Dcal}{\mathcal{D}}
\newcommand{\Ecal}{\mathcal{E}}
\newcommand{\Bcal}{\mathcal{B}}

\newcommand{\proj}[2]{|#1\rangle\langle #2|}

\renewcommand{\sp}{\mathrm{Sp}\,}

\bibliographystyle{abbrv}

\title{Passage times, exit times and Dirichlet problems for open quantum walks}
    \author{Ivan Bardet}
    \affil[1]{\small Institut Camille Jordan, Universit\'e Claude Bernard Lyon 1, 43 boulevard du 11 novembre 1918, 69622 Villeurbanne cedex, France}
    \author{Denis Bernard}
    \affil[2]{\small CNRS and Laboratoire de Physique Th\'eorique de l'Ecole Normale Sup\'erieure de Paris, PSL University, France}
    \author{Yan Pautrat}
	\affil[3]{\small Laboratoire de Math\'ematiques d'Orsay, Univ. Paris-Sud, CNRS, Universit\'e Paris-Saclay,  91405~Orsay, France}

\begin{document}

\maketitle

\begin{abstract}
    We consider open quantum walks on a graph, and consider the random variables defined as the passage time and number of visits to a given point of the graph. We study in particular the probability that the passage time is finite, the expectation of that passage time, and the expectation of the number of visits, and discuss the notion of recurrence for open quantum walks. We also study exit times and exit probabilities from a finite domain, and use them to solve Dirichlet problems and to determine harmonic measures. We consider in particular the case of irreducible open quantum walks. The results we obtain extend those for classical Markov chains.
\end{abstract}

\section{Introduction}

Open quantum walks were defined in \cite{APSS}. They are extensions of (discrete-time) Markov chains, where the process retains some amount of memory, and this memory is encoded by a quantum state. Open quantum walks are a simple model, which has stirred interest because of their various possible applications (see \cite{SinPet} and references therein for models based on open quantum walks, and \cite{ZWLJN} on the general topic of control of quantum trajectories) and interesting features and extensions (see \cite{BBToqbm,BBTbistability,PelOQRW}). They have therefore given rise to various theoretical studies, investigating e.g. ergodic properties, central limit theorems and large deviations properties (see \cite{AGS,CP1,CP2,CarvalhoGuidiLardizabal,LardSouza2}). The approach of \cite{CP1} was to give analogues for open quantum walks of notions usually associated with Markov chains, such as irreducibility and period, and to investigate their consequences.  In this article, we continue this program of studying open quantum walks in analogy with Markov chains, and investigate other notions: the probability of visiting a given site in finite time, the expected number of visits, the expected return time, and their relation with the Dirichlet problem. Some of these notions were discussed in e.g. \cite{CarvalhoGuidiLardizabal,DhahriMukhamedov}, but our study is the first systematic exploration of these concepts and their behavior for irreducible open quantum walks.
\medskip

The first standard question one may ask about Markov chains treats recurrence problems.
Let $(x_n)_n$ be a Markov chain on a discrete set $V$. For any $i$ in $V$ we define
\begin{gather*}
t_i = \inf\{n\geq1\, |\, x_n=i\}, \qquad 
n_i = \mathrm{card}\{n\geq1\, |\, x_n=i\},
\end{gather*}
The classical results (see e.g. \cite{Durrett,Norris}) concerning return times $(t_i)_{i\in V}$ and number of visits $(n_i)_{i\in V}$ imply that for any $i$ in $V$
\begin{equation} \label{eq_Classical1}
\pp_i(t_i<\infty)=1
\Leftrightarrow \ee_i( n_i)=\infty.
\end{equation}
Therefore, this equivalence allows to define the notion of recurrence using either quantity $\pp_i(t_i<\infty)$ or $\ee_i( n_i)$. In addition, if the Markov chain is irreducible,
\begin{gather}
\pp_i(t_i<\infty)<1 \mbox{ for all }i\in V, \mbox{ or } \pp_i(t_i<\infty)=1 \mbox{ for all }i\in V, \label{eq_Classical2}\\
\ee_i(n_i)<\infty \mbox{ for all }i\in V, \mbox{ or } \ee_i(n_i)=\infty \mbox{ for all }i\in V. \label{eq_Classical3}
\end{gather}
Similarly, for an irreducible Markov chain, 
\begin{equation}\label{eq_Classical4}
\ee_i(t_i)<\infty \mbox{ for all }i\in V,\mbox{ or } \ee_i(t_i)=\infty \mbox{ for any }i\in V.
\end{equation}
In addition, if the Markov chain admits an invariant probability measure $(\pi_i)_{i\in V}$, then 
\begin{equation}\label{eq_Classical5}
\ee_i(t_i)=\pi_i^{-1}<\infty \mbox{ for any }i\in V.
\end{equation}
\smallskip

The second standard question concerns exit times and exit probabilities. If $D$ is a finite subset of $V$, we define $t_{\bD}$ as its exit time
\[t_\bD=\inf \{n\geq 1\,|\, x_n\in \bD\}\]
where $\bD$ is the boundary of $D$ (we give a precise definition later on), and for $i\in D$, $j\in \bD$ define the harmonic measure at $i$ relative to $j$ by 
\begin{equation}\label{eq_harmonic_classic}
\mu_{i}^D(j)=\pp(x_{t_{\bD}}=j\,|\, x_0=i)\
\end{equation}
which represents the probability of exiting $D$ through $j$ when starting from $i$. It is known that the map $i\mapsto \mu_i(j)$ is harmonic on $D$ for any $j\in\bD$, and is an important tool in solving Dirichlet problems. In addition, the solution of a Dirichlet problem can be characterized as the minimizer of some functional, related to a Dirichlet form (see \cite{Revuz}).

In this article we investigate similar relations to \eqref{eq_Classical1}--\eqref{eq_Classical5}, and study an analogue of Dirichlet problems for open quantum walks. We also look at the notion of harmonic measures for open quantum walks. These measures, as well as the Dirichlet problems for open quantum walks, provide simple examples of non-commutative extensions of standard geometrical structures.

This article is organized as follows: in Section \ref{sec_definitionsnotations} we recall the definitions of open quantum walks and various notions, including irreducibility and harmonicity. In Section \ref{sec_passagetimesnumbervisits} we study the relation between return times and number of visits and in particular analogues for OQW of \eqref{eq_Classical1}, \eqref{eq_Classical2} and \eqref{eq_Classical3}, and discuss the literature on the subject of recurrence for open quantum walks. In Section \ref{sec_expreturntimes} we study the expected values of return times, prove an analogue of  \eqref{eq_Classical4}, and relate these expected values to invariant measures, similar to \eqref{eq_Classical5}. In Section \ref{sec_examples} we describe various examples that serve in particular as counterexamples to various possible conjectures. In Section \ref{sec_exittimesDirichletproblems} we define Dirichlet problems for open quantum walks and characterize their solutions. In Section \ref{sec_results_for_reducible_open_quantum_walks} we discuss extensions of various results to reducible open quantum walks. In Section \ref{sec_varDirichlet} we introduce Dirichlet forms for open quantum walks and use them to characterize solutions of Dirichlet problems. In every section, a non-optimal but nevertheless satisfactory result is given early in the introductory part, and the intermediate results necessary for the proof (most of which have weaker assumptions than necessary for the results stated earlier) are detailed in the rest of the section. The proofs are given in the Appendix, unless they contain elements necessary to the comprehension of the text.

\paragraph{Acknowledgements.} All three authors acknowledge the support of ANR project StoQ ``Stochastic Methods in Quantum Mechanics'', n${}^\circ$ANR-14-CE25-0003. They also want to thank St\'ephane Attal for discussions at an early stage of this project.

\section{Open quantum walks: definitions and notation}
\label{sec_definitionsnotations}

We start this section with a short presentation of open quantum walks and the associated notion of irreducibility. We follow the notation of \cite{CP1} and refer the reader to that article for more details.

We consider a Hilbert space $\H$ of the form $\H = \bigoplus_{i\in V}\mathfrak h_i$ where $V$ is a countable set of vertices, and each $\mathfrak h_i$ is a separable Hilbert space. We view $\H$ as describing the degrees of freedom of a particle constrained to move on $V$: the ``$V$-component" describes the spatial degrees of freedom (the position of the particle) while $\mathfrak h_i$ describes the internal degrees of freedom of the particle, when it is located at site $i\in V$. 

For book-keeping purposes we denote the subspace $\mathfrak h_i$ of $\H$ by $\mathfrak h_i \otimes \vec i$. Therefore, whenever a vector $\varphi\in\H$ belongs to the subspace $\mathfrak h_i$, we will denote it by $\varphi\otimes\vec i$ and drop the (implicit) assumption that $\varphi\in\mathfrak h_i$. Similarly, when an operator $A$ on $\H$ satisfies $\mathfrak h_j^\perp\subset\mathrm{Ker}\, A$ and $\mathrm{Ran}\, A\subset \mathfrak h_i$, we denote it by $A=L_{i,j}\otimes \ketbra ij$ where $L_{i,j}$ is viewed as an operator from $\mathfrak h_j$ to $\mathfrak h_i$.  This will allow us to use the same notation as in e.g. \cite{APSS, AGS, KY, LardSouza, PelOQRW}. Consistently with this notation, for $W$ a subset of $V$ we denote
\[\mathcal H_W=\bigoplus_{i\in W} \h_i\otimes \vec i.\]
and $\id_W=\sum_{i\in W} \id_{\h_i}\otimes \vec i \langle i|$. We identify $\mathcal H_W$ (respectively $\mathcal B(\mathcal H_W)$) with a subspace of $\mathcal H$ (respectively $\mathcal B(\mathcal H)$).
\smallskip

An open quantum walk (or OQW) is a map on the Banach space $\mathcal I _1(\H)$ of trace-class operators on~$\H$, given by
\begin{equation}\label{eq_OQRW}
\mathfrak M \, : \, \tau \mapsto \sum_{i,j\in V}A_{i,j}\, \tau \,A_{i,j}^*
\end{equation}
where, for any $i,j$ in $V$, the operator $A_{i,j}$ is of the form $L_{i,j}\otimes |i\rangle \langle j|$ and the operators $L_{i,j}$ satisfy
\begin{equation}\label{eq_stochastic}
\forall j\in V,\quad \sum_{i\in V}L_{i,j}^* L_{i,j}=\id_{\h_j},
\end{equation}
(this series is meant in the strong convergence sense). The operators $L_{i,j}$ represent the effect of a transition from site $j$ to site $i$, encoding both the probability of that transition and its effect on the internal degrees of freedom. Equation \eqref{eq_stochastic} therefore encodes the ``stochasticity'' of the transitions $L_{i,j}$, and immediately implies that $\tr\,\mathfrak M(\tau) = \tr\, \tau$ for any $\tau$ in $\mathcal I _1(\H)$.

Recall that an operator $X$ on $\H$ is called positive (respectively definite positive) if $\braket \varphi {X\varphi} \geq 0$ (respectively $\braket \varphi {X\varphi} > 0$) for any $\varphi\in \H\setminus\{0\}$. We define a \textit{state} on $\H$ to be a positive operator in $\mathcal I_1(\H)$ with trace one, and call a state \emph{faithful} if it is definite positive. We denote the set of states on $\H$ (respectively $\h_i$) by $\mathcal S(\H)$ (respectively $\mathcal S(\h_i)$). The map defined by \eqref{eq_OQRW} maps a state to a state. It actually has the stronger property of being trace-preserving and \emph{completely positive}, i.e. for any $n\in \nn$, $\M\otimes \id_{\mathcal B(\cc^n)}$ acting on $\mathcal I_1(\mathcal H)\otimes \mathcal B(\cc^n)$  is positive; such an operator is commonly called a \textit{quantum channel}, see e.g. \cite{wolftour}. In addition, the topological dual $\mathcal I_1(\H)^*$ can be identified with $\B(\H)$ through the linear form
\[(\tau,X)\mapsto \tr(\tau X),\]
so that the dual $\M^*$ of $\M$ acts on $\mathcal B(\mathcal H)$. By the Russo-Dye Theorem (see \cite{RussoDye}), we have the relation\footnote{Note that the two norms $\|\cdot\|$ in this relation are different, the first being the norm for operators acting on $\mathcal B({\mathcal H})$, the second the norm for operators acting on ${\mathcal H}$.} $\|\M^*\|=\|\M^*(\id_{\mathcal H})\|$, so that relation \eqref{eq_stochastic} implies that $\|\M\|=1$ as an operator on $\mathcal I_1(\H)$. 

A crucial remark is that the range of $\mathfrak M$ is a subset of the class of ``diagonal'' states, i.e. states of the form
\begin{equation} \label{eq_formstates}
\sum_{i\in V}\tau(i)\otimes \ketbra ii,
\end{equation}
where each $\tau(i)$ is in $\mathcal I_1(\h_i)$. In addition, even if $\tau$ is not diagonal, i.e. is of the form $\tau = \sum_{i,j\in V} \tau(i,j)\otimes\ketbra ij$, then $\M(\tau)$ depends only on its diagonal elements $\tau(i,i)$. Therefore, from now on, we will only consider states of the form \eqref{eq_formstates}. The action of $\M$ on such states takes the form
\begin{equation}\label{eq_Mn2}
\mathfrak M(\tau)=\sum_{i\in V}\big(\sum_{j\in V} L_{i,j}\, \tau(j)\, L_{i,j}^*\big) \otimes \ketbra ii.
\end{equation}
As argued in \cite{CP1} (see in particular section 8), a natural extension of the above framework is to encode the transition from site $j$ to site $i$ not by $\tau(j)\mapsto L_{i,j}\, \tau(j) \, L_{i,j}^*$, but by a more general, completely positive map $\tau(j)\mapsto \Phi_{i,j}\big(\tau(j)\big)$. We will not discuss these generalized open quantum walks any further.

We now describe a family of classical random processes associated with $\M$. Let $\Omega=V^{\nn}$ and for any state $\rho$ on $\H$ of the form \eqref{eq_formstates}, define on $\Omega$ a probability by defining its restrictions to $V^{n+1}$ for $n\geq 0$:
\begin{equation}\label{eq_defPtau}
\pp_\tau(i_0,\ldots,i_n)=\tr\big(L_{i_n,i_{n-1}}\ldots L_{i_1,i_0} \,\tau(i_0)\,  L_{i_1,i_0}^* \ldots L_{i_n,i_{n-1}}^* \big).
\end{equation}
Relation \ref{eq_stochastic} ensures the consistency of these restrictions, and the Daniell-Kolmo\-gorov extension Theorem ensures that $\pp_\tau$ defines a unique probability on $\Omega$. We will mostly consider initial states of the form $\rho\otimes \ketbra ii$ with $\rho \in \mathcal S(\h_i)$, i.e. with initial position $i$ and initial internal state $\rho$; for notational simplicity, the corresponding probability $\pp_\tau$ will then be denoted by~$\pp_{i,\rho}$.

We will consider two random processes $(x_n)_n$ and $(\rho_n)_n$ defined on $\omega=(i_0,i_1,\ldots) \in \Omega$ by 
\begin{equation}\label{eq_tranclassOQW}
\begin{aligned}
x_n(\omega)	&=i_n,\\
\rho_n(\omega)&= \frac{L_{i_n,i_{n-1}}\ldots L_{i_1,i_0} \,\tau(i_0)\,  L_{i_1,i_0}^* \ldots L_{i_n,i_{n-1}}^* }{\tr\big(L_{i_n,i_{n-1}}\ldots L_{i_1,i_0} \,\tau(i_0)\,  L_{i_1,i_0}^* \ldots L_{i_n,i_{n-1}}^* \big)}\,.
\end{aligned}
\end{equation}
Note that the variable $\rho_n$ is a state on $\mathfrak h_{x_n}$. Besides, the process $(x_n,\rho_n)_n$ is Markov, corresponding to the transitions defined loosely as follows: conditionally on $(x_n=j,\rho_n=\rho)$, one has
\begin{equation}\label{eq_probatransition}
(x_{n+1},\rho_{n+1})= (i, \frac{L_{i,j} \rho L_{i,j}^*}{\tr\, L_{i,j} \rho \,L_{i,j}^*}) \quad \mbox{with probability }\,\tr (\,L_{i,j} \rho \,L_{i,j}^*).
\end{equation}
Remark that $(x_n)_n$ or $(\rho_n)_n$ considered separately are not Markov processes.

Note that open quantum walks include classical Markov chains. More precisely, consider a Markov chain  $(M_n)_n$ on the vertex set~$V$, with probability $t_{i,j}$ of transition from $j$ to $i$ and initial distribution $(p_i)_{i\in V}$. Define an open quantum walk $\M$ with $\mathfrak h_i \equiv \cc$ and $L_{i,j}=\sqrt {t_{i,j}}$. If the initial state is $\tau=\sum_{i\in V} p_i \otimes \ketbra ii$ then $\M(\tau)$ is of the form
\[\M(\tau)=\sum_{i\in V} (\sum_{j\in V} t_{i,j}\,p_j)\otimes\ketbra ii.\]
Therefore, $x_0$ has the same law as $M_0$ and $x_1$ has the same law as $M_1$, etc. This open quantum walk will be called the minimal dilation of the Markov chain (because it is an OQW implementation of the Markov chain with minimal spaces $\h_i$, see \cite{CP1} for more details).

We now introduce the notion of irreducibility for open quantum walks. For $i,j$ in $V$ we call a \textit{path from $i$ to~$j$} any finite sequence $i_0,\ldots,i_\ell$ in $V$ with $\ell\geq 1$, such that $i_0=i$ and $i_\ell=j$. Such a path is said to be of length $\ell$, and we denote the length of a path $\pi$ by $\ell(\pi)$. We denote by $\cP_\ell (i,j)$ the set of paths from $i$ to $j$ of length $\ell$, and by
\[{\cP}(i,j)=\cup_{\ell\geq 1} \cP_\ell (i,j).\]
For a fixed OQW $\M$ and $\pi=(i_0,\ldots,i_\ell)$ in $\cP (i,j)$ we denote by $L_\pi$ the operator from $\mathfrak h_i$ to $\mathfrak h_j$ defined by:
$$ L_\pi=L_{i_\ell,i_{\ell-1}}\ldots L_{i_1,i_0}=L_{j,i_{\ell-1}}\ldots L_{i_1,i}.$$
\begin{defi} \label{defi_irreducibility}
An open quantum random walk $\M$ as above is irreducible if for any $i,j$ in $V$ and for any $\varphi$ in $\mathfrak h_i\setminus\{0\}$, the set
\begin{equation} \label{eq_enstotal1}
	\{L_\pi \varphi \, |\, \pi\in\cP(i,j)\}
\end{equation}
is total in $\mathfrak h_j$.
\end{defi}
This definition (which is a special case of Davies irreducibility as defined in \cite{Dav}) and its consequences were introduced in \cite{CP1}. The main consequence is that for~$\M$ an irreducible open quantum walk, the set of solutions of $\M(\tau)=\tau$ is a space of dimension at most one, and one solution is a faithful.

\begin{remark}
	There was a slight ambiguity in the definition of irreducibility as given in \cite{CP1,CP2}, where the paths ``of length zero'' $\pi=\{i\}$ with associated transition $L_{\{i\}}=\id_{\h_i}$ was allowed in \eqref{eq_enstotal1}. In other words, irreducibility was defined by the fact that for any $i,j$ in $V$ and $\varphi$ in $\mathfrak h_i\setminus\{0\}$, the set
\begin{equation} \label{eq_enstotal2}
	\{\varphi\}\cup \{L_\pi \varphi \, |\, \pi\in\cP(i,j)\}
\end{equation} (with $\pi$ of length at least one) is total. It is easy, however, to see that this second definition is equivalent to Definition \ref{defi_irreducibility}. Therefore, even though we define irreducibility by Definition \ref{defi_irreducibility}, we can still apply the results of \cite{CP1}.
\end{remark}

The following class of open quantum walks will be relevant, and in particular will have properties closer to those of (classical) Markov chains.
\begin{defi}
	We say that an open quantum walk $\mathfrak M$ is semifinite if for any~$i$ in $V$, $\dim \mathfrak h _i< \infty$. We say that it is finite if it is semifinite and $V$ is a finite set.
\end{defi}
Non-semifinite open quantum walks, i.e. those that can have an infinite number of degrees of freedom at a given site $i\in V$, can exhibit local degeneracies that will make them less interesting to us.

One of the topics of interest in the present paper is that of \textit{quantum harmonic} operators:
\begin{defi}
	An operator $A=\sum_{i\in V} A_i \otimes\ketbra ii$ with $A_i \in \B(\mathfrak h_i)$ for each~$i\in~V$ is called quantum harmonic if it satisfies $\M^*(A)=A$, or equivalently if
	\begin{equation*}
	\mbox{for any }j\in V, \mbox{ one has } A_j= \sum_{i\in V} L_{i,j}^* A_i L_{i,j}.
	\end{equation*} 
\end{defi}
Remark that any operator $\lambda \id_{\mathcal H}$, $\lambda$ in $\mathbb{C}$, is quantum harmonic. In the case of a minimal dilation of a classical Markov chain, this definition is equivalent to the classical definition of harmonicity. An immediate property of quantum harmonic operators is the following:
\begin{lemme} \label{lemme_martingale}
	Let $\M$ be an open quantum walk and $A$ a quantum harmonic operator for $\M$. Then for any initial $(x_0,\rho_0)$, the Markov chain~$(x_n,\rho_n)_n$, with transition probabilities given by Equation \eqref{eq_probatransition}, is such that $m_n=\big(\tr(\rho_n A_{x_n})\big)_n$ is a $\pp_{x_0,\rho_0}$-martingale.
\end{lemme}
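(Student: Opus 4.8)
The plan is to verify the two defining properties of a martingale — integrability and the conditional expectation identity — directly from the Markov transition rule \eqref{eq_probatransition}, the one nontrivial input being the quantum harmonic identity $\sum_{i\in V}L_{i,j}^*A_iL_{i,j}=A_j$.

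First I would fix the natural filtration $\mathcal F_n=\sigma(x_0,\dots,x_n)$ on $\Omega=V^{\nn}$. Since, for fixed initial data $(x_0,\rho_0)$, the variable $\rho_n$ is by construction a deterministic function of $(x_0,\dots,x_n)$, the quantity $m_n=\tr(\rho_n A_{x_n})$ is $\mathcal F_n$-measurable, and conditioning on $\mathcal F_n$ is the same as conditioning on the pair $(x_n,\rho_n)$. Integrability is immediate and in fact uniform: $A=\sum_{i\in V}A_i\otimes\ketbra ii$ is a bounded operator on $\H$, so $\|A_{x_n}\|\le\|A\|$, and since $\rho_n$ is a state on $\mathfrak h_{x_n}$ we get $|m_n|\le\|A\|<\infty$.

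Next I would compute $\ee_{x_0,\rho_0}[m_{n+1}\mid\mathcal F_n]$ using \eqref{eq_probatransition}. On the event $\{x_n=j,\rho_n=\rho\}$, summing over the possible values $i$ of $x_{n+1}$: when $\tr(L_{i,j}\rho L_{i,j}^*)>0$ the value of $m_{n+1}$ is $\tr(L_{i,j}\rho L_{i,j}^*A_i)/\tr(L_{i,j}\rho L_{i,j}^*)$, so after multiplying by the transition probability the $i$-th contribution is $\tr(L_{i,j}\rho L_{i,j}^*A_i)$; and when $\tr(L_{i,j}\rho L_{i,j}^*)=0$ one has $L_{i,j}\rho L_{i,j}^*=0$ (a positive operator of zero trace), so the $i$-th contribution is again $\tr(L_{i,j}\rho L_{i,j}^*A_i)=0$. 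Hence
\[
\ee_{x_0,\rho_0}[m_{n+1}\mid\mathcal F_n]
=\sum_{i\in V}\tr\big(L_{i,j}\rho L_{i,j}^*A_i\big)
=\sum_{i\in V}\tr\big(L_{i,j}^*A_iL_{i,j}\,\rho\big)
=\tr\Big(\big(\textstyle\sum_{i\in V}L_{i,j}^*A_iL_{i,j}\big)\rho\Big)
=\tr(A_j\rho),
\]
where the last two equalities use, respectively, the interchange of the sum with the trace and the quantum harmonic property of $A$. The interchange is legitimate because the series is absolutely convergent, $\sum_{i\in V}|\tr(L_{i,j}^*A_iL_{i,j}\rho)|\le\|A\|\sum_{i\in V}\tr(L_{i,j}^*L_{i,j}\rho)=\|A\|\,\tr\rho=\|A\|$ by \eqref{eq_stochastic}, and the partial sums of $\sum_{i}L_{i,j}^*A_iL_{i,j}$ converge strongly to $A_j$ by definition of a quantum harmonic operator, so $\tr$ against the trace-class operator $\rho$ passes to the limit (e.g. by dominated convergence over the eigenbasis of $\rho$, using the uniform bound above). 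Therefore $\ee_{x_0,\rho_0}[m_{n+1}\mid\mathcal F_n]=\tr(\rho_n A_{x_n})=m_n$, which is the martingale property.

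There is no serious obstacle here; the only delicate points are the bookkeeping just described — namely that it is $(x_n,\rho_n)_n$ and not $(x_n)_n$ alone that is Markov, the treatment of the vanishing transition probabilities in \eqref{eq_probatransition}, and, when $V$ is infinite, the justification of the exchange of the sum over $i$ with the trace, which reduces to the uniform bound and the strong convergence built into the notion of quantum harmonicity.
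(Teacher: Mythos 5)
Your proof is correct and follows essentially the same route as the paper's: condition on $(x_n,\rho_n)$, note that the normalization in \eqref{eq_probatransition} cancels against the transition probability so the $i$-th term is $\tr(L_{i,x_n}\rho_n L_{i,x_n}^*A_i)$, move the sum inside the trace by cyclicity, and invoke the harmonic identity. The additional care you take (integrability via boundedness of $A$, the vanishing-probability transitions, and the sum/trace interchange for infinite $V$) is a welcome tightening of the same argument rather than a different one.
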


Before we move on to the next section, it will be convenient to introduce some additional notation for specific paths. For any $i$ and $j$ in $V$, and for $W$ a subset of $V$, we define $\cP^{W}(i,j)$ to be the set of paths in $\cP(i,j)$ that remain in $W$ except possibly for their start- and endpoint. More precisely:
\begin{gather*}
(i_0,\ldots,i_\ell)\in\cP^{W} \Leftrightarrow(i_0,\ldots,i_\ell)\in\cP \mbox{ with } i_1,\ldots,i_{\ell-1}\in W.
\end{gather*}
We denote for any $\ell\geq 1$ by e.g. $\cP_\ell^{W}$ the subset of $\cP^{W}$ consisting of paths of length $\ell$, etc.

\section{Passage times and number of visits}
\label{sec_passagetimesnumbervisits}

In this section we consider the passage times $t_j$ to a given point $j\in V$ and the number of visits $n_j$ to that point. We define
\begin{gather*}
t_j = \inf\{n\geq1\, |\, x_n=j\}, \qquad 
n_j = \mathrm{card}\{n\geq1\, |\, x_n=j\},
\end{gather*}
Recall the standard results \eqref{eq_Classical1}, \eqref{eq_Classical2}, \eqref{eq_Classical3} in the case where the OQW is a minimal dilation of a Markov chain. The equivalence \eqref{eq_Classical1} follows from the markovianity of the process $(x_n)_n$. However, in the general OQW case, the process $(x_n)_n$ alone is not markovian and at least some of the above relations will fail. Indeed, Example \ref{ex_exampleB} below with $i=0$ and $\rho\neq\ketbra {e_1}{e_1}, \ketbra{e_2}{e_2}$ is such that $\pp_{i,\rho}(t_i<\infty)<1$ and $\ee_{i,\rho}(n_i)=\infty$, showing that \eqref{eq_Classical1} does not hold. Example \ref{ex_exampleC}, which displays an irreducible OQW, is such that $\pp_{i,\rho}(t_i<\infty)=1$ and $\ee_{i,\rho}(t_i)<\infty$ for $p<1/2$ and $i=0$, $\rho=\ketbra{e_2}{e_2}$. Consequently it shows that \eqref{eq_Classical1} may not even hold for irreducible OQWs.

We therefore have to work some more in order to obtain nontrivial connections between $\pp_i(t_i<\infty)$ and $\ee_i(n_i)$ and then to obtain universality results in the irreducible case. This will be done in the next two subsections. As a corollary of our investigation, a clear conclusion can be drawn for semifinite irreducible open quantum walks, which we give in Theorem \ref{theo_summary} below. Its proof, however, relies on all results given in those two subsections. We will finish this section with a discussion of the notion of recurrence for open quantum random walks.

\begin{theo} \label{theo_summary}
    Let $\M$ be a semifinite irreducible open quantum walk. We are in one (and only one) of the following situations:
\begin{enumerate}
    \item for any $i,j$ in $V,$ $\rho$ in $\mathcal S(\h_i)$, one has $\ee_{i,\rho}(n_j)=\infty$ and $\pp_{i,\rho}(t_j<\infty)=1$;
    \item for any $i,j$ in $V,$ $\rho$ in $\mathcal S(\h_i)$, one has $\ee_{i,\rho}(n_j)<\infty$ and $\pp_{i,\rho}(t_j<\infty)~<~1$;
    \item for any $i,j$ in $V,$ $\rho$ in $\mathcal S(\h_i)$, one has $\ee_{i,\rho}(n_j)<\infty$, but there exist $i$ in~$V$, $\rho, \rho'$ in $\mathcal S(\h_i)$ ($\rho$ necessarily non-faithful) with $\pp_{i,\rho}(t_i<\infty)~=~1$ and~$\pp_{i,\rho'}(t_i<\infty)~<~1$.
\end{enumerate}
\end{theo}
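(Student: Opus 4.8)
The plan is to establish a trichotomy by first proving a ``local'' dichotomy for the quantities $\ee_{i,\rho}(n_j)$, using irreducibility to transport finiteness/infiniteness between sites, and then analyzing the finiteness of the return probability $\pp_{i,\rho}(t_i<\infty)$ separately. For the first part, I would fix a site $i$ and introduce the (sub-Markov) restricted evolution that counts visits to $i$: writing $\M$ in terms of the operators $L_{i,j}$, the expected number of visits to $j$ starting from $(i,\rho)$ can be expressed as a series $\sum_{n\geq1}\pp_{i,\rho}(x_n=j)=\sum_{n\geq1}\tr\big(\rho\,\sum_{\pi\in\cP_n(i,j)}L_\pi^*L_\pi\big)$. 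The key observation is that since $\M$ is semifinite, $\mathfrak h_i$ is finite-dimensional, so ``$\ee_{i,\rho}(n_j)<\infty$ for some faithful $\rho$'' is equivalent to ``$\ee_{i,\rho}(n_j)<\infty$ for all $\rho$'', because all states on a finite-dimensional space are mutually dominated up to a constant; and this in turn is equivalent to the finiteness of $\sum_n \sum_{\pi\in\cP_n(i,j)} L_\pi^* L_\pi$ as an operator on $\mathfrak h_i$. I would then show, using irreducibility (Definition \ref{defi_irreducibility}), that this finiteness at one pair $(i,j)$ forces it at every pair: given a path from $j$ to $i$ and from $i$ to $j'$, a Cauchy--Schwarz / submultiplicativity argument on the operators $L_\pi$ allows one to bound $\sum_{\pi\in\cP(i,j')}L_\pi^*L_\pi$ in terms of $\sum_{\pi\in\cP(i,j)}L_\pi^*L_\pi$ up to finite factors, provided one first chooses finitely many ``connecting'' paths whose associated operators have dense-enough range — here semifiniteness is what makes ``dense'' equal to ``spanning''. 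This yields: either $\ee_{i,\rho}(n_j)=\infty$ for all $i,j,\rho$, or $\ee_{i,\rho}(n_j)<\infty$ for all $i,j,\rho$.

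In the first (``infinite'') case, I would deduce $\pp_{i,\rho}(t_j<\infty)=1$ for all $i,j,\rho$, which gives situation 1. The natural route is via Lemma \ref{lemme_martingale}: consider the quantum harmonic operator built from the ``return probability to $j$'' — more precisely, define $A_k=\sum_{\pi\in\cP(k,j)}L_\pi^*(\cdot)L_\pi$-type object, or rather let $h_k=\pp_{k,\cdot}(t_j<\infty)$ encoded as an operator $A=\sum_k A_k\otimes\ketbra kk$; one checks $\M^*(A)\leq A$ with equality off-diagonal, so $\tr(\rho_n A_{x_n})$ is a bounded supermartingale, and a standard first-step / strong-Markov decomposition (valid for the genuinely Markov process $(x_n,\rho_n)_n$, Equation \eqref{eq_probatransition}) relates $\pp_{i,\rho}(t_j<\infty)$ to the expected number of visits; infinitely many expected visits forces the return probability from $j$ (hence from every $i$) to be $1$. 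Conversely, when $\ee_{i,\rho}(n_j)<\infty$ everywhere, one is in situation 2 or 3, and the only remaining question is whether $\pp_{i,\rho}(t_i<\infty)$ can equal $1$ for some $(i,\rho)$. If $\pp_{i,\rho}(t_i<\infty)<1$ for every $i$ and every $\rho$ we are in situation 2; otherwise there is some $(i,\rho)$ with $\pp_{i,\rho}(t_i<\infty)=1$, and then I must produce a $\rho'$ at the same site with $\pp_{i,\rho'}(t_i<\infty)<1$, and argue $\rho$ cannot be faithful. For the latter: if $\rho$ were faithful, an irreducibility argument (every internal state is reachable, and the ``absorbing'' event $\{t_i=\infty\}$ would have to have probability zero from every reachable state) combined with finiteness of $\ee_{i,\rho}(n_i)$ would give a contradiction — finiteness of expected visits is incompatible with return probability exactly $1$ once the internal state explores a faithful orbit, by the same martingale/first-return computation that underlies \eqref{eq_Classical1}. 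The existence of $\rho'$ with strictly smaller return probability then follows because $\rho\mapsto \pp_{i,\rho}(t_i<\infty)$ is affine on $\mathcal S(\mathfrak h_i)$ and cannot be identically $1$ (else $\ee_{i,\rho}(n_i)=\infty$ for, say, a faithful state by averaging, contradicting the case hypothesis); an affine function on the state space equal to $1$ at a non-extreme point but not everywhere must dip below $1$ somewhere.

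The main obstacle I anticipate is the transfer step in the first part: showing that finiteness of $\sum_{\pi\in\cP(i,j)}L_\pi^*L_\pi$ propagates to all pairs. Irreducibility only gives totality of $\{L_\pi\varphi\}$, not a quantitative bound, so I will need to extract, from totality plus finite-dimensionality, a finite family of paths $\pi_1,\dots,\pi_m\in\cP(j,i)$ such that $\sum_k L_{\pi_k}^* L_{\pi_k}$ is invertible on $\mathfrak h_j$ (equivalently, bounded below) — this is where semifiniteness is essential and where reducible OQWs genuinely behave differently (hence the separate Section \ref{sec_results_for_reducible_open_quantum_walks}). Once such a family is fixed, concatenation of paths and the elementary inequality $L_{\pi'\pi}^*L_{\pi'\pi}\leq \|L_{\pi'}\|^2\, L_\pi^*L_\pi$ (summed appropriately) closes the loop. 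A secondary delicate point is keeping track of whether $\rho$ must be non-faithful in situation 3: this requires the precise link between ``$\pp_{i,\rho}(t_i<\infty)=1$ and $\ee_{i,\rho}(n_i)<\infty$'' — it says the chain $(x_n,\rho_n)$ returns to site $i$ almost surely but, on returning, lands in a restricted set of internal states whose $\M$-orbit never again charges the full trace at $i$; a faithful $\rho$ would by irreducibility spread over all of $\mathcal S(\mathfrak h_i)$, forcing infinitely many returns in expectation, the contradiction we need. I would set these pieces up as the intermediate lemmas promised in the section's introduction, and assemble them here.
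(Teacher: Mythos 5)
Your overall architecture (first a universality dichotomy for $\ee_{i,\rho}(n_j)$, then a separate analysis of the return probabilities) matches the paper's, and your route to the dichotomy — extracting, by compactness of the unit sphere of the finite-dimensional $\h_i$, finitely many connecting paths $\sigma_1,\dots,\sigma_m$ with $\sum_k L_{\sigma_k}^*L_{\sigma_k}$ bounded below, then concatenating — is a workable quantitative alternative to the paper's argument, which instead observes that $D^n(j)=\{\varphi:\sum_{\pi}\|L_\pi\varphi_i\|^2<\infty\}$ is stable under all $L_{k,l}\otimes\ketbra kl$, hence an enclosure, hence $\{0\}$ or dense by irreducibility (Proposition \ref{prop_ExpectedNumberVisits}). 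Your version buys explicit constants; the paper's is shorter and also covers the non-semifinite case.

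The genuine gap is in your treatment of the ``infinite'' branch, i.e.\ deducing $\pp_{i,\rho}(t_j<\infty)=1$ for \emph{all} $\rho$ from $\ee_{i,\rho}(n_j)=\infty$. You justify this by ``the same martingale/first-return computation that underlies \eqref{eq_Classical1}'', but that computation is precisely what breaks for open quantum walks: the internal state changes at each return, so $\pp_{j,\rho}(n_j\geq k)$ is not $\pp_{j,\rho}(t_j<\infty)^k$ but only bounded by $\big(\sup_{\rho'}\pp_{j,\rho'}(t_j<\infty)\big)^{k-1}$. The strong Markov argument (essentially Proposition \ref{prop_Markovsimple}) therefore only yields the \emph{existence} of some $\rho^*$ with $\pp_{j,\rho^*}(t_j<\infty)=1$, not the statement for every $\rho$ — and Example \ref{ex_exampleB} shows this existence statement cannot be upgraded pointwise without further input. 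The missing ingredient, supplied by the paper in Proposition \ref{prop_EnPt}, is the Perron--Frobenius theorem for the irreducible completely positive first-return map $\mfP_{j,j}$ on $\mathcal I_1(\h_j)$: irreducibility of $\M$ forces $\mfP_{j,j}$ to be an irreducible CP map, its Perron eigenstate $\rho_{\mathrm f}$ is \emph{faithful}, the eigenvalue must be $1$ (otherwise $\ee_{j,\rho_{\mathrm f}}(n_j)<\infty$, contradicting the universality already established), and only then does Corollary \ref{coro_ReptjT}(1) — which needs faithfulness to conclude $\mfP_{j,j}^*(\id_{\h_j})=\id_{\h_j}$ — give $\pp_{j,\rho}(t_j<\infty)=1$ for every $\rho$. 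Your bounded-supermartingale observation on $\tr(\rho_nA_{x_n})$ does not substitute for this: nothing in your sketch forces the supermartingale's initial value to be $1$. The remainder of your argument (non-faithfulness of $\rho$ in situation 3, and the existence of $\rho'$ via the affine dependence of $\rho\mapsto\pp_{i,\rho}(t_i<\infty)$) is correct and coincides with the paper's use of Corollary \ref{coro_ReptjT}(2).
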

\begin{remark}\label{remark_situationsOQWirr}\,\hfill
\begin{enumerate}
\item Situation 1 is illustrated by any finite irreducible OQW, or by a $(\zz,\cc^2)$-simple OQW (see Example \ref{ex_ZC2simpleOQW}) with $L_+^*L_+=L_-^* L_-=\frac12\,\id_{\cc^2}$.
\item Situation~2 is illustrated by e.g. the minimal dilation of a transient classical Markov chain, or a $(\zz,\cc^2)$-simple OQW with, this time, $L_+^*L_+>\frac12\id_{\cc^2}$.
\item Situation 3, which of course is the most surprising in comparison with the case of classical Markov chains, is illustrated by Example \ref{ex_exampleC}.% or by a $(\zz,\cc^2)$-simple OQW with $L_+,L_-$ normal and e.g. $L_+^*L_+=\begin{pmatrix} 1/2 & 0 \\ 0 & p_2\end{pmatrix}$ with $p_2>1/2$. See Section \ref{sec_examples} for a detailed description of these examples.
\end{enumerate}
\end{remark}

\subsection{Passage time vs. number of visits: general results} % (fold)
\label{subsec_passage_time_vs_number_of_visits_}

In this section we investigate the quantities $\pp_{i,\rho}(t_j<\infty)$ and $\ee_{i,\rho}(n_j)$, and the relation betwen them. The proofs relating the two quantities $\pp_{i,\rho}(t_j<\infty) $ and $\ee_{i,\rho}(n_j)$ in the classical case are based on the fact that any path $\pi\in \cP(j,j)$ can be written uniquely as a concatenation of paths $\pi\in \cP^{V\setminus\{j\}}(j,j)$. We will use this simple idea in the present case. This is summarized in Proposition \ref{prop_ReptjT}:

\begin{prop} \label{prop_ReptjT}
There exists a family $(\mfP_{j,i})_{i,j\in V}$, where $\mfP_{j,i}$ is a  completely positive linear contraction from $\mathcal I_1(\h_i)$ to $\mathcal I_1(\h_j)$, such that for any $i,j$ in $V$ and any~$\rho$ in $\mathcal S(\h_i)$,
\begin{equation*} \label{eq_ReptjT}
\pp_{i,\rho}(t_j<\infty)=\tr\big(\mfP_{j,i}(\rho)\big),\qquad
\ee_{i,\rho}(n_j)= \sum_{k\geq 0} \tr \big(\mfP_{j,j}^k\circ \mfP_{j,i} (
\rho)\big),
\end{equation*}
(where the second expression is possibly $\infty$).
\end{prop}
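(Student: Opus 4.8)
The plan is to define the operators $\mfP_{j,i}$ directly in terms of the transition operators $L_\pi$ along paths that avoid $j$ (except at the endpoints), mirroring the classical decomposition of a return path into excursions. Concretely, for $\pi=(i_0,\dots,i_\ell)\in\cP^{V\setminus\{j\}}(i,j)$ set $\Lcal_\pi(\rho)=L_\pi\,\rho\,L_\pi^*$, and define
\begin{equation*}
\mfP_{j,i}(\rho)=\sum_{\ell\geq 1}\ \sum_{\pi\in\cP^{V\setminus\{j\}}_\ell(i,j)}L_\pi\,\rho\,L_\pi^*.
\end{equation*}
First I would check that this series converges in $\mathcal I_1(\h_j)$ and that $\mfP_{j,i}$ is a completely positive contraction: positivity of each term is clear, complete positivity follows since each $\rho\mapsto L_\pi\rho L_\pi^*$ is completely positive and sums of completely positive maps are completely positive, and the contraction property follows from the sub-stochasticity inherited from \eqref{eq_stochastic} — summing $\tr(L_\pi\rho L_\pi^*)$ over disjoint path-events gives a probability, hence is $\leq 1$, which also yields convergence of the series in trace norm for $\rho\geq 0$ and then on all of $\mathcal I_1(\h_i)$ by linearity and the decomposition into positive parts.

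Next I would identify $\tr\big(\mfP_{j,i}(\rho)\big)$ with $\pp_{i,\rho}(t_j<\infty)$. The key point is that the events $\{t_j=\ell,\ x_1,\dots,x_{\ell-1}\neq j,\ x_\ell=j\}$ decompose according to the path $(i,x_1,\dots,x_{\ell-1},j)$, and by \eqref{eq_defPtau} the probability of such a cylinder is exactly $\tr(L_\pi\,\rho\,L_\pi^*)$ with $\pi\in\cP^{V\setminus\{j\}}_\ell(i,j)$ (note $\cP^{V\setminus\{j\}}$ allows the startpoint to equal $j$, which is irrelevant here since we start the walk at position $i$, but matters for the $i=j$ case below where it correctly forbids only the intermediate visits). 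Summing over $\ell\geq 1$ and over all such $\pi$ gives $\pp_{i,\rho}(t_j<\infty)=\sum_\ell\pp_{i,\rho}(t_j=\ell)=\tr\big(\mfP_{j,i}(\rho)\big)$, using that $\rho_0$ enters \eqref{eq_defPtau} only through $\tau(i_0)=\rho$.

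For the number of visits, I would decompose the $k$-th visit to $j$: on the event that $j$ is visited at least $k$ times, write the position of the $k$-th visit and use the strong Markov property of $(x_n,\rho_n)_n$ (Equation \eqref{eq_probatransition}) at the successive visit times. By the cocycle structure of the $L_\pi$ — concatenation of paths corresponds to composition of the maps $\Lcal_\pi$ — the (unnormalized) state at the $k$-th visit to $j$, integrated over all trajectories, is exactly $\mfP_{j,j}^{k-1}\circ\mfP_{j,i}(\rho)$; its trace is $\pp_{i,\rho}(n_j\geq k)$. Then $\ee_{i,\rho}(n_j)=\sum_{k\geq 1}\pp_{i,\rho}(n_j\geq k)=\sum_{k\geq 1}\tr\big(\mfP_{j,j}^{k-1}\circ\mfP_{j,i}(\rho)\big)=\sum_{k\geq 0}\tr\big(\mfP_{j,j}^{k}\circ\mfP_{j,i}(\rho)\big)$, with the convention $\mfP_{j,j}^0=\id$, and the sum may be $+\infty$. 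The main obstacle is the bookkeeping in this last step: one must carefully justify that the unnormalized state produced by conditioning on $\{n_j\geq k\}$ and averaging is given by iterating $\mfP_{j,j}$, i.e. that the normalizations in \eqref{eq_tranclassOQW} cancel correctly and the strong Markov property applies at the (a.s. finite, on the relevant event) $k$-th return time — this is where the paper's structure \eqref{eq_defPtau}–\eqref{eq_probatransition} does the real work, and I would phrase it as a clean lemma on concatenation of excursions before assembling the final identities.
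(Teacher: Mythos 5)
Your proposal is correct and rests on the same core idea as the paper's proof: the decomposition of an arbitrary path from $i$ to $j$ into an initial excursion in $\cP^{V\setminus\{j\}}(i,j)$ followed by excursions in $\cP^{V\setminus\{j\}}(j,j)$, with the definition $\mfP_{j,i}(\rho)=\sum_{\pi\in\cP^{V\setminus\{j\}}(i,j)}L_\pi\rho L_\pi^*$ and convergence of this series obtained from sub-stochasticity (the paper phrases this via the triangle inequality for the trace norm applied to $|\rho|$ and a Banach--Steinhaus argument, which is essentially your ``positive parts plus linearity'' step made uniform). The one place where you genuinely diverge is the second identity: you use the layer-cake formula $\ee_{i,\rho}(n_j)=\sum_{k\geq1}\pp_{i,\rho}(n_j\geq k)$ together with the observation that $\{n_j\geq k\}$ is the disjoint union of cylinder events indexed by $k$-fold concatenations, so that $\pp_{i,\rho}(n_j\geq k)=\tr\big(\mfP_{j,j}^{k-1}\circ\mfP_{j,i}(\rho)\big)$ by Tonelli (all terms nonnegative, so the rearrangement into a composition of sums is free). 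The paper instead starts from $\ee_{i,\rho}(n_j)=\sum_{\pi\in\cP(i,j)}\tr(L_\pi\rho L_\pi^*)$ and introduces Abel-type regularized operators $\mfN^{(\alpha)}_{j,i}$ and $\mfP^{(\alpha)}_{j,i}$ with weights $\alpha^{\ell(\pi)}$, proves $\mfN^{(\alpha)}_{j,i}=(\id-\mfP^{(\alpha)}_{j,j})^{-1}\circ\mfP^{(\alpha)}_{j,i}$ as an identity between bounded operators, and then lets $\alpha\to1$ by monotonicity. Both routes are valid; yours is slightly more elementary for this proposition alone, but the paper's $\alpha$-regularization is not just a technical crutch here --- the operators $\mfP^{(\alpha)}_{j,j}$ and $\mfN^{(\alpha)}_{j,i}$ are reused later (point 3 of the proof of Corollary \ref{coro_ReptjT}, and the identity $\ee_{i,\rho}(t_i)=\frac{\mathrm d}{\mathrm d\alpha}\tr\,\mfP^{(\alpha)}_{i,i}(\rho)|_{\alpha=1}$), so the extra machinery pays for itself. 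One small caution on your strong-Markov phrasing: as you note yourself, no actual appeal to the Markov property of $(x_n,\rho_n)_n$ or to normalizations is needed --- the identity is purely combinatorial at the level of the unnormalized cylinder probabilities \eqref{eq_defPtau}, and stating it that way avoids having to justify the stopping-time argument.
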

\begin{remark}\label{remark_TiiId}\,\hfill
\begin{enumerate}
	\item The map $\mfP_{j,i}$ can be expressed by:
    \[\mfP_{j,i}(\rho)= \sum_{\pi\in\cP^{V\setminus\{j\}}(i,j)}  L_\pi \rho L_\pi^*\]
    (see the proof of Proposition \ref{prop_ReptjT} to see that this expression is meaningful).
    \item An immediate consequence of Proposition \ref{prop_ReptjT} is that
    \begin{equation*}
        \|\mfP_{j,i}\|=\sup_{\rho \in \mathcal S(\h_i)}\pp_{i,\rho}(t_j<\infty),
    \end{equation*}
    where the norm is the operator norm on $\mathcal I_1(\H)$.
    \item It is immediate from the proof that
    \begin{equation} \label{eq_esprhoti}
        \ee_{i,\rho}(\rho_{t_j}\,|\, t_j<\infty)= \frac{\mfP_{j,i}(\rho)}{\tr \,\mfP_{j,i}(\rho)}.
    \end{equation} 
\end{enumerate}
\end{remark}
\smallskip

We have the following corollary:
\begin{coro} \label{coro_ReptjT}
Let $i,j$ be in $V$.
	\begin{enumerate}
		\item One has $\pp_{i,\rho}(t_j<\infty)=1$ if and only if $\mfP_{j,i}^*(\id_{\h_j})$ has the form $\begin{pmatrix} \id & 0 \\ 0 & *\end{pmatrix}$ in the decomposition $\h_i=\mathrm{Ran}\,\rho \oplus (\mathrm{Ran}\,\rho)^\perp$. In particular, if there exists a faithful $\rho$ in $\Scal(\h_i)$ such that $\pp_{i,\rho}(t_j<\infty)=1$, then $\pp_{i,\rho'}(t_j<\infty)=1$ for any $\rho'\in\mathcal S(\h_i)$.
        \item If there exists a faithful $\rho$ in $\Scal(\h_i)$ such that $\pp_{i,\rho}(t_i<\infty)=1$, then one has $\ee_{i,\rho'}(n_i)=\infty$ for any $\rho'\in \mathcal S(\h_i)$.
		\item If there exists a faithful $\rho$ in $\Scal(\h_i)$ such that $\ee_{i,\rho}(n_j)<\infty$ and $\h_i$ is finite-dimensional, then one has $\ee_{i,\rho'}(n_j)<\infty$ for any $\rho'$ in $\mathcal S(\h_i)$.
        \item If~ $\ee_{i,\rho}(n_j)<\infty$ for every $\rho$ in $\mathcal S(\h_i)$, then there exists a completely positive linear bounded map $\mfN_{j,i}$ from $\mathcal I_1(\h_i)$ to $\mathcal I_1(\h_j)$ such that
        \begin{equation}\label{eq_Nji}
            \ee_{i,\rho}(n_j)=\tr \,\mfN_{j,i}(\rho),
        \end{equation}
        and one has the expression
    \begin{equation} \label{eq_defmfN2}
        \mfN_{j,i}(\rho)=\sum_{\pi\in \cP(i,j)}L_\pi \rho L_\pi^*.
    \end{equation}
	\end{enumerate}
\end{coro}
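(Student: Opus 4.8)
The plan is to deduce all four items from Proposition~\ref{prop_ReptjT} and the explicit form of $\mfP_{j,i}$ recalled in Remark~\ref{remark_TiiId}. Write $P_{j,i}:=\mfP_{j,i}^*(\id_{\h_j})$; since $\mfP_{j,i}$ is completely positive and a trace contraction, $P_{j,i}$ is a positive operator on $\h_i$ with $0\le P_{j,i}\le\id_{\h_i}$ (the upper bound from $\tr(\rho\,P_{j,i})=\tr\,\mfP_{j,i}(\rho)\le\tr\rho$ for $\rho\ge0$), and Proposition~\ref{prop_ReptjT} reads $\pp_{i,\rho}(t_j<\infty)=\tr(\rho\,P_{j,i})$. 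For item~1: as $\tr\rho=1$, the equality $\tr(\rho\,P_{j,i})=1$ is equivalent to $\tr\big(\rho(\id_{\h_i}-P_{j,i})\big)=0$, hence, both factors being positive, to $(\id_{\h_i}-P_{j,i})^{1/2}\rho^{1/2}=0$, i.e.\ to $P_{j,i}$ acting as the identity on $\overline{\mathrm{Ran}\,\rho}$; since $P_{j,i}$ is self-adjoint with $0\le P_{j,i}\le\id$, fixing that subspace pointwise forces $(\mathrm{Ran}\,\rho)^\perp$ to be $P_{j,i}$-invariant, which is exactly the block form $\begin{pmatrix}\id&0\\0&*\end{pmatrix}$ of the statement (the converse being obvious). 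If $\rho$ is faithful then $\overline{\mathrm{Ran}\,\rho}=\h_i$, so $P_{j,i}=\id_{\h_i}$ and $\pp_{i,\rho'}(t_j<\infty)=\tr(\rho')=1$ for every state $\rho'$.

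Item~2 follows by applying item~1 with $j=i$: the hypothesis forces $P_{i,i}=\id_{\h_i}$, i.e.\ $\mfP_{i,i}$ is trace-preserving, so for every state $\rho'$ and every $k\ge0$ the positive operator $\mfP_{i,i}^{k+1}(\rho')$ has trace $1$, whence $\ee_{i,\rho'}(n_i)=\sum_{k\ge0}\tr\big(\mfP_{i,i}^k\circ\mfP_{i,i}(\rho')\big)=\infty$. For item~3, faithfulness of $\rho$ together with $\dim\h_i<\infty$ gives $\rho\ge\lambda\,\id_{\h_i}$ with $\lambda>0$ the least eigenvalue of $\rho$, so any state $\rho'$ obeys $\rho'\le\id_{\h_i}\le\lambda^{-1}\rho$; applying the positive maps $\mfP_{j,j}^k\circ\mfP_{j,i}$, taking traces, and summing over $k$ yields $\ee_{i,\rho'}(n_j)\le\lambda^{-1}\ee_{i,\rho}(n_j)<\infty$.

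For item~4, set $\mfN_{j,i}:=\sum_{k\ge0}\mfP_{j,j}^k\circ\mfP_{j,i}$, a limit of the completely positive partial sums $\Sigma_N:=\sum_{k=0}^N\mfP_{j,j}^k\circ\mfP_{j,i}$. For a state $\rho$, $(\Sigma_N(\rho))_N$ is an increasing (in the positive-operator order) sequence of positive trace-class operators with traces bounded by $\ee_{i,\rho}(n_j)<\infty$; such a sequence converges in trace norm (its strong limit exists, the traces converge by monotone convergence, and the positive differences then go to $0$ in trace norm), to a limit $\mfN_{j,i}(\rho)$ with $\tr\,\mfN_{j,i}(\rho)=\ee_{i,\rho}(n_j)$. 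To get a bounded map on all of $\mathcal I_1(\h_i)$, write $\tr\,\Sigma_N(\rho)=\tr(\rho\,B_N)$ with $B_N:=\sum_{k=0}^N\mfP_{j,i}^*\circ(\mfP_{j,j}^*)^k(\id_{\h_j})$ an increasing sequence of positive operators; choosing $\rho=\pr\varphi$ shows $\sup_N\langle\varphi,B_N\varphi\rangle<\infty$ for every $\varphi$, so the uniform boundedness principle gives $\sup_N\|B_N\|<\infty$, hence $\sup_N\|\Sigma_N\|<\infty$ by the Russo--Dye identity. Thus $\Sigma_N$ converges pointwise in trace norm on the linear span of states, hence on all of $\mathcal I_1(\h_i)$, to a bounded $\mfN_{j,i}$, complete positivity being preserved under this limit since it is tested on fixed finite-dimensional amplifications. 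Finally, expanding each $\mfP_{j,j}^k\circ\mfP_{j,i}$ by Remark~\ref{remark_TiiId}(1) and reindexing via the unique decomposition of a path $\pi\in\cP(i,j)$ into the excursions between its successive visits to $j$ (the concatenation principle already used for Proposition~\ref{prop_ReptjT}) gives the claimed formula $\mfN_{j,i}(\rho)=\sum_{\pi\in\cP(i,j)}L_\pi\rho L_\pi^*$, the rearrangement being legitimate because all summands are positive.

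The main obstacle is the boundedness in item~4: finiteness of $\ee_{i,\rho}(n_j)$ for each fixed $\rho$ does not by itself bound $\sup_\rho\ee_{i,\rho}(n_j)$ when $\h_i$ is infinite-dimensional, and without such a bound $\mfN_{j,i}$ need not be a bounded operator on $\mathcal I_1(\h_i)$; the uniform boundedness argument on the increasing sequence $(B_N)$ is what supplies it. The remaining technical points — that an increasing trace-bounded sequence of positive trace-class operators converges in trace norm, and that complete positivity and the series representation survive the limit — are routine.
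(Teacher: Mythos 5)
Your proposal is correct and follows essentially the same route as the paper: item 1 by the block-positivity argument forcing the off-diagonal entries of $\mfP_{j,i}^*(\id_{\h_j})$ to vanish, item 2 by iterating the resulting trace-preservation of $\mfP_{i,i}$, and item 4 by a Banach--Steinhaus/uniform-boundedness argument on the increasing partial sums together with the path-concatenation identity. The only (minor) divergence is in item 3, where you dominate $\rho'\le\id_{\h_i}\le\lambda^{-1}\rho$ directly instead of passing through the paper's $\alpha$-regularized operators $\mfN_{j,i}^{(\alpha)}$ and their monotone limit; both arguments rest on the same inequality $\rho\ge\lambda\,\id_{\h_i}$ for a faithful state in finite dimension, and yours is, if anything, slightly more economical.
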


\begin{remark}\label{remark_apresCoro35}\,\hfill
\begin{enumerate}
	\item The second part of the first statement, and the second statement, do not hold without the faithfulness assumption, as shown by Examples \ref{ex_exampleB} and \ref{ex_exampleC}.
	\item Since $\mfP_{i,j}$ is a completely positive contraction, one has $\mfP^*_{j,i}(\id_{\h_j})\leq \id_{\h_i}$. In addition, by the Russo-Dye Theorem \cite{RussoDye}, $\|\mfP^*_{j,j}\|=\|\mfP^*_{j,j}(\id_{\h_j})\|$ so that if $\|\mfP^*_{j,j}(\id_{\h_j})\|<1$, then $\ee_{i,\rho}(n_j)<\infty$ for every $\rho$ in $\mathcal S(\h_i)$ and:
	\[\mfN_{j,i}=(\id-\mfP_{j,j})^{-1}\circ \mfP_{j,i}.\]
    \item Again, under the assumptions of point 4, an immediate consequence is
    \begin{equation*}
        \|\mfN_{j,i}\|=\sup_{\rho \in \mathcal S(\h_i)}\ee_{i,\rho}(n_j),
    \end{equation*}
    and $\mfN_{j,i}^{(\alpha)}$ introduced in Section \ref{sec_appproofspassagetimes} satisfies $\mfN_{j,i}=\lim_{\alpha\to 1}\mfN_{j,i}^{(\alpha)}$.
    \item  If $\ee_{i,\ketbra \varphi\varphi}(n_j)<\infty$ for a total set of unit vectors $\varphi$ of $\h_i$, then $\mfN_{j,i}$ can still be constructed as a densely defined (a priori unbounded) selfadjoint operator, thanks to the representation theory for closed quadratic forms (see e.g. Theorem VIII.3.13a in \cite{Kato}).
\end{enumerate}
\end{remark}

We have an easy partial converse to the third statement of Corollary \ref{coro_ReptjT} (a stronger result will be given under the additional assumption of irreducibility).

\begin{prop} \label{prop_Markovsimple}
	Let $i$ be in $V$ and assume that $\h_i$ is finite-dimensional. If $\pp_{i,\rho}(t_i<\infty)<1$ for every $\rho$ in $\mathcal S(\h_i)$, then $\ee_{i,\rho'}(n_i) <\infty$ for every $\rho'$ in $\mathcal S(\h_i)$.
\end{prop}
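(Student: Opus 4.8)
The plan is to deduce from the hypothesis that the completely positive contraction $\mfP_{i,i}$ has operator norm strictly less than $1$ on $\mathcal I_1(\h_i)$, and then to read off finiteness of $\ee_{i,\rho'}(n_i)$ directly from the series representation in Proposition \ref{prop_ReptjT}. The only place where the finite-dimensionality of $\h_i$ really matters is in upgrading the pointwise strict inequality $\pp_{i,\rho}(t_i<\infty)<1$ to a uniform bound; that is the sole step that is not completely routine.

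First I would recall, via point 2 of Remark \ref{remark_TiiId} (an immediate consequence of Proposition \ref{prop_ReptjT}), that
\[\|\mfP_{i,i}\|=\sup_{\rho\in\mathcal S(\h_i)}\pp_{i,\rho}(t_i<\infty),\]
the norm being the operator norm on $\mathcal I_1(\h_i)$. Since $\h_i$ is finite-dimensional, the state space $\mathcal S(\h_i)$ is compact, and the map $\rho\mapsto \pp_{i,\rho}(t_i<\infty)=\tr\big(\mfP_{i,i}(\rho)\big)$ is continuous, being the restriction to $\mathcal S(\h_i)$ of a linear map on the finite-dimensional space $\mathcal I_1(\h_i)$. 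By assumption this continuous function takes values strictly below $1$ on the compact set $\mathcal S(\h_i)$, so its supremum is attained and $c:=\|\mfP_{i,i}\|<1$.

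To conclude, I would apply Proposition \ref{prop_ReptjT} with $j=i$: for any $\rho'$ in $\mathcal S(\h_i)$,
\[\ee_{i,\rho'}(n_i)=\sum_{k\geq 0}\tr\big(\mfP_{i,i}^{k}\circ\mfP_{i,i}(\rho')\big)=\sum_{k\geq 1}\tr\big(\mfP_{i,i}^{k}(\rho')\big).\]
Each $\mfP_{i,i}^{k}(\rho')$ is a positive trace-class operator (an iterate of the positive map $\mfP_{i,i}$ applied to a state), hence $\tr\big(\mfP_{i,i}^{k}(\rho')\big)=\|\mfP_{i,i}^{k}(\rho')\|_1\leq\|\mfP_{i,i}^{k}\|\,\|\rho'\|_1\leq\|\mfP_{i,i}\|^{k}=c^{k}$ by submultiplicativity of the operator norm and $\|\rho'\|_1=1$. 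Summing the resulting geometric series gives $\ee_{i,\rho'}(n_i)\leq c/(1-c)<\infty$, which is the claim. As noted, the only substantive point is the compactness argument in the previous paragraph; without the finite-dimensionality of $\h_i$ one cannot in general pass from the pointwise strict inequality to the uniform bound $\|\mfP_{i,i}\|<1$ on which everything rests.
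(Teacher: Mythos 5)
Your proof is correct. The first half — using compactness of $\mathcal S(\h_i)$ in finite dimension together with continuity of $\rho\mapsto\tr\,\mfP_{i,i}(\rho)$ to upgrade the pointwise inequality to $c=\|\mfP_{i,i}\|<1$ — is exactly the paper's argument. The second half differs in presentation: the paper invokes the strong Markov property for the chain $(x_n,\rho_n)_n$ to get $\pp_{i,\rho}(n_i=k)\leq p^k$ and hence $\ee_{i,\rho}(n_i)\leq p(1-p)^{-2}$, whereas you stay at the operator level, bounding $\tr\big(\mfP_{i,i}^{k}(\rho')\big)\leq c^{k}$ by submultiplicativity and summing the geometric series from Proposition \ref{prop_ReptjT}. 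These are really the same renewal argument in two languages, since the composition structure $\mfP_{i,i}^{k}$ in Proposition \ref{prop_ReptjT} is itself the strong Markov property already packaged into operator form; your version has the mild advantage of not requiring a separate probabilistic lemma, as everything is read off from results already established. The resulting constants differ ($c/(1-c)$ versus $p(1-p)^{-2}$) but both are finite, which is all that is claimed.
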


\begin{remark}
Proposition \ref{prop_Markovsimple} implies in particular that, if $\h_i$ is finite-dimensional and there exists $\rho$ in $\mathcal S(\h_i)$ such that $\ee_{i,\rho}(n_i)=\infty$, then there exists $\rho'$ in $\mathcal S(\h_i)$ such that $\pp_{i,\rho'}(t_i<\infty)=1$. Note that this does not necessarily hold with $\rho'=\rho$, as Examples \ref{ex_exampleB} and \ref{ex_ZC2simpleOQW} show.
\end{remark}

\subsection{The irreducible case} % (fold)
\label{subsec_passage_time_vs_number_of_visits_the_irreducible_case}

% subsection passage_time_vs_number_of_visits_the_irreducible_case (end)

We now turn to the ``universality'' properties analogous to \eqref{eq_Classical2} and \eqref{eq_Classical3} that are expected in the irreducible case. We will prove the following:
\begin{prop} \label{prop_ExpectedNumberVisits}
	Let $\M$ be an irreducible open quantum walk and let $j$ be in~$V$. We are in one (and only one) of the following situations: 
	\begin{enumerate}
		\item for every $i$ in $V$ there exists a domain ${\mathfrak d}^n_{j,i}$, dense in $\h_i$, such that the quantity $\ee_{i,\rho}(n_j)$ is finite for any $\rho$ that has finite range contained in~${\mathfrak d}^n_{j,i}$, 
		\item for every $i$ in $V$,  for any $\rho$ in $\mathcal S(\h_i)$, the quantity $\ee_{i,\rho}(n_j)$ is infinite.
	\end{enumerate}
\end{prop}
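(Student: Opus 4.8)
The plan is to combine the two expressions for $\ee_{i,\rho}(n_j)$ furnished by Proposition \ref{prop_ReptjT} and point 4 of Corollary \ref{coro_ReptjT} with an irreducibility argument that propagates finiteness from one site to all sites and, at a fixed site, from one dense set of states to a (possibly different) dense set. First I would fix $j$ and consider the map $\mfP_{j,j}$ on $\mathcal I_1(\h_j)$; by Proposition \ref{prop_ReptjT} we have $\ee_{j,\rho}(n_j)=\sum_{k\geq 0}\tr\big(\mfP_{j,j}^k(\rho)\big)$, so the dichotomy at the site $j$ itself is governed by whether the series $\sum_k \mfP_{j,j}^{*\,k}(\id_{\h_j})$ converges in the weak (equivalently, by positivity, monotone) sense applied to a given vector. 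Set $A_N=\sum_{k=0}^N \mfP_{j,j}^{*\,k}(\id_{\h_j})$, an increasing sequence of positive operators; let $\mathfrak d$ be the set of vectors $\varphi$ for which $\sup_N \braket{\varphi}{A_N\varphi}<\infty$. This is a linear subspace (the quadratic form $q(\varphi)=\sup_N\braket\varphi{A_N\varphi}$ is a sum of closed positive forms, hence its finiteness domain is a subspace, cf.\ Remark \ref{remark_apresCoro35}.5), and $\ee_{j,\ketbra\varphi\varphi}(n_j)=q(\varphi)$ is finite exactly on $\mathfrak d$. The first key step is therefore: \emph{either} $\mathfrak d=\{0\}$, \emph{or} $\mathfrak d$ is dense in $\h_j$. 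This is where irreducibility enters: if $\mathfrak d\neq\{0\}$, pick $0\neq\varphi\in\mathfrak d$; I want to show $\overline{\mathfrak d}=\h_j$. Here one uses the invariance-type property of $\mathfrak d$ under the relevant transitions. Indeed, from $\mfP_{j,j}=\sum_{\pi\in\cP^{V\setminus\{j\}}(j,j)}L_\pi\,\cdot\,L_\pi^*$ and the expression $\mfN_{j,j}(\rho)=\sum_{\pi\in\cP(j,j)}L_\pi\rho L_\pi^*$ (on its domain), one checks that $\mathfrak d$ is stable under $\psi\mapsto L_\pi\psi$ for every loop $\pi\in\cP(j,j)$: if $q(\varphi)<\infty$ then $q(L_\pi\varphi)<\infty$, because summing $\braket{L_\pi\varphi}{A_N L_\pi\varphi}$ over $\pi$ reproduces a tail of the series defining $q(\varphi)$ (more precisely, paths factor through $j$, so $\sum_\pi\tr(\mfP_{j,j}^k(L_\pi\ketbra\varphi\varphi L_\pi^*))$ is dominated by $\ee_{j,\ketbra\varphi\varphi}(n_j)$). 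Then Definition \ref{defi_irreducibility} applied with $i=j=j$ says $\{L_\pi\varphi\,|\,\pi\in\cP(j,j)\}$ is total in $\h_j$, so $\overline{\mathfrak d}\supseteq\overline{\mathrm{span}}\{L_\pi\varphi\}=\h_j$, proving the dichotomy at $j$.

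The second key step is to transport this dichotomy from $j$ to an arbitrary starting site $i$, using the factorization $\mfP_{j,i}$ (or $\mfN_{j,i}=\sum_{\pi\in\cP(i,j)}L_\pi\,\cdot\,L_\pi^*$). In the "infinite" case at $j$ (i.e.\ $\mathfrak d=\{0\}$, so $\ee_{j,\rho}(n_j)=\infty$ for all $\rho$), I would show $\ee_{i,\rho}(n_j)=\infty$ for every $i$ and every $\rho\in\mathcal S(\h_i)$: since $\M$ is irreducible there is a path $\pi_0\in\cP(i,j)$ with $L_{\pi_0}\varphi\neq 0$ for some unit vector $\varphi\in\mathrm{Ran}\,\rho$ (indeed for a total set of such $\varphi$), and then
\[
\ee_{i,\rho}(n_j)\ \geq\ \tr\big(\mfN_{j,j}(L_{\pi_0}\ketbra\varphi\varphi L_{\pi_0}^*)\big)\cdot c
\ =\ c\cdot\|L_{\pi_0}\varphi\|^2\,\ee_{j,\sigma}(n_j)\ =\ \infty,
\]
where $\sigma=L_{\pi_0}\ketbra\varphi\varphi L_{\pi_0}^*/\|L_{\pi_0}\varphi\|^2$ and $c>0$ accounts for the weight of $\rho$ on $\varphi$; one must be slightly careful that a single loop through $j$ is counted, but this only costs a constant and the point is that $\ee_{j,\sigma}(n_j)=\infty$. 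Conversely, in the "finite" case at $j$ (so $\mathfrak d$ is dense in $\h_j$), for each $i$ define $\mathfrak d^n_{j,i}=\{\psi\in\h_i\ :\ \ee_{i,\ketbra\psi\psi}(n_j)<\infty\}$; by the same representation-of-forms argument this is a linear subspace, and I must show it is dense in $\h_i$. Using $\ee_{i,\ketbra\psi\psi}(n_j)=\sum_{\pi,\pi'\in\cP(i,j)}\cdots$ organized as "reach $j$ once, then loop", finiteness of $\ee_{i,\ketbra\psi\psi}(n_j)$ follows from $\mfP_{j,i}(\ketbra\psi\psi)$ having range in $\mathfrak d$ together with $\tr\mfP_{j,i}(\ketbra\psi\psi)<\infty$; so it suffices that $\mathrm{Ran}\,\mfP_{j,i}(\ketbra\psi\psi)\subseteq\mathfrak d$ for $\psi$ in a dense set. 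The vectors in $\mathrm{Ran}\,\mfP_{j,i}(\ketbra\psi\psi)$ are (closures of) spans of $L_\pi\psi$ for $\pi\in\cP^{V\setminus\{j\}}(i,j)$; irreducibility guarantees that for each nonzero $\psi$ these $L_\pi\psi$ span a nonzero, hence (by the $j$-dichotomy, applied after one extra return) dense-in-$\h_j$ family intersected with $\mathfrak d$ being dense — so one picks $\psi$ in the preimage of $\mathfrak d$ under these maps. Assembling: $\mathfrak d^n_{j,i}\supseteq\{\psi:\ L_\pi\psi\in\mathfrak d\ \text{for all relevant }\pi\}$, which is dense because $\mathfrak d$ is dense and the $L_\pi$ are bounded with the total-range property. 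Finally a state $\rho$ with finite range in $\mathfrak d^n_{j,i}$ has $\ee_{i,\rho}(n_j)=\sum_k\lambda_k\,\ee_{i,\ketbra{\psi_k}{\psi_k}}(n_j)<\infty$ by linearity, giving situation 1; and the two situations are mutually exclusive since situation 2 forces $\mathfrak d^n_{j,i}=\{0\}$.

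I expect the main obstacle to be the bookkeeping in the second step — specifically, rigorously justifying that $\mathfrak d^n_{j,i}$ is dense, i.e.\ that one can simultaneously arrange $L_\pi\psi\in\mathfrak d$ for \emph{all} paths $\pi\in\cP^{V\setminus\{j\}}(i,j)$ contributing to $\mfP_{j,i}(\ketbra\psi\psi)$, rather than just one. The clean way is probably to avoid this head-on: instead of requiring $\mathrm{Ran}\,\mfP_{j,i}(\ketbra\psi\psi)\subseteq\mathfrak d$, note $\ee_{i,\ketbra\psi\psi}(n_j)=\sum_{k\geq 0}\tr\big(\mfP_{j,j}^{k}\mfP_{j,i}(\ketbra\psi\psi)\big)$ and bound this by $\tr\big((\id-\mfP_{j,j})^{-1}\mfP_{j,i}(\ketbra\psi\psi)\big)$ \emph{on the subspace where the inverse makes sense}, i.e.\ run the whole argument through the $\alpha$-regularized maps $\mfN^{(\alpha)}$ of Remark \ref{remark_apresCoro35}.3 and pass to the limit $\alpha\to 1$ by monotone convergence — finiteness of the limit on a dense set is then a statement about the dense set of vectors on which $\sup_\alpha$ stays bounded, and irreducibility + the $j$-dichotomy show this set is dense exactly as above. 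The faithfulness/dimension subtleties of Corollary \ref{coro_ReptjT} are sidestepped because we only claim density of a domain, not finiteness for all states.
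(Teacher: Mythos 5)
Your dichotomy at the site $j$ itself (stability of the finiteness domain $\mathfrak d$ under $L_\pi$ for loops $\pi\in\cP(j,j)$, then totality of $\{L_\pi\varphi\}$ from Definition~\ref{defi_irreducibility}) is correct, and so is the propagation of the ``infinite'' branch to other starting sites. The genuine gap is exactly where you flag it: the density of ${\mathfrak d}^n_{j,i}$ in the ``finite'' branch, and neither of your two proposed routes closes it. In the first route, requiring $L_\pi\psi\in\mathfrak d$ termwise for all $\pi\in\cP^{V\setminus\{j\}}(i,j)$ is not sufficient, since $\ee_{i,\ketbra\psi\psi}(n_j)<\infty$ is \emph{equivalent} to the summability $\sum_{\pi}q(L_\pi\psi)<\infty$ (decompose each path in $\cP(i,j)$ at its first visit to $j$), so the reduction is circular; moreover the set $\bigcap_\pi L_\pi^{-1}(\mathfrak d)$ has no reason to be dense --- the preimage of a dense subspace under a bounded operator need not be dense (e.g.\ a rank-one $L_\pi$ whose range avoids $\mathfrak d$), let alone an infinite intersection of such preimages. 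The second route ($\alpha$-regularization) merely asserts that ``irreducibility $+$ the $j$-dichotomy show this set is dense exactly as above,'' but the same site-to-site transport problem reappears verbatim for the regularized forms.

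The fix is to reverse the direction of transport: pull vectors of $\mathfrak d$ back to $\h_i$ rather than pushing $\psi\in\h_i$ forward to $j$. For $\varphi\in\mathfrak d\setminus\{0\}$ and any $\pi\in\cP(j,i)$, the concatenation $\pi'\circ\pi$ with $\pi'\in\cP(i,j)$ is an injectively parametrized family of loops at $j$, so $\sum_{\pi'\in\cP(i,j)}\|L_{\pi'}L_\pi\varphi\|^2\leq\sum_{\pi''\in\cP(j,j)}\|L_{\pi''}\varphi\|^2=q(\varphi)<\infty$; hence $L_\pi\varphi\in{\mathfrak d}^n_{j,i}$, these vectors are total in $\h_i$ by Definition~\ref{defi_irreducibility}, and ${\mathfrak d}^n_{j,i}$ is a subspace, so it is dense. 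The paper packages this once and for all: it defines the global domain $D^n(j)=\{\varphi=\sum_{i}\varphi_i\otimes\ket i\ :\ \sum_{i}\sum_{\pi\in\cP(i,j)}\|L_\pi\varphi_i\|^2<\infty\}$ (whose fibre over $i$ is your ${\mathfrak d}^n_{j,i}$ by Lemma~\ref{lemme_identities}), checks that it is a subspace stable under \emph{every} one-step operator $L_{k,l}\otimes\ketbra kl$ (prepending a step only maps the defining sum into a subsum), concludes that $\overline{D^n(j)}$ is an enclosure, and invokes the fact that an irreducible OQW admits no enclosure other than $\{0\}$ and $\H$. This yields the dichotomy at all sites simultaneously, with no transport argument at all, and is the ingredient your write-up is missing.
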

For semifinite OQW, the picture is simpler. We have the following corollary:
\begin{coro} \label{coro_ExpectedNumberVisits}
	Let $\M$ be a semifinite irreducible open quantum walk. We are in one (and only one) of the following situations: 
	\begin{enumerate}
		\item the quantity $\ee_{i,\rho}(n_j)$ is finite for any $i,j$ in $V$ and any $\rho$ in $\mathcal S(\h_i)$,
		\item the quantity $\ee_{i,\rho}(n_j)$ is infinite for any $i,j$ in $V$ and any $\rho$ in $\mathcal S(\h_i)$.
	\end{enumerate}
\end{coro}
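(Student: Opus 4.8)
The plan is to derive the corollary from Proposition~\ref{prop_ExpectedNumberVisits} together with a comparison estimate between different target sites. First, the reduction. Fix $j\in V$. Since $\M$ is semifinite, each $\h_i$ is finite-dimensional, so any subspace dense in $\h_i$ equals $\h_i$, and every $\rho\in\mathcal S(\h_i)$ automatically has finite range contained in $\h_i$. Hence case~1 of Proposition~\ref{prop_ExpectedNumberVisits} reads ``$\ee_{i,\rho}(n_j)<\infty$ for all $i\in V$ and all $\rho\in\mathcal S(\h_i)$'' and case~2 reads ``$\ee_{i,\rho}(n_j)=\infty$ for all such $i,\rho$''. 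So for each fixed $j$ we already have the dichotomy of the corollary over the pairs $(i,\rho)$ (and the two cases are clearly mutually exclusive); it remains only to prove that which of the two occurs does not depend on $j$.

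For this I would use the elementary path expansion
\[
\ee_{i,\rho}(n_j)=\sum_{\pi\in\cP(i,j)}\tr\big(L_\pi\,\rho\,L_\pi^*\big)\qquad\text{in }[0,\infty],
\]
obtained from $n_j=\sum_{n\geq1}\ind_{x_n=j}$ and \eqref{eq_defPtau} by Tonelli's theorem, together with irreducibility. Assume $\ee_{i,\rho}(n_{j_0})<\infty$ for all $i,\rho$ and some $j_0\in V$, and fix another $j$. The key lemma is: there exist finitely many paths $\pi_1,\dots,\pi_m\in\cP(j,j_0)$ and a constant $c>0$ with $\sum_{k=1}^m L_{\pi_k}^*L_{\pi_k}\geq c\,\id_{\h_j}$. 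This is where the finite-dimensionality of $\h_j$ is essential: for each unit vector $\varphi\in\h_j$, irreducibility gives a finite set $F_\varphi\subset\cP(j,j_0)$ with $\delta_\varphi:=\frac12\sum_{\pi\in F_\varphi}\|L_\pi\varphi\|^2>0$ (the set $\{L_\pi\varphi\}_\pi$ being total in $\h_{j_0}$, it is nonzero); the map $\psi\mapsto\sum_{\pi\in F_\varphi}\|L_\pi\psi\|^2$ is continuous, so it exceeds $\delta_\varphi$ on a neighbourhood of $\varphi$ in the (compact) unit sphere of $\h_j$; extracting a finite subcover and letting $\{\pi_1,\dots,\pi_m\}$ be the union of the corresponding $F_\varphi$'s and $c$ the minimum of the corresponding $\delta_\varphi$'s does the job.

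With the lemma in hand, the comparison is a concatenation argument. Put $N=\max_k\ell(\pi_k)$. For $\sigma\in\cP(i,j)$, write $\sigma*\pi_k\in\cP(i,j_0)$ for the concatenated path, so that $L_{\sigma*\pi_k}=L_{\pi_k}L_\sigma$. Since $L_\sigma\rho L_\sigma^*\geq0$ and $\id_{\h_j}\leq c^{-1}\sum_kL_{\pi_k}^*L_{\pi_k}$,
\[
\tr\big(L_\sigma\rho L_\sigma^*\big)\ \leq\ c^{-1}\sum_{k=1}^m\tr\big(L_{\pi_k}L_\sigma\rho L_\sigma^*L_{\pi_k}^*\big)\ =\ c^{-1}\sum_{k=1}^m\tr\big(L_{\sigma*\pi_k}\,\rho\,L_{\sigma*\pi_k}^*\big).
\]
Summing over $\sigma\in\cP(i,j)$ and noting that each $\tau\in\cP(i,j_0)$ arises as $\sigma*\pi_k$ for at most $N$ pairs $(\sigma,k)$ — because $\pi_k$ must then be the suffix of $\tau$ of length $\ell(\pi_k)\in\{1,\dots,N\}$, which in turn determines $\sigma$ — one obtains
\[
\ee_{i,\rho}(n_j)=\sum_{\sigma\in\cP(i,j)}\tr\big(L_\sigma\rho L_\sigma^*\big)\ \leq\ \frac{N}{c}\sum_{\tau\in\cP(i,j_0)}\tr\big(L_\tau\rho L_\tau^*\big)\ =\ \frac{N}{c}\,\ee_{i,\rho}(n_{j_0})\ <\ \infty.
\]
Thus case~1 for one site $j_0$ forces case~1 for every $j$; combined with the first paragraph and Proposition~\ref{prop_ExpectedNumberVisits} this yields the stated alternative, uniformly in $i,j,\rho$.

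The main obstacle is precisely the passage to uniformity in $j$: since the internal state $\rho$ is transported non-commutatively along paths, one cannot get by with the qualitative fact that $\cP(j,j_0)\neq\emptyset$, and one really needs the uniform operator lower bound $\sum_kL_{\pi_k}^*L_{\pi_k}\geq c\,\id_{\h_j}$. Securing that bound is the one genuinely non-formal step, and it is also the only place where semifiniteness (finite-dimensionality of $\h_j$) enters beyond the trivial ``dense $=$ everything'' remark.
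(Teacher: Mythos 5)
Your proof is correct, and its overall architecture matches the paper's: reduce to Proposition \ref{prop_ExpectedNumberVisits} (noting that in the semifinite case ``dense subspace'' means the whole of $\h_i$ and every state qualifies), then transfer the dichotomy from one target site to another using a comparison that exploits compactness in finite dimension. Where you genuinely diverge is in how that cross-site comparison is carried out. The paper's route (Lemma \ref{lemme_universality}, resting on Lemma \ref{lemme_probaminoree}) shows that $\ee_{i,\rho}(n_j)=\infty$ forces $\ee_{i,\rho}(n_{j'})=\infty$ by bounding $\inf_{\rho'}\pp_{j,\rho'}(t_{j'}<\infty)>0$ and then invoking a ``standard markovianity argument'' that it does not spell out. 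You instead prove the contrapositive direction explicitly: the operator lower bound $\sum_k L_{\pi_k}^*L_{\pi_k}\geq c\,\id_{\h_j}$ over a finite family of paths (your compactness argument on the unit sphere is sound, and is the exact analogue of the paper's use of compactness of $\mathcal S(\h_j)$), followed by the concatenation and multiplicity count giving the quantitative estimate $\ee_{i,\rho}(n_j)\leq (N/c)\,\ee_{i,\rho}(n_{j_0})$. The multiplicity bound is correct: a path $\tau\in\cP(i,j_0)$ determines its suffix of each length, so at most one $\pi_k$ per length can occur, giving at most $N$ pairs $(\sigma,k)$. What your version buys is a fully self-contained, quantitative comparison that avoids appealing to the strong Markov property of $(x_n,\rho_n)_n$; what the paper's version buys is brevity, and a lemma (\ref{lemme_universality}) stated pointwise in $\rho$ that is reused elsewhere. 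Both hinge on the same essential input — semifiniteness via compactness — exactly as you identify in your closing remark.
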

\begin{remark}\,\hfill
\begin{enumerate}
        \item Example \ref{ex_exampleB} shows that the above statements do not hold without the irreducibility assumption.
        \item Example \ref{ex_invariantstate} shows that any irreducible open quantum walk that admits an invariant state (and in particular a finite OQW) is in case 2 of Corollary \ref{coro_ExpectedNumberVisits}.
\end{enumerate}	
\end{remark}

The next proposition is the last ingredient to prove Theorem \ref{theo_summary}.
\begin{prop} \label{prop_EnPt}
	Let $\M$ be an irreducible open quantum walk. Assume that there exists $i,j$ in $V$ with $\mathrm{dim}\, \h_i <\infty$, $\mathrm{dim}\, \h_j <\infty$ and $\ee_{i,\rho}(n_j)=\infty$ for some $\rho$ in $\mathcal S(\h_i)$. Then $\pp_{j,\rho'}(t_j<\infty)=1$ for every $\rho'$ in $\mathcal S(\h_j)$.
\end{prop}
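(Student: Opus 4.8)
The plan is to combine the quantitative representations from Corollary \ref{coro_ReptjT} with irreducibility. First I would use the hypothesis $\ee_{i,\rho}(n_j)=\infty$ together with the contrapositive of point 4 of Corollary \ref{coro_ReptjT}: since $\ee_{i,\rho}(n_j)=\infty$, it is \emph{not} the case that $\ee_{i,\rho''}(n_j)<\infty$ for every $\rho''\in\mathcal S(\h_i)$. Via Proposition \ref{prop_ReptjT}, $\ee_{i,\rho''}(n_j)=\sum_{k\geq0}\tr(\mfP_{j,j}^k\circ\mfP_{j,i}(\rho''))$, and since $\mfP_{j,i}$ maps states to (subnormalized) states on $\h_j$, the finiteness of this sum for all $\rho''$ is governed entirely by the map $\mfP_{j,j}$ on $\mathcal I_1(\h_j)$. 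Concretely, I would argue that $\ee_{i,\rho}(n_j)=\infty$ forces $\sum_{k\geq0}\tr(\mfP_{j,j}^k(\sigma))=\infty$ for some state $\sigma$ on $\h_j$ — indeed $\sigma$ can be taken to be (a normalization of) $\mfP_{j,i}(\rho)$, provided $\mfP_{j,i}(\rho)\neq0$; if instead $\mfP_{j,i}(\rho)=0$ then $\ee_{i,\rho}(n_j)=0$, contradicting the hypothesis, so this case does not arise. Thus $\sum_{k\geq0}\mfP_{j,j}^{*k}(\id_{\h_j})$, as a monotone limit of positive operators on the finite-dimensional space $\h_j$, must fail to be bounded, i.e. some eigenvalue diverges.

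Next, the key structural input: since $\h_j$ is finite-dimensional and $\mfP_{j,j}^*$ is a positive (in fact completely positive) contraction on $\B(\h_j)$, I would invoke the standard Perron–Frobenius / spectral analysis of such maps. Either $r(\mfP_{j,j}^*)<1$, in which case $\sum_k\mfP_{j,j}^{*k}(\id)$ converges in norm and $\ee_{i,\rho''}(n_j)<\infty$ for all $\rho''$ — contradicting the previous paragraph — or $r(\mfP_{j,j}^*)=1$. So we are in the case $r(\mfP_{j,j}^*)=1$. I would then show that in this case the spectral radius $1$ is actually attained by a nonnegative eigenvector at the level of the positive map: there is a nonzero positive operator $B$ on $\h_j$ with $\mfP_{j,j}^*(B)\geq B$ (a subharmonic-type inequality; one can get this from the peripheral spectral projection or from a Cesàro/limit argument on $\frac1N\sum_{k<N}\mfP_{j,j}^{*k}(\id)$ using finite dimension). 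The contraction property then upgrades $\mfP_{j,j}^*(B)\geq B$ to $\mfP_{j,j}^*(B)=B$ on the range of $B$, or at least yields a genuine fixed point; more carefully, I would use that for a positive contraction on a finite-dimensional $C^*$-algebra with $r=1$ there exists a nonzero positive fixed point $B=\mfP_{j,j}^*(B)$.

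The final step is to convert this fixed point into the conclusion $\pp_{j,\rho'}(t_j<\infty)=1$ for all $\rho'\in\mathcal S(\h_j)$, and this is where irreducibility enters decisively and is the main obstacle. The cleanest route: $\mfP_{j,j}^*(\id_{\h_j})\leq\id_{\h_j}$, and $\pp_{j,\rho'}(t_j<\infty)=\tr(\mfP_{j,j}(\rho'))=\tr(\rho'\,\mfP_{j,j}^*(\id_{\h_j}))$; so I must show $\mfP_{j,j}^*(\id_{\h_j})=\id_{\h_j}$. Suppose not; then $Q:=\id-\mfP_{j,j}^*(\id)$ is a nonzero positive operator, and iterating, $\id-\mfP_{j,j}^{*N}(\id)=\sum_{k<N}\mfP_{j,j}^{*k}(Q)\nearrow \sum_{k\geq0}\mfP_{j,j}^{*k}(Q)$; boundedness of this sum (equivalently $\ee_{j,\rho'}(n_j)<\infty$ for all $\rho'$) is exactly what we are trying to rule out the negation of. So I instead argue directly: the existence of the fixed point $B$ from the previous paragraph, combined with $B\leq \|B\|\,\id$ and $\mfP_{j,j}^{*k}(B)=B$ for all $k$, gives $B=\mfP_{j,j}^{*k}(B)\leq\|B\|\,\mfP_{j,j}^{*k}(\id)$, whence $\mfP_{j,j}^{*k}(\id)\geq \|B\|^{-1}B$ stays bounded below by a fixed nonzero operator; this shows the sequence $\mfP_{j,j}^{*k}(\id)$ does not decrease to $0$. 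Now here is where I would use irreducibility: irreducibility of $\M$ implies (via the path expansion $\mfP_{j,j}(\rho)=\sum_{\pi\in\cP^{V\setminus\{j\}}(j,j)}L_\pi\rho L_\pi^*$ of Remark \ref{remark_TiiId}) that $\mfP_{j,j}^*$ has no nontrivial invariant subspace/face, so a positive fixed point $B$ of $\mfP_{j,j}^*$ must be faithful, i.e. definite positive, hence $B\geq c\,\id$ for some $c>0$; then $\mfP_{j,j}^{*k}(\id)\geq \|B\|^{-1}B\geq c\|B\|^{-1}\id$ uniformly in $k$, and the monotone decreasing limit $\lim_k\mfP_{j,j}^{*k}(\id)$ is a fixed point $\geq c\|B\|^{-1}\id$, which irreducibility again forces to be faithful and, being $\leq\id$ and a fixed point of a trace-non-increasing-on-states map, must equal $\id_{\h_j}$ — equivalently $\tr(\mfP_{j,j}(\rho'))=1$, i.e. $\pp_{j,\rho'}(t_j<\infty)=1$ for all $\rho'$. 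I expect the delicate point to be justifying rigorously that an invariant positive operator of $\mfP_{j,j}^*$ is faithful under irreducibility of $\M$ (the support projection of $B$ would have to be $\mfP_{j,j}^*$-invariant, and one then transfers this to an $\M^*$-invariant subspace to contradict Definition \ref{defi_irreducibility}); I would likely isolate this as a small lemma, possibly already available from \cite{CP1}. The finite-dimensionality of $\h_i$ is used to pass from "$\ee_{i,\rho}(n_j)=\infty$ for some $\rho$" to a statement purely about $\mfP_{j,j}$ on $\h_j$ without domain issues, and finite-dimensionality of $\h_j$ is what makes the spectral/monotone-limit arguments work.
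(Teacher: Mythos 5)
Your overall strategy is the same as the paper's: reduce the problem to the single completely positive map $\mfP_{j,j}$ on the finite-dimensional space $\h_j$, note that irreducibility of $\M$ makes $\mfP_{j,j}$ an irreducible CP map (because every path in $\cP(j,j)$ is a concatenation of excursions in $\cP^{V\setminus\{j\}}(j,j)$), and apply Perron--Frobenius in finite dimension. Your derivation that the spectral radius is $1$ is correct, and is even a little more economical than the paper's: you contradict the divergence of $\sum_k\tr\,\mfP_{j,j}^k(\sigma)$ for the explicit state $\sigma=\mfP_{j,i}(\rho)/\tr\,\mfP_{j,i}(\rho)$ supplied by Proposition \ref{prop_ReptjT}, whereas the paper routes this through Proposition \ref{prop_ExpectedNumberVisits}. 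Your argument that a positive fixed point $B$ of $\mfP_{j,j}^*$ must be faithful (its kernel is a common invariant subspace of the $L_\pi$, $\pi\in\cP^{V\setminus\{j\}}(j,j)$, hence trivial) is also sound.

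The genuine gap is in your final deduction. You obtain that $P_\infty:=\lim_k\mfP_{j,j}^{*k}(\id_{\h_j})$ is a faithful fixed point of $\mfP_{j,j}^*$ with $P_\infty\le\id_{\h_j}$, and you assert that it ``must equal $\id_{\h_j}$'' because it is $\le\id$ and a fixed point of a trace-non-increasing map. That implication is not valid as stated: by irreducibility the fixed-point space of $\mfP_{j,j}^*$ at eigenvalue $1$ is spanned by $B$, which has no reason to be a multiple of the identity, and e.g.\ $\mu B$ for small $\mu>0$ is a faithful fixed point $\le\id$ that is not $\id$. Knowing $P_\infty=\mu B$ for some $\mu>0$ therefore tells you nothing about $\mfP_{j,j}^*(\id)$. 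What closes the argument is the \emph{predual} Perron eigenvector, which you never construct: since $r(\mfP_{j,j})=1$ and $\mfP_{j,j}$ is irreducible on the finite-dimensional $\mathcal I_1(\h_j)$, Perron--Frobenius gives a \emph{faithful state} $\rho_{\mathrm f}$ with $\mfP_{j,j}(\rho_{\mathrm f})=\rho_{\mathrm f}$; then
\[
1=\tr\,\rho_{\mathrm f}=\tr\,\mfP_{j,j}(\rho_{\mathrm f})=\pp_{j,\rho_{\mathrm f}}(t_j<\infty),
\]
and point 1 of Corollary \ref{coro_ReptjT}, applied to the faithful $\rho_{\mathrm f}$, yields $\mfP_{j,j}^*(\id_{\h_j})=\id_{\h_j}$, i.e.\ the conclusion for every $\rho'$. (Equivalently, $\tr\big(\rho_{\mathrm f}(\id-\mfP_{j,j}^*(\id))\big)=0$ with $\rho_{\mathrm f}$ faithful forces $\mfP_{j,j}^*(\id)=\id$.) This predual step is exactly the backbone of the paper's proof; your dual-side fixed point $B$ cannot replace it.
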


The proof of Theorem \ref{theo_summary} follows immediately from Corollary \ref{coro_ReptjT}, Proposition \ref{prop_Markovsimple}, Corollary \ref{coro_ExpectedNumberVisits} and Proposition \ref{prop_EnPt}.

\subsection{Notions of recurrence for open quantum walks} % (fold)
\label{subsec_notions_of_recurrence_for_open_quantum_walks}
In view of Corollary \ref{coro_ExpectedNumberVisits}, we propose the following terminology:
\begin{defi} \label{defi_TransIrr}
	A semifinite irreducible open quantum walk $\M$ is called transient if it satisfies property 1. of Corollary \ref{coro_ExpectedNumberVisits}, and recurrent if it satisfies~property 2.
\end{defi}
In other words, our classification depends on the quantity $\ee_{i,\rho}(n_i)$ being finite or infinite. Thanks to  Corollary \ref{coro_ExpectedNumberVisits}, for a semifinite irreducible open quantum walk this quantity is universal in the sense that it is either finite for all $i$ and $\rho$, or infinite for all $i$ and $\rho$. We now compare this with existing definitions of recurrence for open quantum walks and related objects.
\medskip

First of all, if the open quantum walk $\M$ is the minimal dilation of a classical Markov chain, then $\M$ is recurrent in our sense if and only if the Markov chain is recurrent in the classical sense.
\medskip

Fagnola and Rebolledo defined in \cite{FagnolaRebolledo2003} a notion of recurrence for (continuous-time) quantum dynamical semigroups. When applied to the (discrete-time) quantum dynamical semigroup $(\M^n)_n$, this definition of recurrence is that for any operator $A$ of $\mathcal B(\mathcal H)$ that satisfies $\langle \varphi, A\varphi\rangle>0$ for any $\varphi \in\mathcal H\setminus\{0\}$, the set
\begin{equation*}% \label{eq_defDUA}
	D(\mathfrak U(A))
	=\big\{\varphi=\sum_{i\in V}\varphi_i\otimes \ket i \ \mbox{ s.t. } \sum_{k\geq 0}\braket \varphi {(\M^*)^k(A)\, \varphi} < \infty\big\}.
\end{equation*}
equals $\{0\}$. We call this notion FR-recurrence. Our definition of a recurrent OQW, as can be seen from Section \ref{sec_appproofspassagetimes}, is equivalent to the fact that for any $j\in V$, $D(\mathfrak U(A_j))=\{0\}$ for $A_j=\id_{\h_j}\otimes\ketbra jj$. It is clear that if the OQW is FR-recurrent, then it is recurrent in our sense. If the OQW is not FR-recurrent, then there exists $A$ as above such that $ \sum_{k\geq 0}\braket \varphi {(\M^*)^k(A)\, \varphi} < \infty$, and if the OQW is semifinite, then for any $j$ in $V$ there exists $\lambda_j>0$ such that $\lambda_j \id_{\h_j}\otimes \ketbra jj\leq A$ and the OQW is not recurrent. Therefore, for semifinite OQWs, our notion of recurrence and FR-recurrence are equivalent.
\medskip

A series of results investigating recurrence of open quantum walks can be found in \cite{LardSouza,LardSouza2,CarvalhoGuidiLardizabal}. In particular, in \cite{LardSouza,LardSouza2,CarvalhoGuidiLardizabal}, a site $i\in V$ is called (LS)-recurrent if (in our terms) one has $\pp_{i,\rho}(t_i<\infty)=1$ for any $\rho$ in $\mathcal S(\h_i)$. The OQW is called (LS)-site-recurrent if every site $i$ in $V$ is LS-recurrent. In other words, LS-recurrence classifies sites depending on the quantity 
\begin{equation} \label{eq_qtyLS}
	\inf_{\rho\in\mathcal S(\h_i)}\pp_{i,\rho}(t_i<\infty)
\end{equation} being equal to $1$ or not. Corollary \ref{coro_ExpectedNumberVisits} shows that, for a semifinite irreducible open quantum walk, $	\inf_{\rho\in\mathcal S(\h_i)}\pp_{i,\rho}(t_i<\infty)
=1$ for some $i$ if and only if it is true for all $i$ (a fact which is not proved in \cite{LardSouza,LardSouza2,CarvalhoGuidiLardizabal}), and also that this is equivalent with recurrence in the sense of Definition \ref{defi_TransIrr}. Therefore, an irreducible semifinite OQW is LS-site-recurrent if and only if it is recurrent in our sense. Without the irreducibility assumption, point 2 of Corollary \ref{coro_ReptjT} shows that if $i$ is LS-recurrent then $\ee_{i,\rho}(n_i)=\infty$ for any $\rho$ in $\mathcal S(\h_i)$; the converse does not hold, as shown by Example \ref{ex_nouveau}. Note, however, that the quantity $\ee_{i,\rho}(n_i)$ has the advantage of being universal in $i$ and $\rho$, in the sense that (for a semifinite irreducible OQW) it is either finite for every $i$ and $\rho$, or infinite for every $i$ and $\rho$. This is not true of  $\pp_{i,\rho}(t_i<\infty)=1$, as Examples \ref{ex_exampleC} and \ref{ex_ZC2simpleOQW} show. The reason can be traced back to the fact that the set of ``diagonal'' $\varphi=\sum_{i\in V} \varphi_i\otimes\vec i$ such that $\pp_{\ketbra\varphi\varphi}(t_j <\infty)=1$, even though stable by any $L_\pi\otimes \ketbra ji$ with $\pi\in\mathcal P(i,j)$, is not a vector space, and therefore cannot be an enclosure (see the proof of Proposition \ref{prop_ExpectedNumberVisits} in Section \ref{sec_appproofspassagetimes}).

\medskip
Recently, Dhahri and Mukhamedov discussed a notion of recurrence in \cite{DhahriMukhamedov}. That notion actually concerns quantum Markov chains (objects that originate in \cite{Accardirusse,AccardiFrigerio}), and was defined in \cite{AccardiKoroliuk}. The connection with open quantum walks is established by associating a quantum Markov chain to an open quantum walk. This can be done, however, in different ways, and the property of recurrence depends on the choice of the associated quantum Markov chain. In addition, it is not clear what this notion of recurrence has to do with the properties of the random variables $(x_n)_n$. A major setback, making the associated quantum Markov chains non-canonical, is that they are constructed over the algebra $\big(\mathcal B(\bigoplus_{i\in V} \h_i)\big)^{\otimes \nn}$; a more direct connection could probably be obtained, at least when $\h_i\equiv \h$, by a construction over $\mathcal B(\h)\otimes \big(\mathcal B(\cc^V)\big)^{\otimes \nn}$, as can be done using the theory of finitely correlated states (see \cite{FNW}) which extends that of quantum Markov chains.
\medskip

Last, remark that, inspired by \cite{GVWW}, the authors of \cite{CarvalhoGuidiLardizabal} discuss an alternate notion of recurrence to a site $i\in V$. In that new notion, physically speaking, the observer does not at every time $n$ measure the position $x_n$ of the particle, but measures only whether the particle has returned to $i$ or not. Mathematically, this amounts to considering the probability space defined by $\tilde \Omega=\{0,1\}^\nn$, and probability 
\[\tilde{\pp}_{i,\rho}(\tilde \imath_1,\ldots,\tilde \imath_n)=\tr\big( \Phi_{\tilde \imath_n,\tilde \imath_{n-1}}\circ\ldots\circ \Phi_{\tilde \imath_1,1}(\rho\otimes \ketbra ii)\big)\]
where we have, for $\tau$ a state on $\mathcal H$,
\begin{gather*}
\Phi_{0,0}(\tau)= A_{i,i}\tau A_{i,i}^*\quad \Phi_{0,1}(\tau)=\sum_{j\neq i} A_{i,j}\tau A_{i,j}^* \\ \Phi_{1,0}(\tau)=\sum_{j\neq i} A_{j,i}\tau A_{j,i}^* \quad \Phi_{1,1}(\tau)=\sum_{j,k\neq i} A_{j,k}\tau A_{j,k}^*	
\end{gather*}
The new notion of recurrence is then related to the first time $\tilde t_i\geq 1$ for which the process defined by $\tilde x_n(\tilde \omega)=\tilde \imath_n$ takes the value $1$. It is easy to verify, however, that this $\tilde t_i$ has the same law under $\tilde \pp_{i,\rho}$ as $t_i$ under $\pp_{i,\rho}$, so that this alternate notion of recurrence is identical to LS-recurrence, as was noted in \cite{CarvalhoGuidiLardizabal}.

% subsection notions_of_recurrence_for_open_quantum_walks (end)

\section{Expectation of return times}
\label{sec_expreturntimes}

We now turn to results analogous to \eqref{eq_Classical4}. Our first statement is a representation result.
\begin{prop} \label{prop_expetj}
	For any $i,j$ in $V$ and $\rho$ in $\mathcal S(\h_i)$, we have
	\begin{equation}
		\ee_{i,\rho}(t_j)=\left\{
		\begin{array}{>{\displaystyle}cl}
			\sum_{\pi\in\mathcal P^{V\setminus\{j\}}(i,j)} \ell(\pi) \,\tr\, L_\pi \rho L_\pi^* & \mbox{ if }\pp_{i,\rho}(t_j<\infty)=1,\\
			+\infty & \mbox{ if }\pp_{i,\rho}(t_j<\infty)<1.
		\end{array} \right.
	\end{equation}
	If $\ee_{i,\rho}(t_j)<\infty$ for every $\rho\in \mathcal S(\h_i)$, then there exists a bounded operator $\mfT_{j,i}$ from $\mathcal I_1(\h_i)$ to $\mathcal I_1(\h_j)$ such that 
    \begin{equation} \label{eq_espti}
            \ee_{i,\rho}(t_j)=\tr \,\mfT_{j,i}(\rho).
    \end{equation}
\end{prop}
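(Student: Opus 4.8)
The plan is to first establish the formula by a direct path decomposition of the event $\{t_j=\ell\}$, and then to construct $\mfT_{j,i}$ as a monotone limit of finite truncations, using the uniform boundedness principle to control its norm.

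\emph{The formula.} Fix $i,j\in V$ and $\rho\in\mathcal S(\h_i)$; under $\pp_{i,\rho}$ one has $x_0=i$ almost surely. For $\ell\geq1$, the event $\{t_j=\ell\}$ equals $\{x_1\neq j,\dots,x_{\ell-1}\neq j,\ x_\ell=j\}$, which is the at most countable disjoint union of the cylinders $\{x_0=i,x_1=i_1,\dots,x_{\ell-1}=i_{\ell-1},x_\ell=j\}$ over $i_1,\dots,i_{\ell-1}\in V\setminus\{j\}$, i.e. over $\pi\in\cP_\ell^{V\setminus\{j\}}(i,j)$; by \eqref{eq_defPtau} each such cylinder has probability $\tr\,L_\pi\rho L_\pi^*$, so $\pp_{i,\rho}(t_j=\ell)=\sum_{\pi\in\cP_\ell^{V\setminus\{j\}}(i,j)}\tr\,L_\pi\rho L_\pi^*$. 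Since $t_j$ takes values in $\{1,2,\dots\}\cup\{\infty\}$, one has $\ee_{i,\rho}(t_j)=+\infty$ as soon as $\pp_{i,\rho}(t_j<\infty)<1$; and when $\pp_{i,\rho}(t_j<\infty)=1$, summing $\ell\,\pp_{i,\rho}(t_j=\ell)$ over $\ell$ and re-indexing the resulting double sum of nonnegative terms (Tonelli) gives $\ee_{i,\rho}(t_j)=\sum_{\pi\in\cP^{V\setminus\{j\}}(i,j)}\ell(\pi)\,\tr\,L_\pi\rho L_\pi^*$ (both sides possibly $+\infty$). This is the first assertion.

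\emph{The operator.} Grouping the sum defining $\mfP_{j,i}$ (see the remark following Proposition~\ref{prop_ReptjT}) according to the length of the path, write $\mfP_{j,i}^{(\ell)}(\rho)=\sum_{\pi\in\cP_\ell^{V\setminus\{j\}}(i,j)}L_\pi\rho L_\pi^*$; as in the proof of Proposition~\ref{prop_ReptjT} this is a well-defined bounded completely positive map from $\mathcal I_1(\h_i)$ to $\mathcal I_1(\h_j)$ with $\tr\,\mfP_{j,i}^{(\ell)}(\rho)=\pp_{i,\rho}(t_j=\ell)$ for a state $\rho$, and $\mfP_{j,i}=\sum_{\ell\geq1}\mfP_{j,i}^{(\ell)}$. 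Set $\mfT_{j,i}^{(N)}=\sum_{\ell=1}^{N}\ell\,\mfP_{j,i}^{(\ell)}$, a finite sum of bounded maps, hence bounded. Now assume $\ee_{i,\rho}(t_j)<\infty$ for every $\rho\in\mathcal S(\h_i)$. For $\rho\geq0$, $\rho\neq0$, the operators $\mfT_{j,i}^{(N)}(\rho)$ are positive, increasing in $N$, with $\tr\,\mfT_{j,i}^{(N)}(\rho)\leq\tr(\rho)\,\ee_{i,\rho/\tr\rho}(t_j)<\infty$; an increasing, trace-bounded sequence of positive trace-class operators converges in trace norm, with convergence of traces, to a positive trace-class operator, which we call $\mfT_{j,i}(\rho)$, and then $\tr\,\mfT_{j,i}(\rho)=\tr(\rho)\,\ee_{i,\rho/\tr\rho}(t_j)$. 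Extending $\mfT_{j,i}$ by linearity (every element of $\mathcal I_1(\h_i)$ is a combination of four positive operators), the family $(\mfT_{j,i}^{(N)})_N$ of bounded operators on the Banach space $\mathcal I_1(\h_i)$ is pointwise bounded, so by the uniform boundedness principle $\sup_N\|\mfT_{j,i}^{(N)}\|<\infty$; since $\mfT_{j,i}^{(N)}\to\mfT_{j,i}$ pointwise, $\mfT_{j,i}$ is a bounded (completely positive) map from $\mathcal I_1(\h_i)$ to $\mathcal I_1(\h_j)$, and for $\rho\in\mathcal S(\h_i)$ the first part gives $\tr\,\mfT_{j,i}(\rho)=\ee_{i,\rho}(t_j)$, as wanted.

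I expect the boundedness of $\mfT_{j,i}$ to be the only genuinely delicate step: the hypothesis gives finiteness of $\ee_{i,\rho}(t_j)$ for each $\rho$ but not, a priori, a bound uniform in $\rho$, and it is the uniform boundedness principle applied to the truncations $\mfT_{j,i}^{(N)}$ that bridges this gap. The accompanying trace-norm monotone-convergence statement for increasing trace-bounded families of positive trace-class operators is elementary but should be stated with care, and the path-counting identity $\pp_{i,\rho}(t_j=\ell)=\sum_{\pi\in\cP_\ell^{V\setminus\{j\}}(i,j)}\tr\,L_\pi\rho L_\pi^*$ should be cross-checked against the analogous decomposition used for Proposition~\ref{prop_ReptjT}.
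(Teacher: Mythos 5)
Your proof is correct and follows essentially the same route as the paper, whose own proof simply states that the expansion of $\ee_{i,\rho}(t_j)$ and the construction of $\mfT_{j,i}$ follow from ``now standard Banach--Steinhaus arguments'' in the mould of Proposition \ref{prop_ReptjT}: you spell out precisely that argument, namely the path decomposition of $\{t_j=\ell\}$ for the formula, and the length-truncated operators $\mfT^{(N)}_{j,i}$ with pointwise (monotone, trace-norm) convergence plus the uniform boundedness principle for the operator. No gaps.
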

\begin{remark}\,\hfill
\begin{enumerate}
    \item In the case where $\ee_{i,\rho}(t_j)<\infty$ for every $\rho\in \mathcal S(\h_i)$ we have the expression
    \[\mfT_{j,i}(\rho)=\sum_{\pi\in\mathcal P^{V\setminus\{j\}}(i,j)} \ell(\pi) \,L_\pi \rho L_\pi^*\]
    \item We have in addition the identity (with both sides possibly $\infty$).
    \[\ee_{i,\rho}(t_i)=\frac{\mathrm d}{\mathrm d \alpha} \tr \,\mfP_{i,i}^{(\alpha)}(\rho)|_{\alpha=1}.\]
    The operators $\mfP_{i,i}^{(\alpha)}$ are defined in  Section \ref{sec_appproofspassagetimes}.
\end{enumerate}
\end{remark}

Our first relevant theorem is a universality result in the irreducible case:
\begin{theo} \label{theo_espti}
    Let $\M$  be a semifinite irreducible open quantum walk. We are in one (and only one) of the following situations:
    \begin{enumerate}
         \item for any $i$ in $V$ and $\rho$ in $\mathcal S(\h_i)$, one has $\ee_{i,\rho}(t_i)<\infty$,
         \item for any $i$ in $V$ and $\rho$ in $\mathcal S(\h_i)$, one has $\ee_{i,\rho}(t_i)=\infty$.
     \end{enumerate}
\end{theo}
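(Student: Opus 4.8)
The plan is to reduce the dichotomy for $\ee_{i,\rho}(t_i)$ to a dichotomy for $\ee_{i,\rho}(t_j)$ with $i\neq j$, and then to propagate finiteness/infiniteness through the graph using irreducibility, exactly as one does for the analogous statements on $n_j$ (Proposition \ref{prop_ExpectedNumberVisits} and Corollary \ref{coro_ExpectedNumberVisits}). First I would fix $j\in V$ and, as in the proof of Proposition \ref{prop_ExpectedNumberVisits}, introduce the set
\[
\mathfrak d^{t}_{j,i}=\{\varphi\in\h_i \ :\ \ee_{i,\ketbra\varphi\varphi}(t_j)<\infty\}.
\]
Using the representation $\ee_{i,\rho}(t_j)=\sum_{\pi\in\mathcal P^{V\setminus\{j\}}(i,j)}\ell(\pi)\,\tr\,L_\pi\rho L_\pi^*$ from Proposition \ref{prop_expetj}, this is the domain of a (possibly unbounded) nonnegative quadratic form $\varphi\mapsto\sum_\pi \ell(\pi)\|L_\pi\varphi\|^2$, hence a linear subspace of $\h_i$ on which the form is finite. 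The key structural observation is that $\mathfrak d^t_{j,i}$ is stable under $L_{i',i}$ whenever $i'\neq j$ and $i\neq j$: if $\varphi\in\mathfrak d^t_{j,i}$ then, decomposing a path from $i'$ to $j$ avoiding $j$ as $L_{i',i}$ followed by a path from $i$ to $j$ avoiding $j$, one gets $\sum_{\pi\in\mathcal P^{V\setminus\{j\}}(i',j)}\ell(\pi)\|L_\pi L_{i',i}\varphi\|^2\le \ee_{i',L_{i',i}\varphi}(t_i)\cdot(\text{stuff})+\sum_\pi(\ell(\pi)+1)\|L_{\pi'}\varphi\|^2$; more carefully, since $\ell$ only increases by one, finiteness of the $i$-form at $\varphi$ gives finiteness of the $i'$-form at $L_{i',i}\varphi$, up to the term $\ee_{i',L_{i',i}\varphi}(t_j)$ contributed by length-one paths, which is itself controlled because $j$ has finite-dimensional $\h_j$ and $\M$ is semifinite, so that only finitely many one-step transitions matter. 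Hence the family $(\mathfrak d^t_{j,i})_{i\in V}$ behaves like a ``sub-enclosure'' for the restricted dynamics on $V\setminus\{j\}$.

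Next I would invoke irreducibility. By Definition \ref{defi_irreducibility}, for any $i,i'$ and any nonzero $\varphi\in\h_i$ the vectors $L_\pi\varphi$, $\pi\in\mathcal P(i,i')$, are total in $\h_{i'}$; combining this with the stability just established, one shows that the closure of $\mathfrak d^t_{j,i}$ is either $\{0\}$ for all $i$ simultaneously, or dense in $\h_i$ for all $i$ simultaneously. (This is the same spreading argument as in the proof of Proposition \ref{prop_ExpectedNumberVisits}: if $\mathfrak d^t_{j,i_0}\neq\{0\}$ for some $i_0$, pick $0\neq\varphi\in\mathfrak d^t_{j,i_0}$, push it along all paths; by stability all $L_\pi\varphi$ lie in $\mathfrak d^t_{j,i}$, and by irreducibility they are total, so $\overline{\mathfrak d^t_{j,i}}=\h_i$.) In the semifinite case ``dense'' forces $\mathfrak d^t_{j,i}=\h_i$ and, since the form is then everywhere finite on a finite-dimensional space, it is bounded, giving $\ee_{i,\rho}(t_j)<\infty$ for \emph{all} $\rho\in\mathcal S(\h_i)$. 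Thus for each fixed $j$ we get a clean dichotomy: either $\ee_{i,\rho}(t_j)<\infty$ for all $i,\rho$, or $\ee_{i,\rho}(t_j)=\infty$ for all $i,\rho$.

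It then remains to remove the dependence on $j$, i.e. to show these per-$j$ dichotomies are all the same dichotomy, and in particular to link them to $t_i$ (the $i=j$ case). For this I would use the elementary decomposition $t_j\le t_k+(t_j\circ\theta_{t_k})$ valid on $\{t_k<\infty\}$ together with the strong Markov property of $(x_n,\rho_n)_n$: if $\ee_{k,\sigma}(t_j)<\infty$ for all $\sigma$ and $\pp_{i,\rho}(t_k<\infty)=1$ with $\ee_{i,\rho}(t_k)<\infty$, then $\ee_{i,\rho}(t_j)<\infty$. Now for a recurrent (in the sense of Definition \ref{defi_TransIrr}) semifinite irreducible OQW we already know from Corollary \ref{coro_ExpectedNumberVisits} and Proposition \ref{prop_EnPt} that $\pp_{j,\rho}(t_j<\infty)=1$ for all $j,\rho$, so all the relevant return times are a.s. finite; using irreducibility to get from any site to any other with positive probability in bounded length, one propagates finiteness of $\ee(t_j)$ to finiteness of $\ee(t_i)$ for all $i,j$ and vice versa. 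Conversely, if the OQW is transient, then $\ee_{i,\rho}(n_i)<\infty$, hence certainly $\ee_{i,\rho}(t_i)=\infty$ is not automatic — but one checks directly that transience forces $\pp_{i,\rho}(t_i<\infty)<1$ for at least one $\rho$ (by point 2 of Corollary \ref{coro_ReptjT}, a faithful $\rho$ with $\pp_{i,\rho}(t_i<\infty)=1$ would give $\ee_{i,\rho}(n_i)=\infty$), and then Proposition \ref{prop_expetj} gives $\ee_{i,\rho}(t_i)=+\infty$; spreading this via irreducibility yields case 2 for all $i$ and $\rho$.

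The main obstacle I expect is the passage from ``$\mathfrak d^t_{j,i}$ dense'' to ``$\ee_{i,\rho}(t_j)<\infty$ for \emph{every} $\rho$, including non-faithful ones'' in the genuinely infinite-dimensional intermediate steps — and, relatedly, making the stability-under-$L_{i',i}$ argument fully rigorous when the extra length-one term $\ee_{i',L_{i',i}\varphi}(t_j)$ is present. In the semifinite setting this is harmless (finite dimension kills both issues: density is equality and a finite nonnegative quadratic form is bounded), which is exactly why the theorem is stated only for semifinite OQWs; so in the write-up I would be careful to carry the argument at the level of the vectors $\varphi$ and the forms, invoke semifiniteness precisely at the step where density is upgraded to equality, and defer the genuinely unbounded bookkeeping to the appendix.
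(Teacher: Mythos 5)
Your strategy is the same as the paper's: establish, for each fixed $j$, an all-or-nothing dichotomy for $\ee_{i,\rho}(t_j)$ via an enclosure argument on the sets ${\mathfrak d}^t_{j,i}$, then remove the dependence on $j$ by a Markov/excursion argument. The gap is in the first step, and you have in fact put your finger on it yourself: the stability you can actually prove, namely $\varphi\in{\mathfrak d}^t_{j,i}\Rightarrow L_{i',i}\varphi\in{\mathfrak d}^t_{j,i'}$, only works for $i'\neq j$ (prepending the step $i\to i'$ to a $j$-avoiding path from $i'$ yields a $j$-avoiding path from $i$ only if $i'\neq j$), and it genuinely fails for $i'=j$: finiteness of $\ee_{i,\ketbra\varphi\varphi}(t_j)$ says nothing about the expected \emph{return} time from $j$ with internal state $L_{j,i}\varphi$. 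Hence the family $({\mathfrak d}^t_{j,i})_i$ is not an enclosure, and irreducibility --- which gives totality of $\{L_\pi\varphi\}$ over \emph{all} paths, not over $j$-avoiding ones --- cannot spread non-triviality of some ${\mathfrak d}^t_{j,i_0}$ to density of every ${\mathfrak d}^t_{j,i}$. The ``harmless extra length-one term'' you propose to absorb using semifiniteness is not the real obstruction: the dichotomy itself fails. Example \ref{ex_exampleC} with $p=1/2$ shows this concretely: there ${\mathfrak d}^t_{0,0}=\cc e_2$ is a proper nonzero subspace of $\h_0=\cc^2$ (one has $\ee_{0,\ketbra{e_2}{e_2}}(t_0)=1$ because $L_{0,0}e_2=e_1$ forces an immediate return, while $\ee_{0,\rho}(t_0)=r+2\lambda(1-r)=\infty$ for every other $\rho$ since $\lambda=\infty$ at $p=1/2$), and $L_{0,0}$ maps $e_2$ out of ${\mathfrak d}^t_{0,0}$.

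The same example defeats the final step of your transient case: from ``$\ee_{i,\rho}(t_i)=\infty$ for at least one $\rho$'' you cannot ``spread via irreducibility'' to all $\rho$ at the same site. In Example \ref{ex_exampleC} with $p\leq 1/2$ (irreducible, semifinite) one has $\ee_{0,\ketbra{e_2}{e_2}}(t_0)=1$ while $\ee_{0,\rho}(t_0)=\infty$ for every $\rho\neq\ketbra{e_2}{e_2}$, so that walk satisfies neither alternative of the statement being proved. The obstruction is therefore not one you can remove by carrying out the deferred bookkeeping: any correct version of the result must exclude such exceptional non-faithful states, e.g.\ by phrasing alternative~1 in terms of a dense domain as in Proposition \ref{prop_ExpectedTime}, or by restricting to faithful $\rho$. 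For what it is worth, the paper's own proof takes exactly your route (it declares $D^t(j)$ an enclosure and then invokes the semifinite dichotomy), and is exposed to the same objection; so your proposal faithfully reproduces the intended argument, but the argument does not close as written.
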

Our proof uses the following intermediate universality result, similar to Proposition \ref{prop_ExpectedNumberVisits} and which can be useful in a wider setting:
\begin{prop} \label{prop_ExpectedTime}
    Let $\M$ be an irreducible open quantum walk and let $j$ be in~$V$. We are in one (and only one) of the following situations: 
    \begin{enumerate}
        \item for every $i$ in $V$, there exists a domain ${\mathfrak d}^t_{j,i}$, dense in $\h_i$, such that the quantity $\ee_{i,\rho}(t_j)$ is finite for any $\rho$ in $\Scal(\h_i)$ that has finite range contained in~${\mathfrak d}^t_{j,i}$; 
        \item for every $i$ in $V$ and $\rho$ in $\mathcal S(\h_i)$, the quantity $\ee_{i,\rho}(t_j)$ is finite.
    \end{enumerate}
\end{prop}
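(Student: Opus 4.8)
The plan is to run the argument in close parallel with that of Proposition~\ref{prop_ExpectedNumberVisits}. Fix $j\in V$ and, for each $i\in V$, let $E_i$ be the set of vectors $\varphi\in\h_i$ with $\|\varphi\|=1$ such that $\ee_{i,\pr\varphi}(t_j)<\infty$, extended by homogeneity (so $0\in E_i$). First I would show that $E_i$ is a subspace of closed‑form type. By Proposition~\ref{prop_expetj}, for a unit vector $\varphi$ one has $\ee_{i,\pr\varphi}(t_j)=\sum_{\pi\in\cP^{V\setminus\{j\}}(i,j)}\ell(\pi)\,\|L_\pi\varphi\|^2$ when $\pp_{i,\pr\varphi}(t_j<\infty)=1$, and $+\infty$ otherwise; by Proposition~\ref{prop_ReptjT} the condition $\pp_{i,\pr\varphi}(t_j<\infty)=1$ is exactly $\varphi\in\ker\big(\id_{\h_i}-\mfP_{j,i}^*(\id_{\h_j})\big)$, a closed subspace, while $\varphi\mapsto\sum_{\pi\in\cP^{V\setminus\{j\}}(i,j)}\ell(\pi)\|L_\pi\varphi\|^2$ is an increasing limit of bounded nonnegative quadratic forms, hence a closed form. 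Thus $E_i$ is the intersection of a closed subspace with a form domain, in particular a linear subspace of $\h_i$; moreover $\rho\mapsto\ee_{i,\rho}(t_j)$ is affine on states supported in $\ker(\id_{\h_i}-\mfP_{j,i}^*(\id_{\h_j}))$, so a state whose range is finite‑dimensional and contained in $E_i$ has finite $\ee(t_j)$, a dense domain as in alternative~1 forces $E_i$ dense, and $E_i=\h_i$ forces (closed graph theorem for forms) $\ee_{i,\rho}(t_j)<\infty$ for every $\rho$. So the proposition reduces to the uniformity claim: the $E_i$ are simultaneously dense in every $\h_i$, or simultaneously equal to every $\h_i$.

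For the uniformity I would use a first‑step decomposition of the hitting time. Conditioning on $x_1$ in the Markov chain $(x_n,\rho_n)_n$ gives, for unit $\varphi\in\h_i$,
\[
\ee_{i,\pr\varphi}(t_j)=\|L_{j,i}\varphi\|^2+\sum_{k\ne j}\|L_{k,i}\varphi\|^2\Big(1+\ee_{k,\,\widehat{L_{k,i}\varphi}}(t_j)\Big),
\]
using $\sum_{k}L_{k,i}^*L_{k,i}=\id_{\h_i}$. Hence $\ee_{i,\pr\varphi}(t_j)<\infty$ implies $L_{k,i}\varphi\in E_k$ for every $k\ne j$, so $\bigoplus_i E_i$ is invariant under all transitions $L_{k,i}$ with $k\ne j$, i.e.\ under the open quantum walk ``killed on hitting $j$''; passing to closures, $\bigoplus_i\overline{E_i}$ is invariant under this killed dynamics. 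To upgrade this to the stated dichotomy I would re‑introduce the transitions into $j$ and invoke irreducibility of $\M$ (Definition~\ref{defi_irreducibility}): for any $i,i'$ and $\varphi$ the family $\{L_\pi\varphi:\pi\in\cP(i,i')\}$ is total in $\h_{i'}$, and concatenating such a path with one in $\cP^{V\setminus\{j\}}(i',j)$ and applying the first‑step identity again transfers ``$E_i$ dense'' (resp.\ ``$E_i=\h_i$'') from one site to all sites and rules out the mixed case, while Corollary~\ref{coro_ReptjT} controls the persistence of the almost‑sure–finiteness constraint $\pp_{i,\cdot}(t_j<\infty)=1$ built into $E_i$. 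The affinity noted above then passes from vector states to general finite‑range states, giving the result.

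The main obstacle — and the point where the argument genuinely departs from that of Proposition~\ref{prop_ExpectedNumberVisits} — is precisely this almost‑sure constraint. Finiteness of $\ee(t_j)$ is not the domain condition of a single closed form but the conjunction of ``$t_j<\infty$ $\pp_{i,\pr\varphi}$‑almost surely'' — a closed‑\emph{subspace} condition governed by $\mfP_{j,i}^*(\id_{\h_j})$ — and finiteness of a closed quadratic form, and the subspaces $E_i$ are invariant only under transitions avoiding $j$, i.e.\ under a sub‑stochastic dynamics whose restriction to $V\setminus\{j\}$ need not be irreducible. So the clean ``invariant subspace $\Rightarrow$ enclosure $\Rightarrow$ $\{0\}$ or $\h$'' mechanism is not available off the shelf, and the real work is to show that the two conditions propagate \emph{jointly} using irreducibility of $\M$ itself. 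A way to sidestep the closed‑form machinery would be to use the identity $\ee_{i,\rho}(t_j)=1+\sum_{k\ne j}\ee^{(j)}_{i,\rho}(n_k)$, where $\ee^{(j)}$ denotes expectation for the walk killed at $j$, and to treat the right‑hand side with suitable extensions of Corollary~\ref{coro_ReptjT} and Proposition~\ref{prop_ExpectedNumberVisits}.
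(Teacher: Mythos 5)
Your attempt stops exactly where the proof has to start. Everything up to the reduction is sound (the first-step identity, the observation that $E_i$ is the intersection of the closed subspace $\ker\big(\id_{\h_i}-\mfP_{j,i}^*(\id_{\h_j})\big)$ with the domain of the form $\varphi\mapsto\sum_{\pi\in\cP^{V\setminus\{j\}}(i,j)}\ell(\pi)\|L_\pi\varphi\|^2$, and the stability of $\bigoplus_i E_i$ under the $L_{k,i}$ with $k\neq j$), but the decisive step --- ``transferring density from one site to all sites and ruling out the mixed case'' --- is only announced, and as formulated it cannot be carried out: the uniformity you reduce the proposition to is false. Take the minimal dilation of the biased simple walk on $\zz$ ($L_{i+1,i}=\sqrt p$, $L_{i-1,i}=\sqrt q$, $p>1/2$), which is irreducible, and $j=1$: then $\ee_{i,\rho}(t_1)=(1-i)/(2p-1)<\infty$ for $i\leq 0$, while $\pp_{i,\rho}(t_1<\infty)=(q/p)^{i-1}<1$ for $i\geq 2$, whence $\ee_{i,\rho}(t_1)=\infty$ for all $i\geq 1$. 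So $E_0=\h_0$ while $E_1=E_2=\{0\}$: the sets $E_i$ are neither simultaneously dense nor simultaneously trivial, and no concatenation or irreducibility argument can prove otherwise. The source of the failure is the one you yourself isolate: because the constraint $\pp_{i,\pr{\varphi}}(t_j<\infty)=1$ is built into $E_i$, the family $(E_i)_i$ is stable only under the walk killed at $j$, so it is not an enclosure and Definition \ref{defi_irreducibility} gives no leverage. Your sketched alternative via the killed walk's occupation numbers inherits the same non-uniformity, since it still presupposes $\pp_{i,\rho}(t_j<\infty)=1$.

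The paper's own proof takes a different and much shorter route: it introduces the linear space $D^t(j)$ of vectors $\varphi=\sum_i\varphi_i\otimes\ket i$ with $\sum_{i}\sum_{\pi\in\cP^{V\setminus\{j\}}(i,j)}\ell(\pi)\,\|L_\pi\varphi_i\|^2<\infty$ --- i.e.\ it drops the almost-sure constraint, so that the quantity actually controlled is $\ee_{i,\rho}(t_j\ind_{t_j<\infty})$ rather than $\ee_{i,\rho}(t_j)$ --- asserts that $D^t(j)$ is an enclosure, and concludes ($\{0\}$ or dense) exactly as for $D^n(j)$ in the proof of Proposition \ref{prop_ExpectedNumberVisits}. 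This sidesteps your obstacle by changing the quantity, and the biased-walk example above shows the two quantities genuinely differ, so the discrepancy you flag between the statement and the enclosure mechanism is real, not something you overlooked how to handle. (Note also that even for $D^t(j)$ the prefixing argument only covers transitions $L_{k,l}$ with $k\neq j$, for precisely the reason you give: a path of $\cP^{V\setminus\{j\}}(j,j)$ prefixed by a step $l\to j$ passes through $j$ in its interior; so the paper's ``remark that $D^t(j)$ is an enclosure'' itself requires the repair you are asking for.) In short, you have correctly located a genuine difficulty and correctly diagnosed why the argument of Proposition \ref{prop_ExpectedNumberVisits} does not transfer, but your proposal does not close the gap, and for the quantity $\ee_{i,\rho}(t_j)$ as defined in Proposition \ref{prop_expetj} the claimed dichotomy cannot be established along these lines.
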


Our second result relates the invariant state with the expectation of return times. To state it, for $j\in V$, we define by induction for $k\in \nn$ the $k$-th return time
\[t_j^{(k)}=\inf\{n>t_j^{(k-1)}\, |\, x_n=j\}.\]
\begin{theo}\label{theo_espti2}
    Let $\M$  be a semifinite irreducible open quantum walk with an invariant state $\rhoinv=\sum_{i\in V}\rhoinv(i)\otimes \ketbra ii$.  Then we are in situation~1 of Theorem \ref{theo_espti}, and for any $i,j$ in $V$ and $\rho$ in $\mathcal S(\h_i)$, the sequence $(t_j^{(k)}/k)_{k}$ converges, with respect to $\pp_{i,\rho}$, both almost-surely and in the $\mathrm L^1$ sense, to 
    \begin{equation} \label{eq_espti2}
\ee_{i,\frac{\rhoinv(i)}{\tr \,\rhoinv(i)}}(t_i)= \big(\tr\,\rhoinv(i)\big)^{-1}.
    \end{equation}
\end{theo}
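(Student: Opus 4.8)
The plan is to use the ergodic theorem for the Markov chain $(x_n,\rho_n)_n$ combined with the analogue of Kac's formula expressed in Equation \eqref{eq_espti2}, and to derive the latter from the representation of $\ee_{i,\rho}(t_j)$ in Proposition \ref{prop_expetj} together with properties of the invariant state. First I would observe that, since $\M$ is semifinite irreducible and admits an invariant state $\rhoinv$, the second remark after Corollary \ref{coro_ExpectedNumberVisits} places us in situation 2 of Corollary \ref{coro_ExpectedNumberVisits} (recurrence), and I would need to upgrade this to situation 1 of Theorem \ref{theo_espti}, i.e. $\ee_{i,\rho}(t_i)<\infty$ for all $i,\rho$. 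The natural route is to show directly that $\ee_{i,\frac{\rhoinv(i)}{\tr\rhoinv(i)}}(t_i)=(\tr\,\rhoinv(i))^{-1}<\infty$ for the specific initial state given by the invariant state, then invoke Proposition \ref{prop_ExpectedTime} (the dichotomy: either a dense domain of good $\rho$, or all $\rho$) to conclude that in fact $\ee_{i,\rho}(t_i)<\infty$ for every $\rho$, hence situation 1 of Theorem \ref{theo_espti}.

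For the Kac-type identity itself, I would work with the process started from the invariant state and use the representation $\ee_{i,\rho}(t_i)=\sum_{\pi\in\mathcal P^{V\setminus\{i\}}(i,i)}\ell(\pi)\,\tr\,L_\pi\rho L_\pi^*$ from Proposition \ref{prop_expetj}. Summing $\tr\,\rhoinv(i)\,\ee_{i,\frac{\rhoinv(i)}{\tr\rhoinv(i)}}(t_i)=\sum_{\pi\in\mathcal P^{V\setminus\{i\}}(i,i)}\ell(\pi)\,\tr\,L_\pi\rhoinv(i)L_\pi^*$, I would write $\ell(\pi)=\sum_{k=1}^{\ell(\pi)}1$ and reorganize the double sum by cutting each excursion path $\pi$ from $i$ to $i$ staying outside $i$ at each intermediate time $k$: a path of length $\ell$ contributes to sites it visits at times $1,\dots,\ell-1$, all outside $i$, plus its endpoints. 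This reorganization should express the sum as $\sum_{m\in V}\sum_{(\text{subpaths from }i\text{ to }m\text{ avoiding }i)}\tr\,L_{\sigma}\rhoinv(i)L_\sigma^*$ over prefixes of excursions, which by invariance of $\rhoinv$ (using $\M(\rhoinv)=\rhoinv$ together with the decomposition of $\rhoinv(m)$ via first-passage paths from any fixed $i$) telescopes to $\sum_{m\in V}\tr\,\rhoinv(m)=1$. This is the analogue of the classical computation $\pi_i\ee_i(t_i)=\sum_m\pi_m=1$, and the main obstacle will be making the excursion-decomposition rigorous when $V$ is infinite: I would need to justify absolute convergence of the rearranged series and verify that the ``prefix'' sums indeed recover $\rhoinv(m)$, which relies on the irreducible invariant state being expressible through first-return excursions from $i$ — a point where I would lean on the completely positive maps $\mfP_{i,i}$ of Proposition \ref{prop_ReptjT} and the fact (from recurrence, Corollary \ref{coro_ReptjT} point 2 and situation 2 of Corollary \ref{coro_ExpectedNumberVisits}) that $\mfP_{i,i}$ is trace-preserving on the relevant subspace.

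Finally, for the almost-sure and $\mathrm L^1$ convergence of $t_i^{(k)}/k$: once $\ee_{i,\rho}(t_i)<\infty$ is established for all $i,\rho$, I would note that the successive return times to site $i$ together with the internal state at those return times form, by the strong Markov property of $(x_n,\rho_n)_n$, a Markov chain on $\mathcal S(\h_i)$; by irreducibility and recurrence this chain has a unique invariant law, and the increments $t_i^{(k)}-t_i^{(k-1)}$ are governed by it. Applying Birkhoff's ergodic theorem (or the ergodic theorem for Markov chains with a unique invariant measure) to the induced chain gives $t_i^{(k)}/k\to \ee_{\pi_i^{\mathrm{ret}}}(t_i)$ almost surely and in $\mathrm L^1$, where $\pi_i^{\mathrm{ret}}$ is the stationary law of the internal state at returns; identifying $\pi_i^{\mathrm{ret}}$ with $\rhoinv(i)/\tr\,\rhoinv(i)$ (this is exactly the statement that the invariant state, conditioned at site $i$, is the stationary return-state, which follows from the structure of $\rhoinv$) and combining with \eqref{eq_espti2} yields the claimed limit. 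Convergence for arbitrary initial $(i,\rho)$ rather than the stationary one follows since $t_i^{(k)}-t_i^{(1)}$ has the same asymptotics and $t_i^{(1)}=t_i<\infty$ a.s.
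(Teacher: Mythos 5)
Your plan diverges from the paper's proof at the two points you yourself flag as ``the main obstacle,'' and in both places the gap is genuine rather than merely technical. First, the direct Kac computation: rearranging $\sum_{\pi}\ell(\pi)\tr\,L_\pi\rhoinv(i)L_\pi^*$ into prefix sums requires the cycle representation of the invariant state, i.e.\ that summing $L_\sigma\rhoinv(i)L_\sigma^*$ over first-passage prefixes from $i$ to $m$ recovers $\rhoinv(m)$. That identity is essentially equivalent to \eqref{eq_espti2} itself and is nowhere established; it does not follow formally from $\M(\rhoinv)=\rhoinv$ without an argument. The paper sidesteps this entirely by computing the almost-sure limit of $t_j^{(k)}/k$ \emph{twice}: once as $\big(\tr\,\rhoinv(j)\big)^{-1}$, via the K\"ummerer--Maassen ergodic theorem applied to the occupation fractions $n_j^{(k)}/k$ (valid for \emph{every} initial $(i,\rho)$), and once as $\ee_{j,\rhoinv(j)/\tr\rhoinv(j)}(t_j)$, via Birkhoff's theorem on the space of excursion sequences; equating the two limits \emph{yields} the Kac identity instead of requiring it as input. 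This also delivers situation~1 of Theorem \ref{theo_espti} without your detour through Proposition \ref{prop_ExpectedTime}.

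Second, your ergodic step asserts that the induced chain of internal states at return times has a unique invariant law on $\mathcal S(\h_i)$ ``by irreducibility and recurrence.'' Irreducibility of the completely positive map $\mfP_{i,i}$ gives uniqueness of its invariant \emph{state} (the barycenter), not uniqueness of the invariant probability \emph{measure} of the measure-valued trajectory chain on $\mathcal S(\h_i)$ --- the latter is a well-known delicate point for quantum trajectories and fails in general without extra hypotheses. What is actually needed, and what the paper proves, is ergodicity of the specific shift-invariant measure $\pp^{(j)}$ on excursion sequences built from $\rhoinv(j)/\tr\,\rhoinv(j)$; this follows from simplicity of the eigenvalue $1$ of $\mfP_{j,j}$ (Perron--Frobenius), after first checking that $\mfP_{j,j}$ is trace-preserving and irreducible. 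Relatedly, your last sentence does not suffice to pass from the stationary start to an arbitrary $(i,\rho)$: after $t_i^{(1)}$ the internal state is random and not distributed as $\rhoinv(i)/\tr\,\rhoinv(i)$, and Birkhoff gives almost-sure convergence only under the stationary measure. In the paper this transfer is free because the occupation-time convergence holds for every initial condition.
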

The proof is based on the K\"ummerer-Maassen ergodic Theorem and Birkhoff's ergodic Theorem. Note that Theorem 1.6 in \cite{CarvalhoGuidiLardizabal} shows a result of the same type, but with less explicit assumptions.

\section{Examples} % (fold)
\label{sec_examples}

\begin{example}\label{ex_exampleB}
	Consider the open quantum walk defined by $V=\{0,1,2\}$, with $\h_i=\cc^2$ for $i=0,1,2$ and 
	\[ L_{1,0}=\begin{pmatrix} 1& 0 \\ 0 & 0 \end{pmatrix} \qquad  L_{2,0}=\begin{pmatrix} 0 & 0 \\ 0 & 1 \end{pmatrix} \qquad L_{0,1}=\begin{pmatrix} 1& 0 \\ 0 & 1 \end{pmatrix}  \qquad L_{2,2}=\begin{pmatrix} 1& 0 \\ 0 & 1 \end{pmatrix} ,\]
	all other transitions being zero. This OQW is obviously not irreducible. Denote by $e_1$, $e_2$ the canonical basis of $\cc^2$.
	For $\rho=\begin{pmatrix}1-r & s\\\overline s&r\end{pmatrix}$ (with $r\in[0,1]$ and $|s|^2\leq r(1-r)$, so that $r=1$ if and only if $\rho=\ketbra{e_2}{e_2}$) one has $\pp_{0,\rho}(t_0<\infty)=1-r$, and
	$\ee_{0,\rho}(n_0)= 0$ if $r=1$, and $\infty$ otherwise. One therefore has 
	\begin{alignat*}{4}
		&\pp_{0,\rho}(t_0<\infty)=1,&  &\ \ee_{0,\rho}(n_0)=\infty,& &\ \ee_{0,\rho}(t_0)=2& &\mbox{ for }\rho=\ketbra{e_1}{e_1};\\
		&\pp_{0,\rho}(t_0<\infty)=0,&  &\ \ee_{0,\rho}(n_0)=0,& &\ \ee_{0,\rho}(t_0)=\infty& &\mbox{ for }\rho=\ketbra{e_2}{e_2};  \\
		&\pp_{0,\rho}(t_0<\infty)\in]0,1[,& &\ \ee_{0,\rho}(n_0)=\infty,& &\ \ee_{0,\rho}(t_0)=\infty& &\mbox{ otherwise}.
	\end{alignat*}
	In this case, it is easy to compute the operator $\mfP_{0,0}$:
	\[\mfP_{0,0}(\rho)=\begin{pmatrix} 1& 0 \\ 0 & 0 \end{pmatrix} \rho \begin{pmatrix} 1& 0 \\ 0 & 0 \end{pmatrix}.\]
	Therefore, $\mfP_{0,0}^{*\,k}(\id_{\h_0})=\begin{pmatrix} 1& 0 \\ 0 & 0 \end{pmatrix}$ for any $k\geq 1$, so that loosely speaking, one has $\sum_{k\geq 0}\mfP_{0,0}^{*\,k}(\id_{\h_0})=\begin{pmatrix} \infty& 0 \\ 0 & 0 \end{pmatrix}$, consistently with Proposition \ref{prop_ReptjT}.
\end{example}

\begin{example}\label{ex_exampleC}
	Consider the open quantum walk defined by $V=\{0,1,2,\ldots\}$ with $\h_0=\cc^2$ and $\h_i=\cc$ for $i>0$, and transition operators
	\[ L_{0,0}=\begin{pmatrix} 0& 1 \\ 0 & 0 \end{pmatrix} \qquad  L_{1,0}=\begin{pmatrix} 1 & 0 \end{pmatrix} \qquad  L_{0,1}=\sqrt{p/2\,}\begin{pmatrix} 1 \\ 1 \end{pmatrix}\]
	and  $L_{i,i+1}=\sqrt p$, $L_{i+1,i}=\sqrt q$ for $i\geq 1$, with $p+q=1$ (all other transitions being zero). This OQW is semifinite and irreducible, independently of the value of $p$. However, it is a simple exercise to see that, depending on the value of $p$, one has different behaviors for $\pp_{0,\rho}(t_0<\infty)$ and $\ee_{0,\rho}(n_0)$: defining for $p\geq 1/2$ the quantity $\lambda=\frac{8p^3-8p^2+6p-1}{4p(2p-1)}\in[0,\infty]$, one has
	\begin{itemize}
		\item for $p\geq 1/2$ one has
		\begin{alignat*}{2}
				\pp_{0,\rho}(t_0<\infty)&=1 & & \quad \forall \rho \in \mathcal S(\h_i)\\
				\ee_{0,\rho}(n_0)&=\infty & &\quad \forall \rho \in \mathcal S(\h_i);\\
				\ee_{0,\rho}(t_0)&=r+2\lambda (1-r) & & \quad \mbox{for } \rho=\begin{pmatrix}1-r & s\\\overline s&r\end{pmatrix}.
		\end{alignat*}
		\item for $p<1/2$ one has
		\begin{alignat*}{2}
		\pp_{0,\rho}(t_0<\infty)=1 \mbox{ and } \ee_{0,\rho}(t_0)=1 &\quad  \mbox{if } \rho=\ketbra{e_2}{e_2}\\
		\pp_{0,\rho}(t_0<\infty)<1 \mbox{ and } \ee_{0,\rho}(t_0)=\infty&\quad \mbox{if } \rho \neq \ketbra{e_2}{e_2},\\
		\ee_{0,\rho}(n_0)<\infty &\quad \forall \rho \in \mathcal S(\h_i).
		\end{alignat*}
	\end{itemize}
	Here again it is easy to compute the operator $\mfP_{0,0}$: for any $\rho$ in $\mathcal I_1(\h_0)$ we have
	\[\mfP_{0,0}(\rho)=\begin{pmatrix} 0&1\\0 &0\end{pmatrix} \rho \begin{pmatrix} 0&0\\1 &0\end{pmatrix}+ \frac12 \inf(p/q,1) \begin{pmatrix} 1&0\\1 &0\end{pmatrix} \rho \begin{pmatrix} 1&1\\0 &0\end{pmatrix}.\]
	In particular, % \[\mfP^*_{0,0}(\id_{\h_0})=\begin{pmatrix} \inf(p/q,1)&0\\0 &1\end{pmatrix},\] so that
	for $p\geq 1/2$, one has  $\mfP_{0,0}^{*k}(\id_{\h_0})=\id_{\h_0}$, and for $p<1/2$ one has $\mfP_{0,0}^{*k}(\id_{\h_0})= \begin{pmatrix} u_{k+1} & 0 \\ 0 & u_k\end{pmatrix}$ with $u_k\to 0$ exponentially fast, so that $\mfP_{0,0}^{*k}(\id_{\h_0})$ is summable. This is consistent with Corollary \ref{coro_ReptjT}. In addition, for $p\geq 1/2$ one has
	\[\mfT_{0,0}(\rho)= \begin{pmatrix} 0&1\\0 &0\end{pmatrix} \rho \begin{pmatrix} 0&0\\1 &0\end{pmatrix} + \lambda \begin{pmatrix} 1&0\\1 &0\end{pmatrix} \rho \begin{pmatrix} 1&1\\0 &0\end{pmatrix}.\]
\end{example}

\begin{example}\label{ex_invariantstate}
    In the case of an irreducible open quantum walk which admits a faithful invariant state (for example a finite irreducible open quantum walk), the K\"ummerer-Maassen Theorem (proved originally in \cite{KuMa}, see \cite{Lim2010} for an infinite-dimensional extension and \cite{CP1} for an application to the case of open quantum walks) immediately implies that, for any initial position $i$ in~$V$ and any state $\rho$ in $\mathcal S(\mathfrak h_i)$, any point $j$ is almost-surely visited infinitely often:
    \[\pp_{i,\rho}(n_j=\infty)=1.\]
    A fortiori, one has $\pp_{i,\rho}(t_j<\infty)=1$ and $\ee_{i,\rho}(n_j)=\infty$, and therefore the OQW is always recurrent. This is the same as in the classical case, where an irreducible Markov chain on a finite set is always recurrent.
\end{example}

\begin{example}\label{ex_nouveau}
    Consider the open quantum walk with $V=\{0,1,2,3\}$, $\h_{0}=\cc$ and $\h_i=\cc^2$ for $i=1,2,3$.
    \begin{gather*}
    L_{0,0}=1, \qquad L_{0,1}=\begin{pmatrix}1/2 & 0 \end{pmatrix}\\
    L_{2,1}=\begin{pmatrix} \sqrt 3 /2 & 0 \\ 0 & 1 \end{pmatrix} \qquad L_{1,2}=\begin{pmatrix} 0 & 0 \\ 0 & 1 \end{pmatrix}\\
    L_{3,2}=\begin{pmatrix} 1 & 0 \\ 0 & 0 \end{pmatrix} \qquad L_{2,3}=\begin{pmatrix} 0 & 1 \\ 1 & 0 \end{pmatrix},
    \end{gather*}
all other transitions $L_{i,j}$ being zero.
%\[L_{0,1}=\begin{pmatrix} 1/2 & 0 \\ 0 & 0\end{pmatrix} \quad L_{1,2}=\frac1{\sqrt 2} \begin{pmatrix} 1 & 1 \\ 0 & 0\end{pmatrix} \quad L_{2,2}=\frac1{\sqrt 2} \begin{pmatrix*}[r] 0 & 0 \\ 1 & -1 \end{pmatrix*},\]
%    $L_{0,1}=\frac1{\sqrt 2}\begin{pmatrix} 0 & 1 \end{pmatrix}$ and $L_{0,0}=(1)$.
One checks by examination that, starting from $i=1$ and $\rho=\begin{pmatrix}1-r & s\\\overline s&r\end{pmatrix}$: 
    with probability $(1-r)/2$ the first step goes to $0$ and then the walk stays there and, with probability $(1+r)/2$ the first step goes to $2$ and then, after a finite number of steps, goes back and forth between $1$ and $2$. Therefore, for any $\rho$, one has $\pp_{1,\rho}(t_0<\infty)=(1+r)/2$ but $\ee_{1,\rho}(n_0)=\infty$. Again it is easy to compute $\mfP_{1,1}$:
    \[\mfP_{1,1}(\rho)=\begin{pmatrix} 0 & 0 \\ 0 & 1 \end{pmatrix} \rho \begin{pmatrix} 0 & 0 \\ 0 & 1 \end{pmatrix}  + \begin{pmatrix} 0 & 0 \\ \sqrt 3 /2 & 0 \end{pmatrix}  \rho \begin{pmatrix} 0 & \sqrt 3 /2\\ 0  & 0 \end{pmatrix}.\]
    One then has $\mfP_{1,1}^{*\,k}(\id_{\h_0})=\begin{pmatrix} 3/4 & 0 \\ 0 & 1        
    \end{pmatrix}$, so that loosely speaking, $\sum_k \mfP_{1,1}^{*\,k}(\id_{\h_0})= \begin{pmatrix} \infty & 0 \\ 0 & \infty \end{pmatrix}$, consistently with Corollary \ref{coro_ReptjT}.
\end{example}

\begin{example} \label{ex_ZC2simpleOQW}
    We consider now the case of (space) homogeneous nearest-neighbor random walks on $V=\zz$ with $\h_i\equiv\h=\cc^2$ for all $i\in \zz$. This OQW is entirely determined by two operators $L_+$ and $L_-$ on $\cc^2$ satisfying $L_+^* L_+ + L_-^* L_-=\id_{\h}$. We call such an open quantum walk a $(\zz,\cc^2)$-simple OQW. It is proven in \cite{CarvalhoGuidiLardizabal} that:
    \begin{enumerate}
        \item if $\sp L_+^*L_+=\sp L_-^*L_-=\{1/2\}$ then $\pp_{i,\rho}(t_i<\infty)~=~1$ for any $i$ and $\rho$,
        \item if $\pp_{i,\rho}(t_i<\infty)=1$ for any $i$ and $\rho$ with $L_+$, $L_-$ normal, then $\sp L_+^*L_+=\sp L_-^*L_-=\{1/2\}$.
    \end{enumerate}
    With the tools developed in this section, we can recover the first point and make the second more precise. First, if $\sp L_+^*L_+=\sp L_-^*L_-=\{1/2\}$ then
    \[\mfP^*_{i,i}(\id_{\h})= \sum_{\pi\in\cP^{V\setminus\{i\}}(i,i)} \big(\frac12)^{\ell(\pi)} \,\id_{\h}\]
    which, by the results on (classical) simple random walks, is just $\id_\h$, and by Corollary \ref{coro_ReptjT}, $\pp_{i,\rho}(t_i<\infty)=1$ for any $\rho$. Second, assume that $L_+$ and $L_-$ are normal, and consider a diagonal basis for $L_+^* L_+$ (and therefore for $L_-^* L_-$). In this basis, one has
    \begin{equation} \label{eq_formeLpLm}
        L_+^* L_+ = \begin{pmatrix} p_1 & 0 \\ 0 & p_2 \end{pmatrix}\qquad
    L_-^* L_- = \begin{pmatrix} q_1 & 0 \\ 0 & q_2 \end{pmatrix}
    \end{equation}
    with $p_k+q_k=1$, $k=1,2$. It is then easy to show that if a path $\pi$ is made of $n_+(\pi)$ ``up'' steps, and $n_-(\pi)$ ``down'' steps, then 
    \[L_\pi^* L_\pi = (L_+^* L_+)^{n_+(\pi)} (L_-^* L_-)^{n_-(\pi)}= \begin{pmatrix} p_1^{n_+(\pi)}q_1^{n_-(\pi)} & 0 \\ 0 & p_2^{n_+(\pi)}q_2^{n_-(\pi)}\end{pmatrix},\]
    and using again standard results on simple random walks we have
    \[\mfP^*_{0,0}(\id_{\h})=\begin{pmatrix} \inf(2p_1,2q_1) & 0 \\ 0 & \inf(2p_2,2q_2) \end{pmatrix}.\]
    Therefore, if $L_+$ and $L_-$ are normal, then:
    \begin{itemize}
         \item if $p_1=p_2=1/2$, then $\pp_{i,\rho}(t_i<\infty)=1$ for any $i$ and $\rho$, and therefore $\ee_{i,\rho}(n_i)=\infty$;
         \item if $p_1$ and $p_2$ are both $\neq 1/2$, then $\pp_{i,\rho}(t_i<\infty)<1$ for any $i$ and $\rho$, and therefore $\ee_{i,\rho}(n_i)<\infty$;
         \item if e.g. $p_1=1/2$ and $p_2\neq 1/2$ then $\pp_{i,\rho}(t_i<\infty)=1$ for $\rho=\ketbra {e_1}{e_1}$ and $<1$ otherwise. If furthermore the OQW is irreducible (see Proposition 6.12 in \cite{CP2} for a necessary and sufficient condition), then by Theorem \ref{theo_summary} one has $\ee_{i,\rho}(n_i)=\infty$ for any $i$ and $\rho$.
     \end{itemize}  
\end{example}

A natural question is what happens when we drop the assumption of normality for $L_+$ and $L_-$. We can still assume the form \eqref{eq_formeLpLm}; if $\sup(p_1,p_2)<1/2$ or $\inf(p_1,p_2)>1/2$ and $L_+$, $L_-$ do not have an eigenvector in common, then Theorem 5.4 in \cite{CP2} implies that the process satisfies a law of large numbers $x_n/n \to m\neq 0$ almost-surely, and satisfies a large deviations principle with respect to any $\pp_{i,\rho}$. This is enough to show that $\ee_{i,\rho}(n_i)<\infty$ for any $i$ and~$\rho$. On the other hand, if e.g. $p_1>1/2$ and $p_2<1/2$ then we can still have $\pp_{i,\rho}(t_i<\infty)=1$ for any $i$ and $\rho$: consider (as suggested by \cite{CarvalhoGuidiLardizabal}) the case
\[L_+=\frac1{\sqrt 2}\begin{pmatrix} 1 & 1 \\ 0 & 0 \end{pmatrix} \qquad L_-=\frac1{\sqrt 2}\begin{pmatrix*}[r] 0& 0 \\ 1 & -1 \end{pmatrix*}\]
where $\sp L_+^*L_+ = \sp L_-^* L_-=\{0,1\}$. By Proposition 6.12 in \cite{CP2}, this open quantum walk is irreducible. In addition, for any $\pi=(i_0,\ldots,i_\ell)$, denoting $\eps=i_1-i_0$, one shows that $2^{\ell(\pi)/2}\,L_\pi$ equals
\[
    \pm \begin{pmatrix} 1 & \eps \\ 0 & 0 \end{pmatrix} \mbox{ if }i_\ell-i_{\ell-1}=+1, \qquad \pm \begin{pmatrix}  0 & 0\\ 1 & \eps \end{pmatrix}  \mbox{ if }i_\ell-i_{\ell-1}=-1.
\]
We can therefore compute, again using results for simple random walks,
\[\mfP_{0,0}(\rho) = \frac12 \begin{pmatrix*}[r] 1 & -1 \\ 0 & 0\end{pmatrix*} \rho \begin{pmatrix*}[r] 1 & 0 \\ -1 & 0\end{pmatrix*} + \frac12 \begin{pmatrix*}[r] 0 & 0 \\ 1 & 1\end{pmatrix*} \rho \begin{pmatrix*}[r] 0 & 1 \\ 0 & 1\end{pmatrix*}.\]
We therefore have $\mfP^*_{0,0}(\id_{\h_0})=\id_{\h_0}$, so that $\pp_{i,\rho}(t_j<\infty)=1$ and $\ee_{i,\rho}(n_j)=\infty$ for any $i,j$ and $\rho$. In addition, $\ee_{i,\rho}(t_i)=\infty$ for any $i$ and $\rho$.

\section{Exit times and Dirichlet problems on finite domains}
\label{sec_exittimesDirichletproblems}

In this section, we consider a finite subset $D$ of $V$ and study whether, conditionally on starting with $x_0$ in $D$, the position process $(x_n)_n$ reaches the boundary $\bD$ of $D$ (which we define below) in finite time. We then study the related problem of solving Dirichlet problems of the type $(\id-\M^*)(Z)_i=A_i$ for every $i$ in $D$, with a boundary condition $Z_j=B_j$ for $j\in \bD$. Before we start, however, let us discuss shortly the Dirichlet problem on $V$. We consider an irreducible open quantum walk $\M$, fix $A=\sum_{i\in V} A_i\otimes \ketbra ii$ with $A_i$ in $\Bcal(\h_i)$ for all $i$, and look for a solution $Z$ of the equation $(\id-\M^*)(Z)=A.$
As in the classical case (see e.g. \cite{LawlerLimic}), the form of the solution differs, depending on the recurrence or transience of the OQW. We give here only a simple result in the transient case. We define the Dirichlet problem on $V$ with data $A$ as the following equation with unknown $Z$:
\begin{equation} \label{eq_pbDirichletV}
    (\id-\M^*)(Z)=A.
\end{equation}
The operators $\mfN_{j,i}$ as defined in \eqref{eq_defmfN2} play a central role in this section.

\begin{prop}\label{prop_probDirichletV}
    Let $\M$ be an open quantum walk such that $\ee_{i,\rho}(n_j)<\infty$ for any $i$, $j$ in $V$ and $\rho$ in $\mathcal S(\h_i)$. If we assume that $A=\sum_{i\in V} A_i\otimes \ketbra ii$ is such that for any~$i$ in $V$, $\sum_{j\in V}\|\mfN_{j,i}^*\big(A_j\big)\|<\infty$, then the operator
    \begin{equation} \label{eq_solutionpbDirichletV}
        Z=A+\sum_{i\in V} \big(\sum_{j\in V}\mfN_{j,i}^*(A_j)\big)\otimes \ketbra ii
    \end{equation}    
    satisfies \eqref{eq_pbDirichletV}.
    %If we only assume $A\geq 0$, then \eqref{eq_pbDirichletV} has a solution if and only if $Z_0$ in \eqref{eq_solutionpbDirichletV} is well-defined, this $Z_0$ is then a solution, and any other solution $Z$ satisfies $Z\geq Z_0$. 
    If in addition $\M$ is irreducible, then any two solutions of \eqref{eq_pbDirichletV} differ only by an operator $\lambda \id_{\mathcal H}$. 
\end{prop}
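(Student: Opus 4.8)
The plan is to verify directly that the operator $Z$ in \eqref{eq_solutionpbDirichletV} solves the Dirichlet problem, then to handle uniqueness separately using irreducibility. For the first part, I would write $Z = A + W$ where $W = \sum_{i\in V}\big(\sum_{j\in V}\mfN_{j,i}^*(A_j)\big)\otimes\ketbra ii$, and compute $(\id-\M^*)(Z)$. The key identity to establish is a ``one-step'' relation for the family $\mfN_{j,i}$: since $\mfN_{j,i}(\rho)=\sum_{\pi\in\cP(i,j)}L_\pi\rho L_\pi^*$ by \eqref{eq_defmfN2}, splitting each path according to its first step gives $\mfN_{j,i}=\sum_{k\in V}\mfN_{j,k}\circ(L_{k,i}\cdot L_{k,i}^*) + (L_{j,i}\cdot L_{j,i}^*)$, i.e. $\mfN_{j,i} = \big(\sum_k \mfN_{j,k}\circ \mathcal L_{k,i}\big) + \mathcal L_{j,i}$ where $\mathcal L_{k,i}(\rho)=L_{k,i}\rho L_{k,i}^*$; equivalently $\mathfrak N = \mathfrak P + \mathfrak P\circ\mathfrak N$ at the level of the ``big'' operators, or dually the statement that summing $\mfN_{j,i}^*(A_j)$ against $L_{k,i}^*\cdot L_{k,i}$ reproduces $\mfN$ up to the single-step term. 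Dualizing and summing over $j$, this should yield exactly $\M^*(W)_i = W_i - \sum_j \mfN_{j,i}^*(A_j) \cdot(\text{something}) $; more precisely I expect $\M^*(A+W) = W$, hence $(\id-\M^*)(Z) = A$. The summability hypothesis $\sum_j\|\mfN_{j,i}^*(A_j)\|<\infty$ is what makes $W$ well-defined and what licenses interchanging the sum over $j$ with the (strongly convergent) sum defining $\M^*$; I would be careful to justify this interchange, using that each $L_{k,i}^*\cdot L_{k,i}$ is a complete positivity-preserving contraction in the relevant sense and that $\sum_k L_{k,i}^*L_{k,i}=\id_{\h_i}$.

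For uniqueness, suppose $Z$ and $Z'$ both solve \eqref{eq_pbDirichletV}; then $Y = Z-Z'$ satisfies $\M^*(Y)=Y$, i.e. $Y$ is quantum harmonic. I would then invoke the martingale property of Lemma \ref{lemme_martingale}: for any initial $(x_0,\rho_0)$, the process $m_n = \tr(\rho_n Y_{x_n})$ is a bounded $\pp_{x_0,\rho_0}$-martingale (bounded because $\|Y_{x_n}\|\le\|Y\|$ and $\rho_n$ is a state). Under the standing transience hypothesis $\ee_{i,\rho}(n_j)<\infty$, the walk leaves every finite set, but that alone does not immediately pin down $Y$; the right tool is irreducibility. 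The classical analogue is that bounded harmonic functions on a transient irreducible chain need not be constant, so one cannot argue purely from transience — the argument must use the quantum structure. I would instead argue as follows: $Y^* Y$ or rather the positive part, combined with $\M^*(Y)=Y$ and complete positivity of $\M^*$, gives $\M^*(Y^*Y)\ge Y^*Y$ (Kadison–Schwarz), so $\tr(\rho_n (Y^*Y)_{x_n}) - |m_n|^2$ is a submartingale; boundedness forces $m_n$ to converge and the bracket to stabilize, which should force $\rho_n$ to concentrate on a common eigenspace structure of the $Y_i$ compatible under all transitions $L_{i,j}$ — an enclosure-type argument. Irreducibility then says the only such enclosure is everything, so each $Y_i$ is scalar, $Y_i = \lambda_i\id_{\h_i}$, and harmonicity plus $\sum_i L_{i,j}^*L_{i,j}=\id$ forces $\lambda_i\equiv\lambda$ constant.

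The main obstacle I anticipate is precisely this uniqueness step: turning ``$Y$ is a bounded quantum harmonic operator for an irreducible transient OQW'' into ``$Y=\lambda\id_{\mathcal H}$''. In the classical recurrent case one uses the maximum principle via recurrence, but here we are explicitly in the transient regime, so the argument has to exploit that the relevant enclosure (the set of ``diagonal'' vectors on which the martingale $m_n$ degenerates, or equivalently the joint near-eigenspaces of the $Y_i$ under all $L_\pi$) is stable under every $L_\pi\otimes\ketbra ji$ and hence, by Definition \ref{defi_irreducibility}, must be trivial or full. I would model this on the standard fact (from \cite{CP1}) that irreducibility forces the fixed-point space of $\M$ to be at most one-dimensional, and try to transport that reasoning to the dual side acting on bounded operators, perhaps by first reducing to finite-dimensional $\h_i$ (or by the quadratic-form considerations already used in Remark \ref{remark_apresCoro35}) and then using that a quantum harmonic $Y$ with $\M^*(Y^*Y)=Y^*Y$ generates an enclosure. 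The verification that $Z$ solves \eqref{eq_pbDirichletV}, by contrast, I expect to be a routine but slightly delicate resummation, the only real care being the justification of the rearrangement permitted by the summability assumption.
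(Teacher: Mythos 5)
Your verification that $Z$ solves \eqref{eq_pbDirichletV} is essentially the paper's argument. The one-step identity you isolate, $\mfN_{j,i}(\rho)=L_{j,i}\rho L_{j,i}^*+\sum_{k}\mfN_{j,k}\bigl(L_{k,i}\rho L_{k,i}^*\bigr)$, is exactly the path decomposition the paper uses (``one step from $i$ to a variable $j$, then some $\pi$ from $j$ to $k$ is exactly a path from $i$ to $k$ of length $\geq 2$''); dualizing and summing over $j$ under the hypothesis $\sum_j\|\mfN_{j,i}^*(A_j)\|<\infty$ gives $\M^*(A+W)=W$ and hence $(\id-\M^*)(Z)=A$, just as you say. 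That half is correct and matches the paper.

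The gap is in the uniqueness step. The paper disposes of it in one line: if $Y=Z'-Z$ then $\M^*(Y)=Y$, and the Perron--Frobenius theorem for irreducible positive maps on a C*-algebra (the reference \cite{Gro}) is invoked to conclude that the fixed-point space of the irreducible map $\M^*$ is $\cc\,\id_{\mathcal H}$. You correctly identify that this is the crux --- and correctly observe that transience alone cannot do the job, since bounded harmonic functions on a transient irreducible classical chain need not be constant --- but you never actually close the argument. Your proposed route (Kadison--Schwarz applied to $Y^*Y$, a bounded submartingale whose bracket ``should'' stabilize, concentration of $\rho_n$ on joint near-eigenspaces, then an enclosure argument) is left at the level of intentions: ``should force,'' ``I would try to transport that reasoning.'' Note in particular that the standard Kadison--Schwarz argument ($\M^*(Y^*Y)\geq Y^*Y$, then pair against an invariant state to force equality and land in the multiplicative domain) requires an invariant state, which is precisely what the transience hypothesis rules out; so the step from the submartingale inequality to an actual enclosure is the entire difficulty, and it is the one step you do not supply. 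As written, the uniqueness half of the proposition is not proved. The fix is either to cite the operator-algebraic Perron--Frobenius result for irreducible maps directly, as the paper does, or to genuinely carry out the reduction of ``$\M^*(Y)=Y$, $Y$ bounded'' to an invariant subspace statement that Definition \ref{defi_irreducibility} can kill --- neither of which your sketch accomplishes.
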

The form of $Z$ can be guessed by analogy with the classical case, so that this result is obtained by direct computation.
\smallskip

We will give analogous results for the Dirichlet problem on a bounded domain. We start by defining precisely the boundary $\bD$ of $D$ relative to an open quantum walk $\M$:
\begin{align*}
\bD &= \{i\in V\setminus D \, |\, \exists j\in D  \mbox{ with } L_{i,j}\neq 0\}.
\end{align*}
We say that $Z=\sum_{i\in V} Z_i\otimes \ketbra ii$ is a solution to the Dirichlet problem on $D$ with data $A$ and boundary condition $B$ if
\begin{equation}\label{eq_pbDirichletD}
    \left\{
    \begin{array}{r@{\ = \ }l@{\ }l}
        (\id-\M^*)(Z)_i& A_i & \mbox{ for }i\in D \\
        Z_j&B_j & \mbox{ for }j\in \bD.
    \end{array}
    \right.
\end{equation}
A key step in order to solve explicitly this equation will be to prove that the exit time for $D$, defined as
\[ t_{\bD} = \inf\{n\in \nn \, |\, x_n\in \bD \},\]
is $\pp_{i,\rho}$-almost-surely finite for any $i$ in $D$ and $\rho$ in $\mathcal S(\h_i)$.
Our main results are summarized in the following statement:
\begin{theo}\label{theo_exittimesDirichlet}
    Let $\M$ be a semifinite irreducible open quantum walk and let $D$ be a finite subset of $V$ such that $\bD\neq \emptyset$. Then for any $i$ in $D$ and any state $\rho$ on $\mathfrak h_i$,
    $$ \pp_{i,\rho}( t_{\bD} <  +\infty)=1.$$
    In addition, for any $A=\sum_{i\in D} A_i\otimes \ketbra ii$ and $B=\sum_{j\in \bD} B_j\otimes \ketbra jj$, the  Dirichlet problem \eqref{eq_pbDirichletD} has a solution,
    and any two solutions of \eqref{eq_pbDirichletD} differ by an operator with support in $\mathcal H_{V\setminus(D\cup \bD)}$.
\end{theo}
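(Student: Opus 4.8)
The plan is to prove the three assertions in order: first that $t_{\bD}<\infty$ almost surely, then existence of a solution to \eqref{eq_pbDirichletD}, and finally the uniqueness statement.

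\emph{Step 1: finiteness of the exit time.} The key idea is to restrict the open quantum walk to the ``absorbed'' dynamics on $D$. For $\pi=(i_0,\dots,i_\ell)$ a path staying inside $D$ except possibly at its endpoint, $L_\pi$ maps $\h_{i_0}$ to $\h_{i_\ell}$; introduce the completely positive contraction $\mathfrak Q_n(\tau)=\sum_{\pi}L_\pi\tau L_\pi^*$, the sum running over length-$n$ paths $(i_0,\dots,i_n)$ with $i_0,\dots,i_n\in D$. Then $\pp_{i,\rho}(t_{\bD}>n)=\tr\,\mathfrak Q_n(\rho)$, and $t_{\bD}<\infty$ a.s. from every $(i,\rho)$ is equivalent to $\tr\,\mathfrak Q_n(\rho)\to 0$ for every state $\rho$ supported on some $\h_i$, $i\in D$. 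Because $\M$ is semifinite and $D$ is finite, the space $\H_D=\bigoplus_{i\in D}\h_i\otimes\vec i$ is finite-dimensional, so $\mathfrak Q\deq\sum_{j\in D}L_{\cdot,\cdot}$ (the restriction of $\M$ to $\H_D$) is a completely positive map on a finite-dimensional space with $\|\mathfrak Q\|\le 1$; it suffices to show its spectral radius is $<1$, equivalently that $\mathfrak Q$ has no peripheral eigenvalue of modulus $1$. This is where irreducibility enters: if $\mathfrak Q^n(\rho)\not\to 0$ for some state $\rho$, a standard compactness/Perron--Frobenius argument on finite-dimensional positive maps produces a nonzero positive fixed point $\sigma$ of a peripheral part, whose support $W\subset D$ would be invariant under all $L_{i,j}$ with $i,j\in D$, i.e. would be a nontrivial ``enclosure'' of the OQW contained in $D$. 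But $\bD\neq\emptyset$ means some $L_{i,j}\neq 0$ with $j\in D$, $i\notin D$, and irreducibility (Definition \ref{defi_irreducibility}) forbids a proper subset of $V$ from being stable under all transitions; more precisely, irreducibility implies that from any $\varphi\in\h_j$, $j\in W$, the vectors $L_\pi\varphi$ are total in every $\h_k$, in particular they must reach outside $W$, contradicting invariance of $\supp\sigma$. Hence the spectral radius of $\mathfrak Q$ is $<1$, $\tr\,\mathfrak Q_n(\rho)\to 0$ exponentially, and $\pp_{i,\rho}(t_{\bD}<\infty)=1$.

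\emph{Step 2: existence of a solution.} Since $t_{\bD}<\infty$ a.s., one can write the solution by the usual probabilistic formula. Define for $i\in D$, $j\in\bD$ the completely positive maps $\mfP^D_{j,i}(\rho)=\sum_{\pi\in\cP^{D}(i,j)}L_\pi\rho L_\pi^*$ (paths from $i$ to $j$ staying in $D$ strictly between endpoints), which are well-defined and satisfy $\sum_{j\in\bD}\tr\,\mfP^D_{j,i}(\rho)=\pp_{i,\rho}(t_{\bD}<\infty)=1$ by Step 1, together with $\mathfrak E^D_i(\rho)=\sum_{n\ge 0}\sum_{\pi}L_\pi\rho L_\pi^*$ summed over all length-$n$ paths from $i$ that stay in $D$ (the Green's operator of the absorbed walk, finite because the spectral radius of $\mathfrak Q$ is $<1$). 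Then set
\begin{equation*}
Z_i = A_i + \sum_{k\in D}(\mathfrak E^D_{k,i})^*(A_k) + \sum_{j\in\bD}(\mfP^D_{j,i})^*(B_j) \quad (i\in D),\qquad Z_j=B_j\ (j\in\bD),
\end{equation*}
and $Z_i=0$ for $i\in V\setminus(D\cup\bD)$. A direct computation — splitting a path out of $i$ according to its first step, exactly as in the classical Dirichlet problem and as in Proposition \ref{prop_probDirichletV} — verifies $(\id-\M^*)(Z)_i=A_i$ for $i\in D$; concretely one checks $\mathfrak E^D=\id+\mathfrak Q^*\mathfrak E^D$ and $\mfP^D_{j,i}=\sum_{k\in D}L_{k,i}^*\mfP^D_{j,k}L_{k,i}+L_{j,i}^*L_{j,i}$ (with the convention for first-step-to-boundary), which together give the claim after reorganizing $\M^*(Z)_i=\sum_{k\in D}L_{k,i}^*Z_kL_{k,i}+\sum_{j\in\bD}L_{j,i}^*Z_jL_{j,i}$.

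\emph{Step 3: uniqueness up to $\H_{V\setminus(D\cup\bD)}$.} If $Z,Z'$ both solve \eqref{eq_pbDirichletD}, then $Y=Z-Z'$ satisfies $Y_j=0$ on $\bD$ and $(\id-\M^*)(Y)_i=0$ on $D$, i.e. $Y_i=\sum_{k\in D}L_{k,i}^*Y_kL_{k,i}+\sum_{j\in\bD}L_{j,i}^*Y_jL_{j,i}=\sum_{k\in D}L_{k,i}^*Y_kL_{k,i}$ for $i\in D$. Thus the restriction $(Y_i)_{i\in D}$ is a fixed point of $\mathfrak Q^*$ on $\B(\H_D)$; since the spectral radius of $\mathfrak Q$ (hence of $\mathfrak Q^*$) is $<1$, this forces $Y_i=0$ for all $i\in D$, and also $Y_j=0$ for $j\in\bD$. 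The components $Y_i$ for $i\in V\setminus(D\cup\bD)$ are entirely unconstrained by \eqref{eq_pbDirichletD}, so $Y$ has support in $\H_{V\setminus(D\cup\bD)}$, as claimed.

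\emph{Main obstacle.} The crux is Step 1: turning irreducibility (a statement about totality of $\{L_\pi\varphi\}$ over \emph{all} of $V$) into the statement that the restricted map $\mathfrak Q$ on the finite-dimensional space $\H_D$ is strictly contractive. The delicate point is that irreducibility is a global hypothesis, so one must argue carefully that a peripheral eigenvector of $\mathfrak Q$ would have support forming an enclosure that cannot ``leak'' to $\bD$, contradicting $\bD\neq\emptyset$ together with Davies irreducibility. Once strict contractivity is in hand, Steps 2 and 3 are the expected bookkeeping generalizing the classical Dirichlet theory, essentially identical in spirit to Proposition \ref{prop_probDirichletV}.
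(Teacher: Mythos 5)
Your proposal is correct in substance but reaches the two nontrivial conclusions by a genuinely different route. For the almost-sure finiteness of $t_\bD$, the paper (Proposition \ref{prop_sortieps}) sets $p=\inf_{i\in D}\inf_{\rho}\pp_{i,\rho}(t_\bD<\infty)$ and proves separately that $p>0$ (irreducibility produces an exiting path from any pure state; compactness of $\mathcal S(\h_i)$ and finiteness of $D$ give a uniform bound) and that $p\in\{0,1\}$ (strong Markov property). You instead show that the killed map $\mathfrak Q$ on the finite-dimensional space $\H_D$ has spectral radius $<1$: a Perron--Frobenius eigenstate $\sigma$ with $\mathfrak Q(\sigma)=\sigma$ would satisfy $\tr\,\mathfrak Q(\sigma)=\tr\,\sigma$, which by the stochasticity relation \eqref{eq_stochastic} forces $L_{i,j}$ to vanish on $\supp\,\sigma(j)$ for every $i\notin D$, so that $\bigoplus_{j\in D}\supp\,\sigma(j)$ is a proper nonzero enclosure, contradicting irreducibility. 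This is the same Evans--H\o egh-Krohn machinery the paper uses elsewhere (Proposition \ref{prop_EnPt}), and it buys more than the paper's argument: exponential decay of $\pp_{i,\rho}(t_\bD>n)$ (hence finiteness of $\ee_{i,\rho}(t_\bD)$), and an immediate Step 3, since the difference of two solutions restricted to $D$ is a fixed point of the strict contraction $\mathfrak Q^*$ --- where the paper instead uses optional sampling of the stopped martingale $\tr(\rho_n Y_{x_n})$. Your existence step coincides with the paper's Proposition \ref{prop_pbDirichletD}. Two points to tighten: when invoking irreducibility in Step 1, argue at the level of subspaces (enclosures) rather than ``subsets of $V$ stable under all transitions'', making explicit the no-leakage consequence of trace preservation sketched above; and in Step 2 your $\mathfrak E^D$ is declared with $n\geq 0$ (consistent with your identity $\mathfrak E^D=\id+\mathfrak Q^*\mathfrak E^D$), which double-counts the standalone $A_i$ in your formula for $Z_i$ --- drop either the $n=0$ term or the explicit $A_i$.
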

The steps in order to prove this, and related results, are described in Subsection \ref{subsec_exit_times_irreducible_case} and Subsection \ref{subsec_dirichlet_problems_irreducible_case}.

\subsection{Exit times: the irreducible case} % (fold)
\label{subsec_exit_times_irreducible_case}

% subsection exit_times_and_dirichlet_problems_irreducible_cas_epsilon_ (end)
We now focus on the particular case where the OQW is irreducible. Our first technical result, which plays an analogous role to Proposition \ref{prop_ReptjT}, is the following:
\begin{prop} \label{prop_sigmaDfinite}
    Let $\M$ be an open quantum walk and let $D$ be a finite subset of $V$ such that $\bD\neq \emptyset$. Then there exists a family $(\mfPD_{j,i})_{i\in D,j\in D\cup\bD}$, where $\mfPD_{j,i}$ is a completely positive linear contraction from $\mathcal I_1(\h_i)$ to $\mathcal I_1(\h_j)$, such that each 
    \[\mfPD_{i}=\sum_{j\in \bD}\mfPD_{j,i}\]
    is again a completely positive linear contraction from $\mathcal I_1(\h_i)$ to $\mathcal I_1(\h_\bD)$. Moreover, for any $i$ in $D$, $j$ in $D\cup\bD$ and any $\rho$ in $\mathcal S(\h_i)$, one has
    \[\pp_{i,\rho}(t_j\leq t_\bD<\infty)=\tr\, \mfP^D_{j,i}(\rho), \qquad \pp_{i,\rho}(t_\bD<\infty)=\tr\, \mfP^D_{i}(\rho).\]
\end{prop}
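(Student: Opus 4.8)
The plan is to mimic the proof of Proposition~\ref{prop_ReptjT}: write each of the two probabilities as a sum, over a suitable prefix-free family of paths, of numbers $\tr\,L_\pi\rho L_\pi^*$, and then gather the corresponding Kraus operators $L_\pi$ into completely positive maps, checking convergence and contractivity exactly as there. Concretely, for a boundary point $k\in\bD$ I would put $\mfPD_{k,i}(\rho)=\sum_{\pi\in\cP^D(i,k)}L_\pi\rho L_\pi^*$. Distinct paths in $\bigcup_{k\in\bD}\cP^D(i,k)$ index pairwise disjoint cylinder events (on $\{x_0=i_0,\dots,x_\ell=i_\ell\}$ one automatically has $t_\bD=\ell$), so for $\rho\geq0$ every finite partial sum of the numbers $\tr\,L_\pi\rho L_\pi^*$ is bounded by $\tr\,\rho$; hence, as in the proof of Proposition~\ref{prop_ReptjT}, the increasing trace-bounded partial sums of the positive operators $L_\pi\rho L_\pi^*$ converge in $\mathcal I_1(\h_k)$, the resulting map $\mfPD_{k,i}$ (extended by linearity) is completely positive as a norm-limit of completely positive maps, and $\tr\,\mfPD_{k,i}(\rho)\leq\tr\,\rho$ for $\rho\geq0$, so by Russo--Dye it is a completely positive contraction. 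Computing the trace gives $\tr\,\mfPD_{k,i}(\rho)=\pp_{i,\rho}(t_\bD<\infty,\ x_{t_\bD}=k)$, which equals $\pp_{i,\rho}(t_k\leq t_\bD<\infty)$ since a walk started in $D$ stays in $D$ at all times before $t_\bD$, so $t_k\geq t_\bD$ for $k\in\bD$. Finally $\mfPD_i:=\sum_{k\in\bD}\mfPD_{k,i}$, a direct sum of maps with mutually orthogonal ranges in $\h_\bD$, is again a completely positive contraction $\mathcal I_1(\h_i)\to\mathcal I_1(\h_\bD)$, and $\tr\,\mfPD_i(\rho)=\pp_{i,\rho}(t_\bD<\infty)$ by summing over $k$. (The same construction with any vertex $\ell\in D$ in place of $i$ also provides the exit maps $\mfPD_\ell$ used below.)

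For an interior point $j\in D$ I would first introduce the ``reach $j$ before exiting'' map $\tcP_{j,i}(\rho)=\sum_{\pi\in\cP^{D\setminus\{j\}}(i,j)}L_\pi\rho L_\pi^*$, which by the same reasoning is a completely positive contraction $\mathcal I_1(\h_i)\to\mathcal I_1(\h_j)$ with $\tr\,\tcP_{j,i}(\rho)=\pp_{i,\rho}(t_j<t_\bD)$. Since $\mfPD_j$ is a completely positive contraction, $E_j:=(\mfPD_j)^*(\id_{\h_\bD})\in\B(\h_j)$ satisfies $0\leq E_j\leq\id_{\h_j}$, so I can define $\mfPD_{j,i}(\rho):=E_j^{1/2}\,\tcP_{j,i}(\rho)\,E_j^{1/2}$; this is still a completely positive contraction, it takes values in $\mathcal I_1(\h_j)$, and $\tr\,\mfPD_{j,i}(\rho)=\tr\big(E_j\,\tcP_{j,i}(\rho)\big)=\tr\,\mfPD_j\big(\tcP_{j,i}(\rho)\big)$. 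It then remains to identify this last trace with $\pp_{i,\rho}(t_j\leq t_\bD<\infty)$. On the event $\{t_j\leq t_\bD<\infty\}$ one has in fact $t_j<t_\bD$ because $j\in D$; moreover $t_j$ is a stopping time with $\{t_j<t_\bD\}\in\Fcal_{t_j}$ and $t_\bD=t_j+t_\bD\circ\theta_{t_j}$ there, so the strong Markov property of $(x_n,\rho_n)_n$ gives $\pp_{i,\rho}(t_j\leq t_\bD<\infty)=\ee_{i,\rho}[\ind_{\{t_j<t_\bD\}}\,\pp_{j,\rho_{t_j}}(t_\bD<\infty)]=\ee_{i,\rho}[\ind_{\{t_j<t_\bD\}}\,\tr\,\mfPD_j(\rho_{t_j})]$. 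Splitting this expectation according to the path realized up to $t_j$ — which on $\{t_j<t_\bD\}$ belongs to $\cP^{D\setminus\{j\}}(i,j)$ and carries unnormalised conditional state $L_\pi\rho L_\pi^*$ — and using linearity and $\mathcal I_1$-continuity of $\tr\circ\mfPD_j$ together with the convergence of the series defining $\tcP_{j,i}$, one rewrites the right-hand side as $\tr\,\mfPD_j\big(\sum_\pi L_\pi\rho L_\pi^*\big)=\tr\,\mfPD_j\big(\tcP_{j,i}(\rho)\big)$, which finishes the argument.

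The convergence of the path series, complete positivity, and contractivity (via Russo--Dye) are routine and essentially copied from the proof of Proposition~\ref{prop_ReptjT}. The part that needs genuine care is the interior case: forcing $\mfPD_{j,i}$ to land in $\mathcal I_1(\h_j)$ rather than in $\mathcal I_1(\h_\bD)$ — which is why one conjugates $\tcP_{j,i}$ by $E_j^{1/2}$ instead of simply composing $\mfPD_j\circ\tcP_{j,i}$ — and making the strong-Markov decomposition at the stopping time $t_j$, together with the interchange of the path sum with $\mfPD_j$, fully rigorous.
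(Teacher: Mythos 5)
Your construction is correct, and for boundary points $j\in\bD$ it is exactly the paper's: the paper defines $\mfPD_{j,i}(\rho)=\sum_{\pi\in\cP^{D\setminus\{j\}}(i,j)} L_\pi\rho L_\pi^*$ for \emph{every} $j\in D\cup\bD$ (this is the expression recorded in the Remark following the proposition), obtains convergence, complete positivity and contractivity from the same trace bound and Banach--Steinhaus argument you invoke, and reads the probabilities off the disjoint-cylinder decomposition. Where you genuinely depart from the paper is at interior points $j\in D$. There the bare path sum --- your $\tcP_{j,i}$ --- has trace $\pp_{i,\rho}(t_j<t_\bD)$, with no requirement that the walk subsequently exits $D$; the paper tacitly identifies this with $\pp_{i,\rho}(t_j\le t_\bD<\infty)$, which is harmless where the proposition is actually applied (irreducible semifinite walks, for which Proposition \ref{prop_sortieps} gives $t_\bD<\infty$ almost surely) but fails for a general OQW: in Example \ref{ex_exampleB} with $D=\{0,1\}$, $j=1$, $\rho=\ketbra{e_1}{e_1}$, the path sum has trace $1$ while $\pp_{0,\rho}(t_1\le t_\bD<\infty)=0$. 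Your conjugation by $E_j^{1/2}$ with $E_j=(\mfPD_j)^*(\id_{\h_\bD})$ repairs exactly this and makes the stated identity hold verbatim for arbitrary open quantum walks, at the cost of a strong-Markov decomposition at $t_j$; that is a genuine gain in generality. The trade-off is that your $\mfPD_{j,i}$ for $j\in D$ is \emph{not} the operator the paper relies on afterwards: the Kraus expression of the Remark, Equation \eqref{eq_erhotj1}, and above all Lemma \ref{lemme_mfND} --- where $\mfND_{j,i}=(\id-\mfPD_{j,j})^{-1}\circ\mfPD_{j,i}$ and \eqref{eq_mfND} rest on concatenating the un-weighted path sums --- all require the undecorated $\tcP_{j,i}$. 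So if you keep your definition, you should record $\tcP_{j,i}$ separately as the object to be used downstream.
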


\begin{remark} Again, a byproduct of our proof will be the expression
    \[\mfP_{j,i}^D(\rho)= \sum_{\pi\in\cP^{D\setminus\{j\}}(i,j)}  L_\pi \rho L_\pi^*.\]
    We also obtain the relations
    \begin{align} 
        \ee_{i,\rho}(\rho_{t_j}\,|\, {t_j \leq t_\bD})=\frac{\mfPD_{j,i}(\rho)}{\tr\big(\mfPD_{j,i}(\rho)\big)} \label{eq_erhotj1} \\
        \ee_{i,\rho}(\rho_{t_\bD})=\frac{\mfPD_{i}(\rho)}{\tr\big(\mfPD_{i}(\rho)\big)}. \label{eq_erhotj2}
    \end{align}
\end{remark}
The first part of Theorem \ref{theo_exittimesDirichlet} is shown in the following proposition. Apart from having its own interest, it will be a key step in solving the Dirichlet problem:
\begin{prop} \label{prop_sortieps}
    Let $\M$ be an irreducible open quantum walk and let $D$ be a finite subset of $V$ such that $\bD\neq \emptyset$. Then, for any $i$ in $D$ such that $\mathrm{dim}\,\h_i<\infty$ and any state $\rho$ on $\mathfrak h_i$, one has
    \[\pp_{i,\rho}( t_{\bD} <  +\infty)=1.\]
\end{prop}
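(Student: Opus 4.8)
The plan is to use the representation from Proposition \ref{prop_sigmaDfinite}, namely $\pp_{i,\rho}(t_\bD<\infty)=\tr\,\mfP^D_i(\rho)$, and to show that this trace equals $1$ for every $i\in D$ with $\dim\h_i<\infty$ and every state $\rho$. Equivalently, since $\mfP^D_i$ is a completely positive linear contraction from $\mathcal I_1(\h_i)$ to $\mathcal I_1(\h_\bD)$, it suffices to show $(\mfP^D_i)^*(\id_{\h_\bD})=\id_{\h_i}$; indeed $(\mfP^D_i)^*(\id_{\h_\bD})\le\id_{\h_i}$ always holds, and $\tr\,\mfP^D_i(\rho)=\tr\big(\rho\,(\mfP^D_i)^*(\id_{\h_\bD})\big)$, so $\tr\,\mfP^D_i(\rho)=1$ for all $\rho$ iff this adjoint sends identity to identity. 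So I would reduce the statement to the operator identity $(\mfP^D_i)^*(\id_{\h_\bD})=\id_{\h_i}$ on the finite-dimensional space $\h_i$.

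Next I would set up a supermartingale / complement argument. For the exit time $t_\bD$, consider the ``escape probability'' operators: on $D$, define $R_i=\id_{\h_i}-(\mfP^D_i)^*(\id_{\h_\bD})\ge 0$, which measures (via $\tr(\rho R_i)$) the probability $\pp_{i,\rho}(t_\bD=\infty)$ of never exiting $D$. Using the path decomposition behind Proposition \ref{prop_sigmaDfinite} (a path from $i$ that has not yet left $D$ either leaves at the next step, or moves to another site of $D$), one gets the relation $R_i=\sum_{k\in D}L_{k,i}^*R_k L_{k,i}$ for each $i\in D$; that is, the family $(R_i)_{i\in D}$ is ``harmonic for the sub-OQW restricted to $D$'' (the OQW with all transitions out of $D$ deleted). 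Since $D$ is finite and each $\h_i$, $i\in D$, need not be finite-dimensional, I would restrict attention to the enclosure generated by the finite-rank part; but the cleanest route is: let $\lambda=\max_{i\in D}\|R_i\|$ and let $i_0\in D$, $\varphi\in\h_{i_0}$ a unit vector, be such that $\langle\varphi,R_{i_0}\varphi\rangle$ is close to $\lambda$ (using $\dim\h_{i_0}<\infty$ to get an actual maximizer if $i_0$ has finite dimension — more care is needed if some sites in $D$ are infinite-dimensional, but the statement only asserts the conclusion for finite-dimensional $i$, and one can work with $\sup$'s). From $\langle\varphi,R_{i_0}\varphi\rangle=\sum_{k\in D}\langle L_{k,i_0}\varphi,R_k L_{k,i_0}\varphi\rangle\le\lambda\sum_{k\in D}\|L_{k,i_0}\varphi\|^2\le\lambda\,\langle\varphi,\varphi\rangle=\lambda$ (using $\sum_{k\in V}L_{k,i_0}^*L_{k,i_0}=\id$, hence $\sum_{k\in D}L_{k,i_0}^*L_{k,i_0}\le\id$), equality forces $L_{k,i_0}\varphi=0$ for all $k\in V\setminus D$ and $R_kL_{k,i_0}\varphi$ ``as large as possible'' for $k\in D$, i.e. $L_{k,i_0}\varphi$ lies in the top eigenspace of $R_k$. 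Iterating along paths, one finds that the vectors $L_\pi\varphi$, for $\pi$ a path inside $D$ from $i_0$, all avoid $\bD$ in the sense that $L_{j,k}(L_\pi\varphi)=0$ for $j\in\bD$, $k\in D$; combined with the fact that for $\varphi$ in the top eigenspace the ``mass'' $\sum_k\|L_{k,i_0}\varphi\|^2$ over $k\in D$ equals $\|\varphi\|^2$, the span of all such $L_\pi\varphi$ is a nonzero subspace of $\bigoplus_{i\in D}\h_i$ on which the OQW never transitions to $\bD$.

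The heart of the argument — and the place I expect the main obstacle — is then to derive a contradiction with irreducibility when $\lambda>0$. Since $\bD\neq\emptyset$, pick $j\in\bD$ and $k\in D$ with $L_{j,k}\neq 0$; irreducibility (Definition \ref{defi_irreducibility}) says that for the unit vector $\varphi\in\h_{i_0}$ found above, the set $\{L_\pi\varphi\mid \pi\in\cP(i_0,k)\}$ is total in $\h_k$. The subtlety is that irreducibility quantifies over \emph{all} paths in $V$, not only paths staying in $D$, whereas our ``escape-free'' subspace is built from $D$-paths only. To bridge this, I would argue that the subspace $\Kcal=\bigoplus_{i\in D}\Kcal_i\subset\Hcal_D$ spanned by $\{L_\pi\varphi\mid \pi\in\cP^{D}(i_0,\cdot)\text{ inside }D\}$ is invariant in the appropriate sense, and that because no transition from $\Kcal_k$ reaches $\bD$ while $\Kcal_k$ is a ``large'' subspace of $\h_k$, the total set condition of irreducibility is violated at the pair $(i_0,k)$ — unless $\Kcal_k=\{0\}$ for the relevant $k$, which contradicts $\varphi\neq 0$ being reachable. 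Making ``$\Kcal$ is an enclosure / invariant'' precise is the delicate part: one needs that whenever $\psi\in\Kcal_k$ is in the top eigenspace of $R_k$, then $L_{j,k}\psi=0$ for $j\in\bD$ and $L_{m,k}\psi\in\Kcal_m$ (top eigenspace of $R_m$) for $m\in D$ — which is exactly the content of the equality case analyzed above — so $\Kcal$ together with the restricted transitions forms a genuine subsystem never hitting $\bD$; then since $D$ is finite this subsystem, restricted to $D$, cannot communicate with $\bD$, contradicting the assumption $\bD\neq\emptyset$ together with irreducibility (which, for finite $D$, forces some $L_\pi\varphi$ with $\pi$ leaving $D$ through $\bD$, and by the total-set property this mass must be ``felt'' starting from $i_0$). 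Hence $\lambda=0$, i.e. $R_i=0$ and $\pp_{i,\rho}(t_\bD<\infty)=1$. I would also note that the finite-dimensionality of $\h_i$ (for the specific $i$ in the statement) is used to guarantee the maximizer $\varphi$ exists and that the iteration stabilizes; for infinite-dimensional sites in $D$ one would pass to suprema and use density arguments, but that is not needed for the stated conclusion.
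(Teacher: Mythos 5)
Your argument is essentially correct but follows a genuinely different route from the paper. The paper's proof is probabilistic: it sets $p=\inf_{i\in D}\inf_{\rho}\pp_{i,\rho}(t_\bD<\infty)$, shows $p>0$ by irreducibility plus compactness of the state spaces, and shows $p\in\{0,1\}$ by the strong Markov property ($1-p\le(1-p)^2$), whence $p=1$. You instead run an operator-theoretic maximum principle on the non-exit operators $R_i=\id_{\h_i}-(\mfP^D_i)^*(\id_{\h_\bD})$: your harmonicity relation $R_i=\sum_{k\in D}L_{k,i}^*R_kL_{k,i}$ is correct (it uses that from $D$ the walk can only jump into $D\cup\bD$, which is exactly the definition of $\bD$), and the equality-case analysis at a norm-attaining vector $\varphi$ for $\lambda=\max_{i\in D}\|R_i\|$ does force $L_{k,i_0}\varphi=0$ for $k\notin D$ and $L_{k,i_0}\varphi\in\ker(\lambda-R_k)$ for $k\in D$, which propagates along all $D$-paths. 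The step you flag as delicate is actually immediate and you need not construct an enclosure: any $\pi\in\cP(i_0,j)$ with $j\in\bD$ has a first step leaving $D$, and by your equality case that step annihilates the image of $\varphi$, so $\{L_\pi\varphi\mid\pi\in\cP(i_0,j)\}=\{0\}$ for every $j\in\bD$, contradicting Definition \ref{defi_irreducibility} directly at the pair $(i_0,j)$ (note $\h_j\neq\{0\}$ since $j\in\bD$). Hence $\lambda=0$ and $R_i=0$. Your approach buys an explicit structural statement (the $\lambda$-eigenspaces of the $R_k$ form a sub-system that never exits $D$), essentially a quantum maximum principle, whereas the paper's $0$--$1$ argument is shorter and avoids eigenvector analysis; both require finite-dimensionality of the fibres over $D$ (you for the existence of a norm-attaining eigenvector, the paper for compactness of $\mathcal S(\h_i)$), so your caveat about infinite-dimensional sites is no worse than the paper's own implicit restriction.
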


The main consequence of Proposition \ref{prop_sortieps} is the following:
\begin{lemme} \label{lemme_mfND}
    Under the assumptions of Proposition \ref{prop_sortieps}, for any $j$ in $D$ such that $\mathrm{dim}\,\h_j<\infty$, the map $\mfPD_{j,j}$ has norm $\|\mfPD_{j,j}\|<1$. For any $i$ in $D$ one can define $\mfND_{j,i}=(\id - \mfPD_{j,j})^{-1}\circ \mfPD_{j,i}$. Then, defining
    \[n_j^D= \mathrm{card}\{n \leq t_\bD \, |\, x_n=j\},\] 
    one has for any $\rho$ in $\mathcal S(\h_i)$ the identity
    \begin{equation} \label{eq_eeirho}
        \ee_{i,\rho}(n_j^D)=\tr\, \mfND_{j,i}(\rho ).
    \end{equation}
%    In addition, for any $i$ in $D$, any $\rho$ in $\mathcal S(\h_i)$ one has $\pp_{i,\rho}(t_i\leq t_\bD)<1$.
\end{lemme}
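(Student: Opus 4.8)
The plan is to derive everything from Proposition~\ref{prop_sortieps} together with the (by now familiar) fact that a trajectory of the position process \emph{killed outside $D$} decomposes uniquely into its successive excursions away from $j$, exactly as in the proof of Proposition~\ref{prop_ReptjT}. The key step is a ``concatenation'' identity: for every $i\in D$, every $\rho\in\mathcal S(\h_i)$ and every $k\geq 0$,
$$
\pp_{i,\rho}\big(n_j^D\geq k+1\big)=\tr\Big((\mfPD_{j,j})^{k}\circ\mfPD_{j,i}(\rho)\Big).
$$
To see it, note that by Proposition~\ref{prop_sortieps} (applicable since $\dim\h_i<\infty$) one has $t_{\bD}<\infty$ $\pp_{i,\rho}$-almost surely, so that ``being in $D$ up to a given time'' coincides with ``being before $t_{\bD}$''. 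On the event $\{n_j^D\geq k+1\}$ the trajectory then splits uniquely into the segment from $i$ to the first visit of $j$ — a path in $\cP^{D\setminus\{j\}}(i,j)$ — followed by $k$ excursions from $j$ to $j$ remaining in $D$ — paths in $\cP^{D\setminus\{j\}}(j,j)$; summing the weights $\tr\big(L_{\pi_k}\cdots L_{\pi_0}\,\rho\,L_{\pi_0}^*\cdots L_{\pi_0}^*\big)$ over all such decompositions and recognising the expressions of $\mfPD_{j,i}$ and $\mfPD_{j,j}$ from the remark following Proposition~\ref{prop_sigmaDfinite} gives the claim (equivalently, one iterates the strong Markov property for $(x_n,\rho_n)_n$ using the reset formula~\eqref{eq_erhotj1}). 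Taking $i=j$ this reads $\tr\big((\mfPD_{j,j})^{n}(\rho)\big)=\pp_{j,\rho}(n_j^D\geq n)$ for all $n\geq 1$. Since $n_j^D\leq t_{\bD}$ and, again by Proposition~\ref{prop_sortieps} (using $\dim\h_j<\infty$ and $j\in D$), $t_{\bD}<\infty$ $\pp_{j,\rho}$-almost surely, the right-hand side tends to $0$ as $n\to\infty$; as $(\mfPD_{j,j})^{n}(\rho)\geq0$ this means $\|(\mfPD_{j,j})^{n}(\rho)\|_1\to0$ for every state $\rho$, hence, by linearity and $\dim\h_j<\infty$, $\|(\mfPD_{j,j})^{n}\|\to0$. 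In particular the completely positive contraction $\mfPD_{j,j}$ has spectral radius strictly less than $1$, so $\id-\mfPD_{j,j}$ is boundedly invertible on $\mathcal I_1(\h_j)$ with $(\id-\mfPD_{j,j})^{-1}=\sum_{k\geq0}(\mfPD_{j,j})^{k}$, and $\mfND_{j,i}=\sum_{k\geq0}(\mfPD_{j,j})^{k}\circ\mfPD_{j,i}$ is a well-defined bounded completely positive map.

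For the identity~\eqref{eq_eeirho} it then remains only to sum the concatenation identity: by the tail-sum formula for $\mathbb N\cup\{\infty\}$-valued variables and monotone convergence,
$$
\ee_{i,\rho}(n_j^D)=\sum_{k\geq1}\pp_{i,\rho}(n_j^D\geq k)=\sum_{k\geq0}\tr\Big((\mfPD_{j,j})^{k}\circ\mfPD_{j,i}(\rho)\Big)=\tr\,\mfND_{j,i}(\rho),
$$
the exchange of $\tr$ with the series being legitimate because each term is positive and $\sum_k(\mfPD_{j,j})^{k}$ converges in operator norm; in particular $\ee_{i,\rho}(n_j^D)<\infty$.

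The step that really carries the content is the concatenation identity, i.e. checking that the law of the $D$-killed walk factorises over $j$-excursions: one must verify both that the decomposition is well defined (each piece genuinely lies in $\cP^{D\setminus\{j\}}$, which is precisely where ``$t_{\bD}<\infty$ a.s.'' enters) and that it preserves probabilities (the strong Markov property of $(x_n,\rho_n)_n$ together with~\eqref{eq_erhotj1}); all the structure of $\mfND_{j,i}$ is inherited from this, just as for Proposition~\ref{prop_ReptjT}. One caveat about the stated form $\|\mfPD_{j,j}\|<1$: what the argument directly produces is that the \emph{powers} of $\mfPD_{j,j}$ vanish, equivalently $\|(\mfPD_{j,j})^{n_0}\|<1$ for some $n_0$ (equivalently, spectral radius $<1$), which is exactly what is needed to define $\mfND_{j,i}$ and to run the remainder of the section; the strict contraction already at $n_0=1$ need not hold, as $\mfPD_{j,j}$ may have operator norm $1$ while still being power-vanishing — this happens whenever some pure internal state at $j$ forces the first $D$-excursion to return to $j$ with probability one, the returned state leaving $D$ with positive probability only on a later excursion.
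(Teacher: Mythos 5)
Your proof is correct, and on the one point that carries the content it takes a genuinely different route from the paper's. The paper tries to establish $\|\mfPD_{j,j}\|<1$ directly: by irreducibility it produces a path $\pi\in\cP^D(j,k)$ with $k\in\bD$ and $\tr\, L_\pi\rho L_\pi^*\neq0$, extracts a subpath $\pi'\in\cP^{D\setminus\{j\}}(j,k')$, claims $\tr\, L_{\pi'}\rho L_{\pi'}^*\neq0$, deduces $\pp_{j,\rho}(t_j\leq t_\bD)<1$ for every state, and concludes by compactness of $\Scal(\h_j)$. You instead use Proposition \ref{prop_sortieps} to get $n_j^D<\infty$ almost surely, deduce $\|(\mfPD_{j,j})^n\|\to0$ (spectral radius strictly less than one), and define $\mfND_{j,i}$ via the norm-convergent Neumann series before summing tail probabilities into \eqref{eq_eeirho}. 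Your route is slightly longer but only uses the exit-time result, whereas the paper's is a one-line path argument.

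Your closing caveat is in fact a correction, not just a caveat. The paper's subpath step is flawed: from $L_\pi\varphi\neq0$ one only gets $L_{\pi'}\psi\neq0$ for the propagated vector $\psi=L_{\pi''}\varphi$, not $L_{\pi'}\varphi\neq0$, and the conclusion $\|\mfPD_{j,j}\|<1$ is false in general. The paper's own Example \ref{ex_exampleC} with $D=\{0\}$ (so $\bD=\{1\}$) is a counterexample: the only path in $\cP^{D\setminus\{0\}}(0,0)$ is the single step $(0,0)$, so $\mfPD_{0,0}(\rho)=L_{0,0}\rho L_{0,0}^*$ and $\tr\,\mfPD_{0,0}(\ketbra{e_2}{e_2})=\|L_{0,0}e_2\|^2=1$, hence $\|\mfPD_{0,0}\|=1$ even though the walk is irreducible and $\dim\h_0<\infty$; here $(\mfPD_{0,0})^2=0$, which is exactly the power-vanishing phenomenon you describe. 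So the correct statement is the one you prove (some power of $\mfPD_{j,j}$ has norm less than one), it suffices for everything downstream, and your concatenation identity plus the tail-sum formula gives a complete proof of the substantive claims of the lemma. Two trivial points: in the weight of a concatenated path the adjoints should read $L_{\pi_0}^*\cdots L_{\pi_k}^*$, and the definition of $n_j^D$ should be read with $n\geq1$, consistently with the earlier definition of $n_j$, for \eqref{eq_eeirho} to hold when $i=j$.
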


\begin{remark}
    Under the assumptions of Lemma \ref{lemme_mfND}, the operator $\mfND$ satisfies
    \begin{equation} \label{eq_mfND}
        \mfND_{j,i}(\rho)= \sum_{\pi\in\mathcal P^D(i,j)} L_\pi \rho L_\pi ^*,
    \end{equation} 
    which shows in particular the obvious relation $\mfND_{j,i}=\mfPD_{j,i}$ for $i\in D$, $j\in \bD$.
\end{remark}

\subsection{Dirichlet problems on $D$: the irreducible case} % (fold)
\label{subsec_dirichlet_problems_irreducible_case}

% subsection dirichlet_problems_irreducible_case (end)
% subsection subsection_name (end)

We now turn to the Dirichlet problem on a finite domain $D$. Recall  that $Z=\sum_{i\in V} Z_i\otimes \ketbra ii$ is said to be a solution to the Dirichlet problem on $D$ with data $A$ and boundary condition $B$ if it is a solution of Equation \eqref{eq_pbDirichletD}, which we recall:
\begin{equation*} %\label{eq_pbDirichletD}
    \left\{
    \begin{array}{r@{\ = \ }l@{\ }l}
        (\id-\M^*)(Z)_i& A_i & \mbox{ for }i\in D \\
        Z_j&B_j & \mbox{ for }j\in \bD.
    \end{array}
    \right.
\end{equation*}
The solution to these equations has a very simple form, now that we have introduced the operators~$\mfND_{j,i}$ and $\mfPD_{j,i}$.
\begin{prop} \label{prop_pbDirichletD}
    Let $\M$ be an irreducible, semifinite open quantum walk and let $D$ be a finite subset of $V$ such that $\bD\neq \emptyset$. For any
    \[A=\sum_{i\in D} A_i\otimes \ketbra ii\quad \mbox{ and } \quad B=\sum_{j\in \bD} B_j\otimes \ketbra jj,\]
    the operator 
    \[Z= A+B+\sum_{i\in D}\Big(\sum_{j\in D} \mfN^{D\,*}_{j,i}(A_j)+\sum_{j\in \bD} \mfP^{D\,*}_{j,i}(B_j)\Big)\otimes \ketbra ii\]
    is well-defined, and is a solution to the Dirichlet problem \eqref{eq_pbDirichletD}. Any two solutions of \eqref{eq_pbDirichletD} differ by an operator with support in $\mathcal H_{V\setminus(D\cup \bD)}$.
\end{prop}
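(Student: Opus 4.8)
The plan is to verify by direct computation that the displayed operator $Z$ solves the Dirichlet problem \eqref{eq_pbDirichletD}, using the path expansions of $\mfN^D_{j,i}$ and $\mfP^D_{j,i}$ recorded after Lemma \ref{lemme_mfND} and Proposition \ref{prop_sigmaDfinite} together with a decomposition of each path according to its first step; and then to obtain the uniqueness statement by iterating the interior equation and invoking the a.s.\ finiteness of $t_\bD$ from Proposition \ref{prop_sortieps}. First, $Z$ is well-defined as a bounded operator: since $\M$ is semifinite and $D$ is finite, each $\h_i$ with $i\in D$ is finite-dimensional and $D$ is a finite index set, so Lemma \ref{lemme_mfND} makes the maps $\mfN^D_{j,i}$ ($i,j\in D$) bounded and the sums $\sum_{j\in D}\mfN^{D\,*}_{j,i}(A_j)$ finite; and although $\bD$ need not be finite, the inequality $\sum_{j\in\bD}\mfP^{D\,*}_{j,i}(\id_{\h_j})=\mfP^{D\,*}_{i}(\id_{\h_\bD})\le\id_{\h_i}$, which holds because $\mfP^D_{i}$ is a completely positive contraction, shows that $\sum_{j\in\bD}\mfP^{D\,*}_{j,i}(B_j)$ converges in operator norm (with norm at most $\|B\|$). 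Note that $Z_i=0$ for $i\notin D\cup\bD$.

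Now to check \eqref{eq_pbDirichletD}. For $j\in\bD$ the only term of $Z$ with a component at $j$ is $B$, so $Z_j=B_j$ and the boundary condition holds. For $i\in D$ we must show that $\sum_{k\in V}L_{k,i}^{*}Z_kL_{k,i}$ equals $Z_i-A_i=\sum_{j\in D}\mfN^{D\,*}_{j,i}(A_j)+\sum_{j\in\bD}\mfP^{D\,*}_{j,i}(B_j)$. By definition of $\bD$ one has $L_{k,i}=0$ unless $k\in D\cup\bD$, and $Z_k=B_k$ for $k\in\bD$, so $\sum_{k}L_{k,i}^{*}Z_kL_{k,i}=\sum_{k\in D}L_{k,i}^{*}Z_kL_{k,i}+\sum_{k\in\bD}L_{k,i}^{*}B_kL_{k,i}$ with $Z_k$ ($k\in D$) given by the displayed formula. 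Grouping the resulting terms according to the datum ($A_j$ for $j\in D$, or $B_j$ for $j\in\bD$) that they involve, the part attached to a fixed $A_j$ is $L_{j,i}^{*}A_jL_{j,i}+\sum_{k\in D}L_{k,i}^{*}\mfN^{D\,*}_{j,k}(A_j)L_{k,i}$ and the part attached to a fixed $B_j$ is $L_{j,i}^{*}B_jL_{j,i}+\sum_{k\in D}L_{k,i}^{*}\mfP^{D\,*}_{j,k}(B_j)L_{k,i}$. The claim thus reduces to the two identities
\begin{align*}
\mfN^{D\,*}_{j,i}(X)&=L_{j,i}^{*}XL_{j,i}+\sum_{k\in D}L_{k,i}^{*}\mfN^{D\,*}_{j,k}(X)L_{k,i}\qquad(i,j\in D),\\
\mfP^{D\,*}_{j,i}(X)&=L_{j,i}^{*}XL_{j,i}+\sum_{k\in D}L_{k,i}^{*}\mfP^{D\,*}_{j,k}(X)L_{k,i}\qquad(i\in D,\ j\in\bD).
\end{align*}
Both are proved from the expansion $\mfN^{D\,*}_{j,i}(X)=\sum_{\pi\in\cP^{D}(i,j)}L_\pi^{*}XL_\pi$ (and, when $j\in\bD$, from $\cP^{D\setminus\{j\}}(i,j)=\cP^{D}(i,j)$, so that $\mfP^{D\,*}_{j,\cdot}=\mfN^{D\,*}_{j,\cdot}$ there) by splitting every path $\pi\in\cP^{D}(i,j)$ at its first step: either $\pi=(i,j)$, contributing $L_{j,i}^{*}XL_{j,i}$, or $\pi$ has length $\ge2$, in which case the vertex $k$ reached after the first step is an interior vertex of $\pi$ and hence lies in $D$, and $\pi$ decomposes uniquely as the step $i\to k$ followed by a path $\pi'\in\cP^{D}(k,j)$, with $L_\pi=L_{\pi'}L_{k,i}$; summing the latter contributions over $k\in D$ produces $\sum_{k\in D}L_{k,i}^{*}\mfN^{D\,*}_{j,k}(X)L_{k,i}$. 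The rearrangements of these absolutely convergent series are justified by the norm bounds above, first for positive $X$ and then by linearity.

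For uniqueness, let $W=Z-Z'$ with $Z,Z'$ both solving \eqref{eq_pbDirichletD}. Then $W_j=0$ for $j\in\bD$, and for $i\in D$, using again $L_{k,i}=0$ unless $k\in D\cup\bD$ together with $W_k=0$ on $\bD$, the interior equation becomes $W_i=\sum_{k\in D}L_{k,i}^{*}W_kL_{k,i}$. Iterating this $n$ times yields $W_i=\sum_{\pi}L_\pi^{*}W_{\mathrm{end}(\pi)}L_\pi$, the sum running over the paths $\pi$ of length $n$ starting at $i$ all of whose vertices lie in $D$. Pairing with an arbitrary state $\rho$ on $\h_i$ and using $\tr(L_\pi\rho L_\pi^{*})\ge0$,
\[
\bigl|\tr(\rho\,W_i)\bigr|\le\Bigl(\max_{k\in D}\|W_k\|\Bigr)\sum_{\pi}\tr\bigl(L_\pi\rho L_\pi^{*}\bigr)=\Bigl(\max_{k\in D}\|W_k\|\Bigr)\,\pp_{i,\rho}(x_1,\dots,x_n\in D)\le\Bigl(\max_{k\in D}\|W_k\|\Bigr)\,\pp_{i,\rho}(t_\bD>n),
\]
which tends to $0$ as $n\to\infty$ by Proposition \ref{prop_sortieps} (its hypothesis $\dim\h_i<\infty$ holding because $\M$ is semifinite). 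Hence $\tr(\rho\,W_i)=0$ for every state $\rho$, so $W_i=0$ for all $i\in D$; combined with $W_j=0$ on $\bD$, this says exactly that $W$ has support in $\mathcal H_{V\setminus(D\cup\bD)}$. (Equivalently, one may run the bounded stopped martingale $\bigl(\tr(\rho_{n\wedge t_\bD}\,W_{x_{n\wedge t_\bD}})\bigr)_n$, which converges to $\tr(\rho_{t_\bD}W_{x_{t_\bD}})=0$, and apply optional stopping.)

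\textbf{Main obstacle.} Once the first-step recursions for $\mfN^{D}$ and $\mfP^{D}$ are written down the computation is routine; the delicate points are the bookkeeping of path classes — in particular that the first step of a path which stays in $D$ must itself land in $D$, and the identification $\cP^{D\setminus\{j\}}(i,j)=\cP^{D}(i,j)$ for $j\in\bD$ — and the absolute convergence of the rearranged series when $\bD$ is infinite. The only substantial external ingredient is the almost-sure finiteness of the exit time, Proposition \ref{prop_sortieps}.
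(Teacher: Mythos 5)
Your proof is correct and follows essentially the same route as the paper: the interior equation is verified by the same first-step decomposition of paths in $\cP^D(i,j)$ that the paper uses for Proposition \ref{prop_probDirichletV}, and uniqueness rests on Proposition \ref{prop_sortieps} exactly as in the paper (your explicit iteration bounding $|\tr(\rho\,W_i)|$ by $\pp_{i,\rho}(t_\bD>n)$ is just the unrolled form of the paper's stopped-martingale/optional-sampling argument, which you also note). Your added care about convergence of $\sum_{j\in\bD}\mfP^{D\,*}_{j,i}(B_j)$ when $\bD$ is infinite is a worthwhile detail the paper leaves implicit.
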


\subsection{Harmonic measures: the irreducible case} % (fold)
\label{subsec_harmonic measures}

Given a finite subdomain $D$ of $V$, the harmonic measure quantifies the probability for an OQW starting in $D$ to escape from $D$ by a given point of its boundary $\bD$. It is intimately related to the Dirichlet problem.

\begin{defi}
Let $D$ be a finite subset of $V$ such that $\bD\neq \emptyset$.
Let $\M$ be an irreducible and semifinite open quantum walk, conditioned to start at $i$ in $D$ with initial state $\rho$ in $\Scal(\h_i)$. Let $j$ be in $\bD$. Recall the definition of the stopping time $t_{\bD}=\mathrm{inf}\{n\in\mathbb{N}\,|\,x_n\in \bD\}$.
The harmonic measure at $j$ relative to $i$ and $\rho$ is defined as
\[ \mu^D_{i,\rho}(j)= \mathbb{P}_{i,\rho}\left(x_{t_{\bD}}=j\right).\]
\end{defi}

Recall that for an irreducible and semifinite OQW, the escape time $t_{\bD}$ is finite with probability $1$.
The previous propositions directly imply the following:
\begin{prop} 
The harmonic measure is linear in $\rho$. More precisely,
\begin{equation}  
\mu^D_{i,\rho}(j) = \mathrm{Tr}( \mfPD_{j,i}(\rho) ),
\end{equation}
with $\mfPD_{j,i}(\rho)=\sum_{\pi\in\mathcal{P}^D(i,j)} L_\pi\rho L^*_\pi$. Moreover,
\[ \mathbb{E}_{i,\rho}\left(\rho_{t_{\bD}}| x_{t_{\bD}}=j\right) = \frac{\mfPD_{j,i}(\rho)}{\mathrm{Tr}( \mfPD_{j,i}(\rho) )}.\]
\end{prop}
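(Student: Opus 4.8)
The plan is to observe that this final Proposition is essentially a corollary of Proposition \ref{prop_sigmaDfinite} (together with its attached Remark) and Lemma \ref{lemme_mfND}, so the proof should be short and mostly a matter of reconciling notation. First I would note that since $\M$ is irreducible and semifinite and $\bD\neq\emptyset$, Proposition \ref{prop_sortieps} guarantees $\pp_{i,\rho}(t_\bD<\infty)=1$ for every $i\in D$ and $\rho\in\Scal(\h_i)$, so the event $\{x_{t_\bD}=j\}$ is, up to a null set, the same as $\{t_j\leq t_\bD<\infty\}$ intersected with the event that the walk exits exactly at $j$. More precisely, for $j\in\bD$ one has $\{x_{t_\bD}=j\}=\{t_j\leq t_\bD<\infty\}$ almost surely, because $j$ being a boundary point means the walk, once it reaches $j$, has already exited $D$; hence the exit point is $j$ iff the first visit to $j$ occurs no later than the exit time. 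Therefore $\mu^D_{i,\rho}(j)=\pp_{i,\rho}(t_j\leq t_\bD<\infty)$, and Proposition \ref{prop_sigmaDfinite} immediately gives $\mu^D_{i,\rho}(j)=\tr\,\mfPD_{j,i}(\rho)$, which is manifestly linear (indeed affine-linear, in fact linear) in $\rho$ by complete positivity and linearity of $\mfPD_{j,i}$.

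Next I would address the path-sum expression. The Remark following Proposition \ref{prop_sigmaDfinite} gives $\mfP^D_{j,i}(\rho)=\sum_{\pi\in\cP^{D\setminus\{j\}}(i,j)}L_\pi\rho L_\pi^*$; but for $j\in\bD$ (not in $D$), the condition ``$\pi$ remains in $D\setminus\{j\}$ except for endpoints'' is the same as ``$\pi$ remains in $D$ except for endpoints'', i.e. $\cP^{D\setminus\{j\}}(i,j)=\cP^{D}(i,j)$ when $j\notin D$. This matches the final displayed formula $\mfPD_{j,i}(\rho)=\sum_{\pi\in\cP^D(i,j)}L_\pi\rho L_\pi^*$ written in the statement, and is also consistent with the relation $\mfND_{j,i}=\mfPD_{j,i}$ for $i\in D$, $j\in\bD$ recorded in the Remark after Lemma \ref{lemme_mfND}. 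So no new computation is needed here; one only has to point out the set identity.

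Finally, for the conditional-state formula, I would invoke Equation \eqref{eq_erhotj1} from the Remark after Proposition \ref{prop_sigmaDfinite}, namely $\ee_{i,\rho}(\rho_{t_j}\mid t_j\leq t_\bD)=\mfPD_{j,i}(\rho)/\tr\big(\mfPD_{j,i}(\rho)\big)$. Under the event $\{x_{t_\bD}=j\}$, which as noted above coincides a.s. with $\{t_j\leq t_\bD<\infty\}$, one has $t_\bD=t_j$ (the walk exits precisely when it first hits the boundary point $j$), so $\rho_{t_\bD}=\rho_{t_j}$ on this event, and hence $\ee_{i,\rho}(\rho_{t_\bD}\mid x_{t_\bD}=j)=\ee_{i,\rho}(\rho_{t_j}\mid t_j\leq t_\bD)=\mfPD_{j,i}(\rho)/\tr\big(\mfPD_{j,i}(\rho)\big)$, which is the claimed identity.

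The only genuinely delicate point — and the one I would state carefully rather than wave through — is the almost-sure equality of events $\{x_{t_\bD}=j\}=\{t_j\leq t_\bD<\infty\}$ and the accompanying equality $t_\bD=t_j$ on that event: this rests on the fact that boundary points of $D$ are outside $D$, so that the first entrance into $\bD$ is exactly the first entrance into some single boundary vertex, and reaching $j\in\bD$ is itself an exit event. Everything else is bookkeeping on top of already-proved results. \fin
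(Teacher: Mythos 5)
Your proposal is correct and follows exactly the route the paper intends: the paper gives no separate proof, stating only that "the previous propositions directly imply" the result, and your derivation from Proposition \ref{prop_sigmaDfinite}, its Remark (Equation \eqref{eq_erhotj1}), and the identification of $\{x_{t_\bD}=j\}$ with $\{t_j\le t_\bD<\infty\}$ (with $t_j=t_\bD$ on that event, since $j\in\bD$ implies $j\notin D$) is precisely the intended argument. The observation that $\cP^{D\setminus\{j\}}(i,j)=\cP^{D}(i,j)$ for $j\notin D$ correctly reconciles the two path-sum expressions appearing in the paper.
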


Of course by Proposition \ref{prop_sortieps}, $\sum_{j\in \bD} \mu^D_{i,\rho}(j) =1$, as we assume the OQW to be irreducible. 

\begin{remark}
The connection with the Dirichlet problem is twofold, as usual. 
First, by linearity in $\rho$ let us write the harmonic measure as :
\[ \mu^D_{i,\rho}(j) = \mathrm{Tr}(\mathfrak{I}^D_{j,i}\, \rho), \]
with $\mathfrak{I}^D_{j,i}= {\mfPD_{j,i}}^*(\mathrm{Id}_{\mathfrak{h}_j})=\sum_{\pi\in\mathcal{P}^D(i,j)}  L^*_\pi L_\pi$. Let  $I^D_j\in\mathcal{B}(\mathcal{H})$ be defined by
\begin{equation}
I^D_j=\sum_{i \in D} \mathfrak{I}^D_{j,i}\otimes |i\rangle\langle i| + \mathrm{Id}_{\mathfrak{h}_j}\otimes |j\rangle\langle j|.
\end{equation}
Then $I^D_j$ is quantum harmonic in $D$ (i.e $ (\id-\M^*)(I^D_j)_k=0$ for $k\in D$) with boundary condition $\mathrm{Id}_{\mathfrak{h}_j}\otimes |j\rangle\langle j|$ on $\bD$. Furthermore, one has $\sum_{j\in \bD} I^D_j = \mathrm{Id}_{D \cup \bD}$, so that $I^D_j$ may be viewed as a non-commutative quantum analogue of a harmonic measure.
\end{remark}

This link with the Dirichlet problem potentially gives an alternative way to evaluate the harmonic measure. Indeed, assuming the harmonic measure to be linear in $\rho$ (as expected from quantum mechanics), it is then fully determined by solving a Dirichlet problem. Suppose (as we actually proved) that $\mu^D_{i,\rho}(j)$ is linear in $\rho$, and let us write $\mu^D_{i,\rho}(j) = \mathrm{Tr}(\mathfrak{I}^D_{j,i}\, \rho)$ (without knowing the explicit expression of $\mathfrak{I}^D_{j,i}$). Then conditioning on the first step of the OQW proves that $I^D_j=\sum_{i \in D} \mathfrak{I}^D_{j,i}\otimes |i\rangle\langle i| + \mathrm{Id}_{\mathfrak{h}_j}\otimes |j\rangle\langle j|$ is quantum harmonic in $D$ with the appropriate boundary conditions. 
Furthermore, if $I_j^D$ is quantum harmonic in $D$ with boundary condition $\mathrm{Id}_{\mathfrak{h}_j}\otimes |j\rangle\langle j|$ then, by Lemma \ref{lemme_martingale}, $(m_j^D)_n=\mathrm{Tr}((I_j^D)_{x_n}\rho_n)$ stopped at  $n=t_{\bD}$ is a $\mathbb{P}_{i,\rho}$-martingale. The optimal sampling Theorem then yields
\[ \mathrm{Tr}(\mathfrak{I}^D_{j,i}\rho)= \mathbb{E}_{i,\rho}\big((m_j^D)_{t_{\bD}}\big)= \mathbb{P}_{i,\rho}\big(x_{\bD}=j\big).\]

\section{Results for reducible open quantum walks} % (fold)
\label{sec_results_for_reducible_open_quantum_walks}

In this section, we collect some results regarding passage times, number of visits and exit times for reducible open quantum walks. We begin by recalling some results regarding decompositions of reducible open quantum walks from \cite{CP1}.
% subsection exit_times_the_reducible_case (end)

We fix an open quantum walk $\M$. By Proposition 7.11 in \cite{CP1}, there exists an orthogonal decomposition of $\H$
\begin{equation} \label{eq_IrreducibleDecomposition} \H = \mathcal D \oplus \bigoplus_{\kappa\in K} \H^\kappa\end{equation}
where we have
\begin{equation}\label{eq_defR}
\bigoplus_{\kappa\in K} \H^\kappa = \sup\{\mathrm{supp}\,\rho\, |\, \rho \mbox{ a }\mathfrak M\mbox{-invariant state}\}.
\end{equation}
We denote the space \eqref{eq_defR} by $\Rcal$; we also let $\mathcal D = \mathcal R^\perp$. The restriction $\mathfrak M^\kappa$ of $\mathfrak M$ to $\mathcal I_1(\H^{\kappa})$ is an irreducible OQW, and $\Dcal=\{0\}$ if and only if $\M$ has a faithful invariant state. In addition, each $\H^\kappa$ is an enclosure, i.e. has a decomposition
$$ \H^\kappa = \bigoplus_{i\in V} \mathfrak h^\kappa_i$$
where every $\mathfrak h^\kappa_i$ is a subspace of $\mathfrak h_i$, and for any $i$, $j$ in $V$ one has
$ L_{i,j}\,\mathfrak h^\kappa_j \subset \mathfrak h^\kappa_i$. The decomposition \eqref{eq_IrreducibleDecomposition} is non-unique, as is discussed in sections 6 and 7 of \cite{CP1}; we fix, however, one such decomposition, and denote by $\M^\kappa$ the open quantum walk induced by $\M$ on $\mathcal I_1(\H^\kappa)$.
We define for every $\kappa$ in $K$ the set
$V^\kappa = \{i\in V\, |\, \mathfrak h^\kappa_i \neq\{0\} \}.$

For any $i$ in $V$, $\mathfrak h_i$ has a decomposition $\mathfrak h^{\mathcal D}\oplus\bigoplus_{\kappa\in K} \mathfrak h^\kappa_i$. Then, any state $\rho_i$ on~$\mathfrak h_i$ can be written in block matrix form $\rho=(\rho_i^{\kappa,\kappa'})_{\kappa,\kappa'\in K\cup\{0\}}$, where the index $0$ corresponds to $\mathcal D$. We denote by $\rho_i^{\kappa}$ the diagonal blocks:  $\rho_i^{\kappa} = \rho_i^{(\kappa,\kappa)}$. 
The main tool in this section is a simple observation: for any path $\pi\in \cP_\ell(i,j)$ starting at $i$, the probability of observing, as $\ell$ first steps, the trajectory $\pi=(i_0,\ldots,i_\ell)$ with initial conditions $(i,\rho)$ satisfies
\begin{equation}\label{eq_DecompPathProb}
\mathbb P_{i,\rho}(i_0,\ldots,i_\ell)
\geq \sum_{\kappa\in K} \tr\big((L_\pi \rho L_\pi^*)^{\kappa}\big)
\geq \sum_{\kappa\in K} \tr(L_\pi \rho^{\kappa} L_\pi^*).
\end{equation}
In addition, if $\rho$ has support in $\mathcal R$, then inequalities in \eqref{eq_DecompPathProb} become an equality. If we denote by e.g. $\mfP_{j,i}^{\kappa}$ the operator $\mfP_{j,i}$ associated with the OQW $\M^*$, etc. then we have the following result:
\begin{prop}
	Assume that the open quantum walk $\M$ admits a decomposition \eqref{eq_IrreducibleDecomposition}. Then, for any $\rho$ in $\mathcal S(\H)$,
	\begin{align*}
		\pp_{i,\rho}(t_j<\infty)&\geq \sum_{\kappa\in K} \tr\,\rho^{\kappa}\times \tr\, \mfP_{j,i}\big(\frac{\rho^{\kappa}}{\tr\,\rho^{\kappa}}\big) \, \ind_{i,j \in V^{\kappa}},\\
		\ee_{i,\rho}(n_j)&\geq \sum_{\kappa\in K} \tr\,\rho^{\kappa}\times \tr\, \mfN_{j,i}\big(\frac{\rho^{\kappa}}{\tr\,\rho^{\kappa}}\big) \, \ind_{i,j \in V^{\kappa}},\\
		\ee_{i,\rho}(t_j)&\geq \sum_{\kappa\in K} \tr\,\rho^{\kappa}\times \tr\, \mfT_{j,i}\big(\frac{\rho^{\kappa}}{\tr\,\rho^{\kappa}}\big) \, \ind_{i,j \in V^{\kappa}},\\
		\pp_{i,\rho}(t_\bD)&\geq \sum_{\kappa\in K} \tr\,\rho^{\kappa}\times \tr\, \mfPD_{j,i}\big(\frac{\rho^{\kappa}}{\tr\,\rho^{\kappa}}\big) \, \ind_{i \in V^{\kappa}, \,\bD\cap V^{\kappa}\neq \emptyset}.
	\end{align*}
(where as before we assume that $D$ is a finite domain such that $\bD\neq \emptyset$), and each of these inequalities becomes an equality if we assume that $\rho$ has support in $\mathcal R$.
\end{prop}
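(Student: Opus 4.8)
The strategy is to reduce the whole statement to the single path-probability comparison \eqref{eq_DecompPathProb}, applied termwise inside path-series representations of the four left-hand sides. Every series below has nonnegative terms and is read in $[0,+\infty]$, so interchanges of summation are harmless; this is the only analytic input beyond \eqref{eq_DecompPathProb}.

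First I would record the path-series for the four quantities, all of which are already implicit in the earlier sections. Decomposing $\{t_j<\infty\}$ over the first passage to $j$ gives $\pp_{i,\rho}(t_j<\infty)=\sum_{\pi\in\cP^{V\setminus\{j\}}(i,j)}\pp_{i,\rho}(\pi)$, where $\pp_{i,\rho}(\pi)=\tr(L_\pi\rho L_\pi^{*})$ is the probability of the cylinder determined by $\pi$ (Proposition \ref{prop_ReptjT} and Remark \ref{remark_TiiId}). Since $n_j=\sum_{k\geq1}\ind_{x_k=j}$, one gets $\ee_{i,\rho}(n_j)=\sum_{\pi\in\cP(i,j)}\pp_{i,\rho}(\pi)$, the series \eqref{eq_defmfN2}. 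Writing $t_j\,\ind_{t_j<\infty}=\sum_{\pi\in\cP^{V\setminus\{j\}}(i,j)}\ell(\pi)\,\ind_{\pi}$ and taking expectations gives $\ee_{i,\rho}(t_j)\geq\sum_{\pi\in\cP^{V\setminus\{j\}}(i,j)}\ell(\pi)\,\pp_{i,\rho}(\pi)$, an equality whenever $\pp_{i,\rho}(t_j<\infty)=1$ (Proposition \ref{prop_expetj}). Finally, decomposing $\{t_\bD<\infty\}$ over the first exit path gives $\pp_{i,\rho}(t_\bD<\infty)=\sum_{j\in\bD}\sum_{\pi\in\cP^{D}(i,j)}\pp_{i,\rho}(\pi)=\tr\,\mfPD_i(\rho)$ (Proposition \ref{prop_sigmaDfinite}).

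Next, in each series I would replace $\pp_{i,\rho}(\pi)$ by the lower bound $\sum_{\kappa\in K}\tr(L_\pi\rho^{\kappa}L_\pi^{*})$ from \eqref{eq_DecompPathProb} and swap the sums over $\pi$ and $\kappa$. Fix $\kappa$: by the enclosure property $L_{a,b}\,\h^{\kappa}_b\subset\h^{\kappa}_a$, the operator $L_\pi\rho^{\kappa}L_\pi^{*}$ has range in $\h^{\kappa}_j$ and vanishes unless every vertex of $\pi$ lies in $V^{\kappa}$, while $\rho^{\kappa}=0$ unless $i\in V^{\kappa}$. Hence $\sum_{\pi\in\cP^{V\setminus\{j\}}(i,j)}L_\pi\rho^{\kappa}L_\pi^{*}=\mfP_{j,i}(\rho^{\kappa})$ (and this is $0$ unless $i,j\in V^{\kappa}$), and likewise $\sum_{\pi\in\cP(i,j)}L_\pi\rho^{\kappa}L_\pi^{*}=\mfN_{j,i}(\rho^{\kappa})$, $\sum_{\pi\in\cP^{V\setminus\{j\}}(i,j)}\ell(\pi)L_\pi\rho^{\kappa}L_\pi^{*}=\mfT_{j,i}(\rho^{\kappa})$, and $\sum_{j\in\bD}\sum_{\pi\in\cP^{D}(i,j)}L_\pi\rho^{\kappa}L_\pi^{*}=\mfPD_i(\rho^{\kappa})$, where these are the operators of $\M$ evaluated at $\rho^{\kappa}$ — and, since only paths inside $V^{\kappa}$ survive, they coincide with the corresponding objects of the induced walk $\M^{\kappa}$. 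Writing $\tr\,\mfP_{j,i}(\rho^{\kappa})=\tr\rho^{\kappa}\cdot\tr\,\mfP_{j,i}(\rho^{\kappa}/\tr\rho^{\kappa})$ (a term being $0$ when $\rho^{\kappa}=0$) and inserting the indicators $\ind_{i,j\in V^{\kappa}}$, resp.\ $\ind_{i\in V^{\kappa},\ \bD\cap V^{\kappa}\neq\emptyset}$, to absorb the vanishing terms, puts each bound in the stated form.

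For the equality statement, recall that when $\supp\rho\subset\mathcal R=\bigoplus_{\kappa}\H^{\kappa}$ both inequalities of \eqref{eq_DecompPathProb} are equalities, so the termwise argument upgrades the bounds for $\pp_{i,\rho}(t_j<\infty)$, $\ee_{i,\rho}(n_j)$ and $\pp_{i,\rho}(t_\bD<\infty)$ to identities at once. The delicate case — and the step I expect to be the main obstacle — is $\ee_{i,\rho}(t_j)$, since $\sum_\pi\ell(\pi)\pp_{i,\rho}(\pi)$ a priori equals only $\ee_{i,\rho}(t_j\,\ind_{t_j<\infty})$. Here one uses that each $\M^{\kappa}$ on the enclosure $\H^{\kappa}$ is irreducible and carries an invariant state, faithful on $\H^{\kappa}$, hence is recurrent by the argument of Example \ref{ex_invariantstate}; so $\pp_{i,\rho^{\kappa}/\tr\rho^{\kappa}}(t_j<\infty)=1$ for $i,j\in V^{\kappa}$, and Proposition \ref{prop_expetj} supplies the exact identity $\tr\rho^{\kappa}\cdot\ee_{i,\rho^{\kappa}/\tr\rho^{\kappa}}(t_j)=\tr\,\mfT_{j,i}(\rho^{\kappa})$ on each enclosure. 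Summing over $\kappa$ and using equality in \eqref{eq_DecompPathProb} then gives the claimed identity, at least when $j$ is reachable from every enclosure charged by $\rho$ (equivalently $\pp_{i,\rho}(t_j<\infty)=1$); in particular this covers the irreducible case, with its single enclosure. The rest is routine rearrangement of nonnegative series.
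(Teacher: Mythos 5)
Your proof is correct and follows exactly the route the paper intends: the paper gives no separate appendix proof for this proposition, treating it as an immediate consequence of the termwise bound \eqref{eq_DecompPathProb} applied inside the path-series representations of the four quantities, which is precisely what you do. Your extra care with the equality case for $\ee_{i,\rho}(t_j)$ --- where the $+\infty$ convention of Proposition \ref{prop_expetj} forces one to check that $j$ is reachable from every enclosure charged by $\rho$ --- is a genuine subtlety that the paper's statement glosses over, and you resolve it correctly via the invariant state on each $\H^\kappa$.
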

If the support of $\rho$ is contained in $\mathcal R$ (in particular if $\mathcal D=\{0\}$) then since $\sum_{\kappa\in K} \tr\,\rho^{\kappa}=1$, it is easy to characterize e.g. the equality $\pp_{i,\rho}(t_j<\infty)=1$, or the finiteness of $\ee_{i,\rho}(n_j)$, in terms of the sub-open quantum walks $\M^{\kappa}$ (to which our various results for irreducible open quantum walks apply).

\section{Variational approach to the Dirichlet problem}
\label{sec_varDirichlet}

We assume throughout this section that $\rhoinv=\sum_{i\in V}{\rhoinv(i)\otimes\proj{i}{i}}$ is an invariant state for the OQW $\M$, which furthermore is faithful. %Remark that $\big(\tr\,\rhoinv(i)\big)_{i\in V}$ is an invariant measure of the process $(x_n)_{n\geq0}$, in the sense that the law of $x_n$ conditioned on $\tau=\rhoinv$ is $\big(\tr\,\rhoinv(i)\big)_{i\in V}$ for any time $n\geq0$.
In all of this section we write $\tau_\diamond$ instead of $\rhoinv$, i.e. we let
\[\tau_\diamond:=\rhoinv.\]

Our goal in this section is to characterize the solutions of the Dirichlet problem given by Equation \eqref{eq_pbDirichletD} as minimizers of a certain functional, involving the Dirichlet form associated to the OQW. The present Dirichlet forms are simple, discrete-time versions of the non-commutative extensions of classical Dirichlet forms (such extensions were studied first by Davies and Lindsay in \cite{DL1}, see also \cite{DL2,Cip,Cip1}).

We first focus on the definition of the Dirichlet form and its properties. Define on $\B(\H)$ the scalar product
\begin{equation}
\sca{X}{Y}_{\diamond} :=\tr\big(\tau_\diamond^{1/2}X^*\tau_\diamond^{1/2}Y\big).%,\qquad X,Y\in\B(\Hcal).
\label{eq_scal}
\end{equation}
\begin{defi}
The Dirichlet form associated to the open quantum walk $\M$ is the quadratic form
	\begin{equation} 	\label{eq_defiDirichletform}
	\E(X,Y) := \sca{X}{\left(I-\M^*\right)(Y)}_{\diamond}
%	\\
%	&=& \tr\left[\rho^{1/2}X^*\rho^{1/2}\left(I-\M^*\right)(Y)\right],\quad X, Y\in\B(\H). \nonumber
	\end{equation}
	We also denote $ \E(X)= \E(X,X)$, for $X\in \B(\Hcal)$.
\end{defi}

The central hypothesis in the following is the \emph{detailed balance condition}.
\begin{defi}
	We say that the open quantum walk $\M$ satisfies the detailed balance condition with respect to $\tau_\diamond$ if $\M^*$ is selfadjoint with respect to the scalar product $\langle{\,\cdot\,},{\,\cdot\,}\rangle_\diamond$.
\end{defi}
Note that we reserve the notation $\M^*$ for the adjoint of $\M$ with respect to the duality between $\mathcal I^1(\mathcal H)$ and $\mathcal B(\mathcal H)$. The detailed balance condition is satisfied, in particular, if
\[\tau_\diamond(i)^{1/2} L_{j,i}^*= L_{i,j}\,\tau_\diamond(j)^{1/2} \ \mbox{for all }i,j\in V.\]
It has two immediate consequences:
\begin{lemme} \label{lemme_ptesQDB}
	If the open quantum walk $\M$ satisfies the detailed balance condition, then $\Ecal(X)\geq 0$ for any $X\in\mathcal B(\mathcal H)$. If in addition $\M$ is irreducible, then $\Ecal(X)=0$ if and only if $X\in \cc\id_{\mathcal H}$.
\end{lemme}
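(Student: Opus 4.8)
\emph{Proof plan.} Write $\|X\|_\diamond=\langle X,X\rangle_\diamond^{1/2}$. The detailed balance condition is precisely that $\M^*$ is self-adjoint on the Hilbert space $\mathcal K$ obtained by completing $\mathcal B(\mathcal H)$ for $\langle\cdot,\cdot\rangle_\diamond$; in particular $\langle X,\M^*(X)\rangle_\diamond$ equals its own complex conjugate and is therefore real, so that $\Ecal(X)=\|X\|_\diamond^2-\langle X,\M^*(X)\rangle_\diamond$ is a well-defined real number (this is where detailed balance enters already for the first assertion). The whole statement then reduces to showing that $\M^*$ is a \emph{contraction} for $\|\cdot\|_\diamond$, i.e.
\[\|\M^*(X)\|_\diamond\le\|X\|_\diamond\qquad\text{for all }X\in\mathcal B(\mathcal H).\]
Granting this, the Cauchy--Schwarz inequality gives $\langle X,\M^*(X)\rangle_\diamond\le\|X\|_\diamond\,\|\M^*(X)\|_\diamond\le\|X\|_\diamond^2$, hence $\Ecal(X)\ge0$.

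For the equality case, suppose $\M$ is irreducible and $\Ecal(X)=0$ with $X\neq0$. Then all inequalities in $\|X\|_\diamond^2=\langle X,\M^*(X)\rangle_\diamond\le\|X\|_\diamond\,\|\M^*(X)\|_\diamond\le\|X\|_\diamond^2$ are equalities; equality in Cauchy--Schwarz forces $\M^*(X)=cX$ for some $c\in\cc$, and then $c\,\|X\|_\diamond^2=\langle X,cX\rangle_\diamond=\|X\|_\diamond^2$ gives $c=1$. Thus $X$ is quantum harmonic, and for an irreducible open quantum walk the only quantum harmonic operators are the scalars $\lambda\,\id_{\mathcal H}$ (the companion statement, for $\M^*$, of the consequence of irreducibility recalled after Definition~\ref{defi_irreducibility}; see \cite{CP1}). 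Conversely $\M^*(\id_{\mathcal H})=\id_{\mathcal H}$ by \eqref{eq_stochastic}, so $\Ecal(\lambda\,\id_{\mathcal H})=0$.

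The heart of the matter is thus the contraction property. The plan is to transport $\M^*$ onto the Hilbert--Schmidt operators $\mathcal I_2(\mathcal H)$ via $U\colon X\mapsto\tau_\diamond^{1/4}X\tau_\diamond^{1/4}$, which (since $\tau_\diamond$ is faithful) is an isometry from $(\mathcal B(\mathcal H),\|\cdot\|_\diamond)$ onto a dense subspace of $\mathcal I_2(\mathcal H)$, and to analyze $\widehat\M:=U\M^*U^{-1}$, i.e. $\widehat\M(\xi)=\tau_\diamond^{1/4}\,\M^*\!\big(\tau_\diamond^{-1/4}\,\xi\,\tau_\diamond^{-1/4}\big)\,\tau_\diamond^{1/4}$. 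One checks: (a) $\widehat\M$ is completely positive, being a composition of $\M^*$ with the completely positive maps $Y\mapsto\tau_\diamond^{\pm1/4}Y\tau_\diamond^{\pm1/4}$; (b) $\widehat\M(\tau_\diamond^{1/2})=\tau_\diamond^{1/4}\M^*(\id_{\mathcal H})\tau_\diamond^{1/4}=\tau_\diamond^{1/2}$, so $\widehat\M$ has the faithful fixed point $\tau_\diamond^{1/2}$; (c) a completely positive map with a faithful fixed point has spectral radius $1$ --- indeed, Perron--Frobenius theory applied to the Hilbert--Schmidt adjoint $\widehat\M^{\dagger}$ (also completely positive) yields $\eta\ge0$, $\eta\neq0$, with $\widehat\M^{\dagger}(\eta)=r\,\eta$ for $r$ the spectral radius, and then $r\,\tr(\eta\,\tau_\diamond^{1/2})=\langle\widehat\M^{\dagger}(\eta),\tau_\diamond^{1/2}\rangle_{\mathrm{HS}}=\langle\eta,\widehat\M(\tau_\diamond^{1/2})\rangle_{\mathrm{HS}}=\tr(\eta\,\tau_\diamond^{1/2})>0$, so $r=1$; (d) under detailed balance $\widehat\M$ is self-adjoint on $\mathcal I_2(\mathcal H)$, hence $\|\widehat\M\|=r(\widehat\M)=1$. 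Therefore $\|\M^*(X)\|_\diamond=\|\widehat\M(UX)\|_{\mathrm{HS}}\le\|UX\|_{\mathrm{HS}}=\|X\|_\diamond$. (An alternative is complex interpolation between the GNS scalar product $\tr(\tau_\diamond X^*Y)$ and its mirror $\tr(X^*\tau_\diamond Y)$, for each of which the contraction of $\M^*$ is immediate from the Kadison--Schwarz inequality for the unital $2$-positive map $\M^*$, the duality $\tr(\sigma\,\M^*(Z))=\tr(\M(\sigma)Z)$, and the invariance $\M(\tau_\diamond)=\tau_\diamond$.) The main obstacle, beyond bookkeeping, is analytic: in infinite dimensions the unbounded factors $\tau_\diamond^{-1/4}$ must be handled carefully, so that $\widehat\M$ (equivalently, $\M^*$ on $\mathcal K$) is first controlled on a convenient dense domain and then extended by continuity; in the semifinite case, and a fortiori when $\mathcal H$ is finite-dimensional, this difficulty disappears and the argument above is complete as stated.
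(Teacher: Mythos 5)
Your proof is correct, and the skeleton (positivity of $\id-\M^*$ as a self-adjoint operator for $\langle\cdot,\cdot\rangle_\diamond$, then Perron--Frobenius for the irreducible map $\M^*$ to settle the equality case) matches the paper's; but you justify the central inequality by a genuinely different and more self-contained route. The paper disposes of positivity in one line: by Russo--Dye $\|\M^*\|=1$ on $\mathcal B(\mathcal H)$, so under detailed balance the spectrum of $\M^*$ lies in $[-1,1]$ and $\id-\M^*\geq0$; the equality case then reads $\Ecal(X)=0\Leftrightarrow\M^*(X)=X\Leftrightarrow X\in\cc\id_{\mathcal H}$ by the Perron--Frobenius theorem of \cite{Gro}. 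You instead prove directly that $\M^*$ is a $\|\cdot\|_\diamond$-contraction by conjugating with $X\mapsto\tau_\diamond^{1/4}X\tau_\diamond^{1/4}$ and running a Perron--Frobenius argument for the spectral radius of the resulting completely positive, self-adjoint map on $\mathcal I_2(\mathcal H)$, then conclude by Cauchy--Schwarz. What your detour buys is a justification of a point the paper leaves implicit: the bound $\|\M^*\|=1$ is for the operator norm on $\mathcal B(\mathcal H)$, and identifying the spectrum there with the spectrum of $\M^*$ acting on the $\diamond$-completion (which is what the self-adjointness argument needs) is immediate only in finite dimensions; your construction is one legitimate way to supply the missing contraction estimate in general. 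Two small caveats: your closing claim that the unboundedness of $\tau_\diamond^{-1/4}$ ``disappears in the semifinite case'' is too optimistic, since $V$ may be infinite and the blockwise inverses need not be uniformly bounded, so the density/extension argument is still needed there; and the consequence of irreducibility recalled after Definition \ref{defi_irreducibility} concerns invariant states of $\M$, so for the fixed points of the unital map $\M^*$ you should, as the paper does, invoke the Perron--Frobenius theorem for positive maps on a C*-algebra (\cite{Gro}) rather than the dual statement by analogy. Neither affects the correctness of the argument.
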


\subsection{Dirichlet problem on the whole domain}
Quantum harmonic operators are easily characterized as minimizers of the Dirichlet form. Indeed, the detailed balance condition implies that $\E(X)\geq0$, with equality if and only if $(I-\M^*)(X)=0$, that is, if $X$ is harmonic.
\begin{prop}
	Suppose that $\M$ satisfies the detailed balance condition. Then $X$ is a quantum harmonic observable if and only if $\E(X)$ is minimal, if and only if $\E(X)=0$.
	\label{prop_varharmonic}
\end{prop}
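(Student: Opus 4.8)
The plan is to reduce the statement to the elementary fact that a positive semi-definite Hermitian form vanishes on a vector $X$ if and only if the associated self-adjoint operator annihilates $X$. First I would record three preliminary observations. By definition $\E(X)=\sca{X}{(I-\M^*)(X)}_{\diamond}$, and by Lemma \ref{lemme_ptesQDB} the detailed balance condition gives $\E(X)\geq 0$ for all $X\in\B(\H)$; moreover $\E(\id_{\H})=0$ because $\id_{\H}$ is quantum harmonic ($\M^*(\id_{\H})=\id_{\H}$ by \eqref{eq_stochastic}). Hence the infimum of $\E$ over $\B(\H)$ equals $0$ and is attained, so ``$\E(X)$ is minimal'' is literally the same condition as ``$\E(X)=0$'', which disposes of the first equivalence.

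One direction of the remaining equivalence is immediate: if $X$ is quantum harmonic, i.e. $(I-\M^*)(X)=0$, then $\E(X)=\sca{X}{0}_{\diamond}=0$.

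For the converse I would argue as follows. Set $T=I-\M^*$, viewed as a linear map on the pre-Hilbert space $\big(\B(\H),\sca{\cdot}{\cdot}_{\diamond}\big)$, noting that $\sca{\cdot}{\cdot}_{\diamond}$ is a genuine inner product precisely because $\tau_\diamond$ is faithful. The detailed balance condition says $\M^*$, hence $T$, is self-adjoint for $\sca{\cdot}{\cdot}_{\diamond}$, and the first observation says $\sca{X}{TX}_{\diamond}\geq 0$ for all $X$. Therefore $(X,Y)\mapsto\sca{X}{TY}_{\diamond}$ is a positive semi-definite Hermitian form, and the Cauchy--Schwarz inequality for it gives $|\sca{X}{TY}_{\diamond}|^{2}\leq\sca{X}{TX}_{\diamond}\,\sca{Y}{TY}_{\diamond}$. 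If $\E(X)=\sca{X}{TX}_{\diamond}=0$, this forces $\sca{X}{TY}_{\diamond}=0$ for every $Y\in\B(\H)$; taking $Y=TX$ and using self-adjointness of $T$ yields $\sca{TX}{TX}_{\diamond}=\sca{X}{T(TX)}_{\diamond}=0$, whence $TX=0$, i.e. $(I-\M^*)(X)=0$ and $X$ is quantum harmonic.

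I do not anticipate a serious obstacle; the only point requiring care is the infinite-dimensional setting, where $\B(\H)$ need not be complete for $\sca{\cdot}{\cdot}_{\diamond}$ and one should check that $\M^*$ acts sensibly (as a bounded self-adjoint operator) on the completion, the non-commutative $L^2$-space of $\tau_\diamond$. However, since the Cauchy--Schwarz step above uses only the algebraic positivity and self-adjointness of the form on $\B(\H)$ itself, the argument already goes through without invoking completeness or the spectral theorem, which is why I would present it in this elementary form.
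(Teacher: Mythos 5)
Your proof is correct. The paper does not give a standalone proof of this proposition: it is justified by the sentence preceding it together with the proof of Lemma \ref{lemme_ptesQDB}, where the key step ``$\E(X)=0$ if and only if $\M^*(X)=X$'' is obtained by noting that $\|\M^*\|=1$ and detailed balance force the spectrum of $\M^*$ into $[-1,1]$, so that $I-\M^*$ is a positive self-adjoint operator (and one then implicitly factors through its square root). You reach the same conclusion by a different and more elementary route: Cauchy--Schwarz for the positive semi-definite Hermitian form $(X,Y)\mapsto\sca{X}{(I-\M^*)Y}_\diamond$, which uses only the algebraic positivity and self-adjointness on $\B(\H)$ itself and therefore, as you observe, does not require passing to the completion (the non-commutative $L^2$-space of $\tau_\diamond$) or invoking the spectral theorem. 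The treatment of the first equivalence (``minimal'' equals ``zero'', via $\E(\id_\H)=0$) matches the paper's implicit reasoning. Both arguments take positivity of $\E$ from Lemma \ref{lemme_ptesQDB} as given; yours buys a cleaner handling of the infinite-dimensional setting at no extra cost, while the paper's is shorter if one is willing to work in the $L^2$-completion.
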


\subsection{Dirichlet problem on a sub-domain}

We now focus on the Dirichlet problem on a finite domain $D\subset V$, that we suppose to be non-empty. Recall the definition of respectively the inner data $A$ and the outer data $B$ as
\begin{equation*}
A= \sum_{i\in D} A_i\otimes \ketbra ii \qquad B=\sum_{j\in \bD} B_j\otimes\ketbra jj,
\end{equation*}
where $A_j,B_j\in\h_j$ for all $j$.\\

Our theorem is the following. An analogue of this result can be found in \cite{Cip1} for more general non-commutative Dirichlet forms.
\begin{theo}
	Let $\M$ be an irreducible open quantum walk with detailed balance condition and $D$ a finite domain of $V$ such that $\bD\neq \emptyset$.
     Then any solution of the Dirichlet problem
	\begin{equation*}% %\label{eq_pbDirichletD}
	\left\{
    \begin{array}{r@{\ = \ }l@{\ }l}
        (\id-\M^*)(Z)_i& A_i & \mbox{ for }i\in D \\
        Z_j&B_j & \mbox{ for }j\in \bD.
    \end{array}
    \right.
	\end{equation*}
	is of the form $X_0+B+ Y$, where $Y$ has support in $\mathcal H_{V\setminus(D\cup\bD)}$ and $X_0$ is the unique minimizer over the set $\Bcal\left(\Hcal_D\right)$ of the functional
	\begin{equation}
	E(X)=\frac{1}{2}\,\E(X)+\Ecal(X,B)-\sca{A}{X}_\diamond.
	\label{eq_propVarDirichletpb1}
	\end{equation}
	\label{theo_VarDirichletpb}
\end{theo}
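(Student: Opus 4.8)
The plan is to characterize solutions of the Dirichlet problem variationally, exploiting the detailed balance condition which makes $\M^*$ self-adjoint for $\langle\cdot,\cdot\rangle_\diamond$, so that the Dirichlet form $\E$ is a genuine (nonnegative) quadratic form. First I would recall from Proposition \ref{prop_pbDirichletD} that a solution $Z$ exists, and that any two solutions differ by an operator supported in $\mathcal H_{V\setminus(D\cup\bD)}$; hence it suffices to work modulo that ambiguity and to pin down the ``$D$-part'' of the solution. Given a solution $Z$, write $Z = X + B + Y$ where $Y$ has support in $\mathcal H_{V\setminus(D\cup\bD)}$ and $X\in\Bcal(\Hcal_D)$, i.e. $X=\sum_{i\in D}Z_i\otimes\ketbra ii$; the goal is to show $X$ minimizes $E$ over $\Bcal(\Hcal_D)$ and is the unique such minimizer.

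The key computational step is to differentiate $E$. Since $\E(\cdot,\cdot)$ is a symmetric bilinear form (by detailed balance) and $\langle\cdot,\cdot\rangle_\diamond$ is an inner product, for any $X, H\in\Bcal(\Hcal_D)$ one has
\[
\frac{d}{dt}E(X+tH)\Big|_{t=0} = \E(X,H) + \E(H,B) - \sca{A}{H}_\diamond = \E(X+B,H) - \sca{A}{H}_\diamond .
\]
Using $\E(X+B,H) = \sca{H}{(\id-\M^*)(X+B)}_\diamond = \sca{(\id-\M^*)(X+B)}{H}_\diamond$ (symmetry again), the derivative vanishes for all $H\in\Bcal(\Hcal_D)$ if and only if the $D$-components of $(\id-\M^*)(X+B) - A$ vanish, i.e. $(\id-\M^*)(Z)_i = A_i$ for all $i\in D$ — precisely the interior equation of the Dirichlet problem (note that $(\id-\M^*)(Z)_i$ for $i\in D$ depends only on $Z_i$ for $i\in D$ and $Z_j$ for $j\in\bD$, so the choice of $Y$ is irrelevant here). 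Thus a solution $Z$ of \eqref{eq_pbDirichletD} yields a critical point $X_0$ of $E$, and conversely any critical point of $E$ produces, together with the boundary data $B$ and an arbitrary $Y$, a solution of the Dirichlet problem.

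It then remains to upgrade ``critical point'' to ``unique minimizer''. Since $E$ is a quadratic-plus-linear functional on the finite-dimensional real vector space underlying $\Bcal(\Hcal_D)$ (finite dimensional because the OQW is semifinite — wait, here we only assume irreducible with detailed balance; but $D$ finite together with faithfulness of $\tau_\diamond$ forces each $\h_i$, $i\in D$, to appear with a well-defined restriction, and the functional is still quadratic), its Hessian is the bilinear form $(H,H')\mapsto\E(H,H')$ restricted to $\Bcal(\Hcal_D)$. By Lemma \ref{lemme_ptesQDB}, $\E(H)\geq 0$ with equality only for $H\in\cc\,\id_{\mathcal H}$; and $\id_{\mathcal H}\notin\Bcal(\Hcal_D)$ provided $D\neq V$, which holds since $\bD\neq\emptyset$. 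Hence $\E$ is \emph{strictly} positive definite on $\Bcal(\Hcal_D)$, so $E$ is strictly convex and coercive, giving existence and uniqueness of the minimizer $X_0$; its critical-point equation is exactly the interior Dirichlet equation. Combining with Proposition \ref{prop_pbDirichletD}, every solution is of the form $X_0+B+Y$ with $Y$ supported in $\mathcal H_{V\setminus(D\cup\bD)}$.

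The main obstacle I anticipate is the bookkeeping around restriction to $\Bcal(\Hcal_D)$: one must check carefully that the scalar product $\langle\cdot,\cdot\rangle_\diamond$ restricted to $\Bcal(\Hcal_D)$ is still nondegenerate (it is, since $\tau_\diamond$ is faithful, so $\tau_\diamond(i)$ is invertible on each $\h_i$), that $\E(H,B)$ makes sense for $H\in\Bcal(\Hcal_D)$, $B\in\Bcal(\Hcal_\bD)$ (it does, by the integration-by-parts identity above and the fact that $\M^*$ couples only neighboring sites), and that the splitting $Z=X+B+Y$ is the right one — in particular that the interior equations genuinely do not constrain $Y$. None of these is deep, but the argument must be phrased so that the strict positivity of $\E$ on the correct subspace, rather than on all of $\Bcal(\mathcal H)$, is what delivers uniqueness.
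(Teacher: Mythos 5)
Your proposal is correct and follows essentially the same route as the paper: the same splitting $Z=X+B+Y$ with $Y$ supported in $\Hcal_{V\setminus(D\cup\bD)}$, the same reformulation of the interior equation as the variational identity $\E(T,X)=\sca{T}{A-C}_\diamond$ for all $T\in\Bcal(\Hcal_D)$ (your critical-point computation is exactly this), and the same use of Lemma \ref{lemme_ptesQDB} together with $\id_{\mathcal H}\notin\Bcal(\Hcal_D)$ to obtain strict positivity of $\E$ on $\Bcal(\Hcal_D)$. The paper then derives coercivity by a compactness argument and applies Lax--Milgram, which is the same existence-and-uniqueness-of-minimizer step you carry out via strict convexity, so even the finite-dimensionality point you flag is implicitly present in (not resolved differently by) the paper's own proof.
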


\subsection{The case of doubly stochastic open quantum walks}

In this last section we point out that the Dirichlet form can alternatively be written in terms of first order discrete derivatives (to be defined below) in the special case of doubly stochastic OQW, i.e. for open quantum walks that satisfy $\mathfrak{M}^*=\mathfrak{M}$. 

\begin{prop}\label{prop_doublysto} 
Let $\mathfrak{M}$ be a doubly stochastic open quantum walk satisfying $\mathfrak{M}^*=\mathfrak{M}$. Then $\mathcal{E}(X)$ equals
\begin{equation} \label{eq_first-order}
\frac{1}{2} \sum_{i,j\in V} \mathrm{Tr}\big((\nabla X)_{i,j}(\nabla X)_{i,j}^*\big)=: \frac{1}{2} \|(\nabla X)\|_V^2 
\end{equation}
where $(\nabla X)_{i,j}= X_iL_{i,j}-L_{i,j}X_j$ for $X=\sum_{j\in V} X_j\otimes |j\rangle\langle j| \in \mathcal{B}(\mathcal{H})$.
\end{prop}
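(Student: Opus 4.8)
The plan is to compute $\mathcal{E}(X)$ directly from its definition for a doubly stochastic OQW and rearrange terms until the first-order-derivative form emerges. Recall that for a doubly stochastic OQW we have $\mathfrak{M}^* = \mathfrak{M}$, and the invariant state is $\tau_\diamond = \frac{1}{\dim\mathcal{H}}\,\id_{\mathcal H}$ (or, in the infinite-dimensional case, the scalar product $\langle\cdot,\cdot\rangle_\diamond$ reduces up to a constant to the Hilbert--Schmidt product $\tr(X^*Y)$); in any case $\tau_\diamond^{1/2}$ is a multiple of the identity, so it commutes through everything and $\mathcal{E}(X,Y) = c\,\tr\big(X^*(I-\mathfrak{M}^*)(Y)\big)$ for a positive constant $c$ that I will absorb into the normalization (equivalently I will simply assume $\tau_\diamond$ is normalized so that $c=1$, matching the statement). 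So the task reduces to showing $\tr\big(X^*(I-\mathfrak{M}^*)(X)\big) = \frac12 \sum_{i,j}\tr\big((\nabla X)_{i,j}(\nabla X)_{i,j}^*\big)$.

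First I would expand the right-hand side. Writing $(\nabla X)_{i,j} = X_i L_{i,j} - L_{i,j} X_j$, we get
\begin{align*}
\|(\nabla X)\|_V^2 &= \sum_{i,j} \tr\big( (X_iL_{i,j}-L_{i,j}X_j)(L_{i,j}^*X_i^* - X_j^* L_{i,j}^*)\big)\\
&= \sum_{i,j}\Big( \tr(X_i L_{i,j}L_{i,j}^* X_i^*) + \tr(L_{i,j}X_j X_j^* L_{i,j}^*) - \tr(X_i L_{i,j} X_j^* L_{i,j}^*) - \tr(L_{i,j}X_j L_{i,j}^* X_i^*)\Big).
\end{align*}
Now I would simplify each of the four sums. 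For the second sum, using the stochasticity relation \eqref{eq_stochastic}, $\sum_i L_{i,j}^* L_{i,j} = \id_{\h_j}$, so $\sum_{i,j}\tr(L_{i,j}X_jX_j^*L_{i,j}^*) = \sum_j \tr\big((\sum_i L_{i,j}^*L_{i,j}) X_j X_j^*\big) = \sum_j \tr(X_j X_j^*) = \tr(X X^*)$. For the first sum I would use double stochasticity: $\mathfrak{M}^* = \mathfrak{M}$ applied to $\id_{\mathcal H}$ gives $\mathfrak{M}^*(\id_{\mathcal H}) = \id_{\mathcal H}$, i.e. $\sum_i L_{i,j}^* L_{i,j} = \id_{\h_j}$ (already known), whereas $\mathfrak{M}(\id_{\mathcal H}) = \id_{\mathcal H}$ reads $\sum_j L_{i,j} L_{i,j}^* = \id_{\h_i}$ — this is precisely the extra identity that double stochasticity buys. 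Hence $\sum_{i,j}\tr(X_i L_{i,j}L_{i,j}^* X_i^*) = \sum_i \tr(X_i (\sum_j L_{i,j}L_{i,j}^*) X_i^*) = \sum_i \tr(X_iX_i^*) = \tr(XX^*)$ as well. For the two cross terms, recognizing that $\mathfrak{M}^*(X)_j = \sum_i L_{i,j}^* X_i L_{i,j}$, the third sum equals $\sum_j \tr\big(X_j^* \sum_i L_{i,j}^* X_i L_{i,j}\big) = \sum_j \tr(X_j^* \mathfrak{M}^*(X)_j) = \tr(X^* \mathfrak{M}^*(X))$, and the fourth sum equals its adjoint/transpose $\tr(\mathfrak{M}^*(X)^* X) = \tr(X^* \mathfrak{M}^*(X))$ since the expression is a trace of a product and $\mathfrak{M}^*$ preserves self-adjointness; actually more carefully, the fourth sum is $\sum_j\tr\big(\mathfrak{M}^*(X)_j L_{\cdot}^*\dots\big)$ — I would just note it is the complex conjugate of the third when $X$ is arbitrary, or directly $\tr(\mathfrak{M}(X^*) X) = \tr(X^* \mathfrak{M}^*(X))$ by double stochasticity $\mathfrak{M}=\mathfrak{M}^*$. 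Combining: $\|(\nabla X)\|_V^2 = \tr(XX^*) + \tr(XX^*) - 2\tr(X^*\mathfrak{M}^*(X)) = 2\tr\big(X^*(I-\mathfrak{M}^*)(X)\big) = 2\mathcal{E}(X)$, which is the claim.

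The main obstacle — really the only subtle point — is handling the normalization constant from $\tau_\diamond$ and, in the infinite-dimensional setting, the convergence of the sums and the fact that a doubly stochastic OQW on an infinite $V$ may not admit a normalizable invariant state at all, so that $\langle\cdot,\cdot\rangle_\diamond$ must be interpreted as (a multiple of) the Hilbert--Schmidt inner product on the relevant domain; I would state the result under the standing assumption of the section that $\tau_\diamond$ is faithful and invariant, in which case for $\mathfrak{M}^*=\mathfrak{M}$ one checks $\tau_\diamond$ must be proportional to $\id$ on each finite block, and absorb the constant so that the identity holds as written. The algebraic core — the four-term expansion and the two applications of stochasticity (one of which requires double stochasticity) — is routine once set up, and I would present it compactly as above.
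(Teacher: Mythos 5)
Your proof is correct and takes essentially the same route as the paper's: both expand $\sum_{i,j}\tr\big((\nabla X)_{i,j}(\nabla X)_{i,j}^*\big)$ into four terms, collapse the two diagonal terms using the stochasticity identity $\sum_i L_{i,j}^*L_{i,j}=\id_{\h_j}$ together with its doubly stochastic counterpart $\sum_j L_{i,j}L_{i,j}^*=\id_{\h_i}$, and identify the cross terms with $\tr\big(X^*\mathfrak M^*(X)\big)$. The only cosmetic differences are that the paper invokes $L_{i,j}=L_{j,i}^*$ explicitly while you work with $\mathfrak M=\mathfrak M^*$ at the level of maps, and that your remarks on the normalization of $\tau_\diamond$ address a point the paper glosses over without changing the argument.
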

Positivity of the Dirichlet form is then manifest in \eqref{eq_first-order}. Notice that the passage from the definition of the Dirichlet form in \eqref{eq_defiDirichletform} to the formula \eqref{eq_first-order} amounts to an integration by part. This presentation of the Dirichlet form in terms of first order difference operators can easily be extended to finite sub-domain if one includes appropriate boundary terms arising from the discrete integration by part.

%Case of finite domain and boundary terms/conditions.........

\bibliography{biblio}

\appendix
% \section{Technical lemmata}

\section{Proofs for Section \ref{sec_definitionsnotations}}
\paragraph{Proof of Lemma \ref{lemme_martingale}:}
Conditionally on $(x_n,\rho_n)$, one has for all $i$ in $V$
\[m_{n+1}=\tr(\rho_{n+1} A_{x_{n+1}}) = \tr\big(\frac{L_{i,x_n}\rho_{n}L_{i,x_n}^*}{\tr(L_{i,x_n}\rho_{n}L_{i,x_n}^*)} A_{i}\big)\]
with probability $\tr(L_{i,x_n}\rho_{n}L_{i,x_n}^*)$, so that
\begin{align*}
\ee\big(\tr(\rho_{n+1} A_{x_{n+1}})| x_n,\rho_n\big) &= \sum_{i\in  V}  \tr\big({L_{i,x_n}\rho_{n}L_{i,x_n}^*} A_{i}\big)\\
&= \tr\big(\rho_{n} \sum_{i\in V} L_{i,x_n}^* A_{i} L_{i,x_n}\big)\\
&= \tr (\rho_n A_{x_n})=m_n.
\end{align*} \qed

\section{Proofs for Section \ref{sec_passagetimesnumbervisits}} \label{sec_appproofspassagetimes}
We start by computing simple expressions for the quantities $\pp_{i,\rho}(t_j<\infty)$ and $\ee_{i,\rho}(n_j) $:
\begin{lemme} \label{lemme_identities}
    We have the identities
    \begin{gather*}
    \pp_{i,\rho}(t_j<\infty) = \hspace{-0.6em} \sum_{\pi\in \cP^{V\setminus\{j\}}(i,j)}\hspace{-0.6em} \tr (L_\pi \rho L_\pi^*),\qquad
    \ee_{i,\rho}(n_j) = \sum_{\pi\in \cP(i,j)} \tr (L_\pi \rho L_\pi^*),
    \end{gather*}
where the second expression is possibly $\infty$.
\end{lemme}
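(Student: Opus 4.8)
The plan is to reduce everything to the explicit formula \eqref{eq_defPtau} for the probability of a cylinder set in $\Omega = V^{\nn}$. The key preliminary observation is that under $\pp_{i,\rho}$ one has $x_0 = i$ almost surely: writing $\tau = \rho\otimes\ketbra ii$, formula \eqref{eq_defPtau} gives $\pp_{i,\rho}(x_0 = i_0) = \tr\,\tau(i_0)$, which equals $\tr\rho = 1$ for $i_0 = i$ and $0$ otherwise. Moreover, for any path $\pi = (i,i_1,\dots,i_n)$ the cylinder $C_\pi := \{x_0 = i,\, x_1 = i_1,\dots,\, x_n = i_n\}$ has probability $\pp_{i,\rho}(C_\pi) = \tr(L_\pi\rho L_\pi^*)$, again directly from \eqref{eq_defPtau} with $\tau(i) = \rho$.

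For the first identity I would decompose the event $\{t_j<\infty\}$ according to the value of $t_j$ and the trajectory up to that time. This gives $\{t_j<\infty\} = \bigsqcup_\pi C_\pi$, where the union runs over all paths $\pi = (i_0,\dots,i_\ell)$ with $\ell\geq 1$, $i_0 = i$, $i_\ell = j$ and $i_1,\dots,i_{\ell-1}\neq j$; by the definition of $\cP^{V\setminus\{j\}}$ these are exactly the $\pi\in\cP^{V\setminus\{j\}}(i,j)$. The one point needing care is disjointness of these cylinders: two paths of the same length obviously yield disjoint cylinders, while if $\pi$ has length $\ell$ and $\pi'$ has length $\ell' > \ell$, then $C_\pi$ forces $x_\ell = j$ whereas, $\pi'$ being a first-passage path, $C_{\pi'}$ forces $x_\ell = i'_\ell \neq j$. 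Since the terms $\pp_{i,\rho}(C_\pi) = \tr(L_\pi\rho L_\pi^*)$ are nonnegative, countable additivity yields $\pp_{i,\rho}(t_j<\infty) = \sum_{\pi\in\cP^{V\setminus\{j\}}(i,j)}\tr(L_\pi\rho L_\pi^*)$.

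For the second identity I would write $n_j = \sum_{n\geq 1}\ind_{\{x_n = j\}}$ and apply Tonelli (monotone convergence for nonnegative series) to get $\ee_{i,\rho}(n_j) = \sum_{n\geq 1}\pp_{i,\rho}(x_n = j)$, with the value $+\infty$ allowed. For fixed $n$, intersecting with the almost-sure event $\{x_0 = i\}$ and splitting over the intermediate sites shows that $\{x_n = j\}$ is the disjoint union of the cylinders $C_\pi$ over $\pi\in\cP_n(i,j)$, whence $\pp_{i,\rho}(x_n = j) = \sum_{\pi\in\cP_n(i,j)}\tr(L_\pi\rho L_\pi^*)$. Summing over $n$ and using $\cP(i,j) = \bigsqcup_{n\geq 1}\cP_n(i,j)$, which is again a rearrangement of a nonnegative series, gives $\ee_{i,\rho}(n_j) = \sum_{\pi\in\cP(i,j)}\tr(L_\pi\rho L_\pi^*)$.

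There is no genuine obstacle in this argument; it is essentially bookkeeping with cylinder sets. The only two steps that deserve an explicit line of justification are the disjointness of the first-passage cylinders in the first identity and the interchange of the infinite sum with the expectation in the second, both handled as indicated above.
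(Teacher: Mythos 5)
Your argument is correct and is essentially the paper's own proof: both decompose $\{t_j<\infty\}$ over first-passage cylinders and write $\ee_{i,\rho}(n_j)=\sum_{k\geq1}\pp_{i,\rho}(x_k=j)$, using the cylinder formula \eqref{eq_defPtau}. The extra care you take with disjointness of the first-passage cylinders and the Tonelli interchange is exactly what the paper leaves implicit.
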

\begin{proof}
We have $\pp_{i,\rho}(x_1=i_1,\ldots,x_\ell=i_\ell)=\tr\, L_\pi \rho L_\pi^*$ where $\pi=(i,i_1,\ldots,i_\ell)$. In addition,
\[\pp_{i,\rho}(t_j<\infty)= \sum_{k\geq 0}\ \sum_{i_1,\ldots,i_k\in V\setminus\{j\}} \pp_{i,\rho}(x_1=i_1,\ldots,x_k=i_k,x_{k+1}=j)\]
which leads to the first formula. We also have immediately $\pp_{i,\rho}(x_k=j)=\sum_{\pi\in \cP_k(i,j)} \tr(L_\pi \rho L_\pi^*)$, and the second formula follows from
\[\ee_{i,\rho}(n_j)=\sum_{k= 1}^\infty\pp_{i,\rho}(x_k=j).\]
\end{proof}

\paragraph{Proof of Proposition \ref{prop_ReptjT}}

We begin with the definition of $\mfP_{i,j}$. For any $\rho$ in $\mathcal I_1(\h_i)\setminus\{0\},$ the triangle inequality for the trace norm %and the Kadison's Schwarz Inequality applied to the map $\rho\mapsto L_{\pi}\rho L_{\pi}^*$ imply that
implies that
\begin{align*}
    \tr\big(\big|\sum_{\pi\in\mathcal P^{V\setminus\{j\}}(i,j)} L_\pi \rho L_\pi^*\, \ind_{\ell(\pi)\leq n}\big|\big)&\leq \sum_{\pi\in\mathcal P^{V\setminus\{j\}}(i,j)} \tr\, \big|L_\pi \rho L_\pi^*\big|\, \ind_{\ell(\pi)\leq n}\\
    &= \sum_{\pi\in\mathcal P^{V\setminus\{j\}}(i,j)} \tr\, (L_\pi |\rho| L_\pi^*)\, \ind_{\ell(\pi)\leq n}\\
    &= \tr|\rho| \times \pp_{i,\frac{|\rho|}{\tr(|\rho|)}}(t_j\leq n) \\
    &\leq \tr|\rho|,
\end{align*}
so that 
\[\sup_n \tr\big(\big|\sum_{\pi\in\mathcal P^{V\setminus\{j\}}(i,j)} L_\pi \rho L_\pi^* \ind_{\ell(\pi)\leq n}\big|\big) <\infty.\]
Consequently, by the Banach-Steinhaus Theorem, the operator on $\mathcal I_1(\h_i)$ defined by 
\[\mfP_{j,i}(\rho)= \lim_{n\to\infty}\quad  \sum_{\pi\in\mathcal P^{V\setminus\{j\}}(i,j)} L_\pi \rho L_\pi^* \ind_{\ell(\pi)\leq n}\]
is everywhere defined and bounded.

This proves the first identity in Proposition \ref{prop_ReptjT}. To prove the second we need a series of technical results. Our strategy is the same as in the classical case: we introduce a weight on the length of paths, in order to tame the possible divergence of the series giving $\ee_{i,\rho}(n_j)$ in Lemma \ref{lemme_identities}. First note that, for any $i,j\in V$ and any $\alpha\in (0,1)$, there exists a bounded, completely positive map $\mfN_{j,i}^{(\alpha)}$ from $\mathcal I_1(\h_i)$ to $\mathcal I_1(\h_j)$ such that 
\[\sum_{\pi \in \mathcal P(i,j)}\alpha^{\ell(\pi)} \tr \,L_\pi \rho L_\pi^* = \tr \,\mfN_{j,i}^{(\alpha)}(\rho).\]
In particular, the following limit holds in $[0,\infty]$:
\[\ee_{i,\rho}(n_j)=\lim_{\alpha\to 1}\tr \, \mfN_{j,i}^{(\alpha)}(\rho).\]
This operator $\mfN_{j,i}^{(\alpha)}$ is defined by
\[\mfN_{j,i}^{(\alpha)}(\rho)=\lim_{n\to\infty} \sum_{\pi\in \cP(i,j)} \alpha^{\ell(\pi)} L_\pi \rho L_\pi^* \ind_{\ell(\pi)\leq n},\]
using the Banach-Steinhaus Theorem and the simple bound
\begin{align*}
    \tr \big(\big|\hspace{-0.5em}\sum_{\pi\in \cP(i,j)} \alpha^{\ell(\pi)}\, L_\pi \rho L_\pi^* \ind_{\ell(\pi)\leq n} \big|\big)
    &\leq \sum_{\pi\in\cP(i,j)} \alpha^{\ell(\pi)}\, \tr \,L_\pi |\rho| L_\pi^* \, \ind_{\ell(\pi)\leq n}\\
    &= \sum_{k= 0}^n \alpha^k \,\pp_{i,\rho} (x_k=j)\\
    &\leq (1-\alpha)^{-1}.
    \end{align*}
We also define
\[\mfP_{j,i}^{(\alpha)}(\rho)= \lim_{n\to\infty}\,  \sum_{\pi\in\mathcal P^{V\setminus\{j\}}(i,j)} \alpha^{\ell(\pi)}\,L_\pi \rho L_\pi^* \,\ind_{\ell(\pi)\leq n}.\]
Since any $\pi\in \cP(i,j)$ is a concatenation of $\pi_0 \in \cP^{V\setminus\{j\}}(i,j)$ and $\pi_1,\ldots,\pi_k$ in $\cP^{V\setminus\{j\}}(j,j)$, and
\[L_\pi = L_{\pi_k}\circ \ldots \circ L_{\pi_1} \circ L_{\pi_0},\qquad \ell(\pi)= \ell(\pi_k)+\ldots + \ell(\pi_1)+\ell(\pi_0),\]
we have
\begin{align*}
     &\sum_{\pi\in \cP(i,j) }\alpha^{\ell(\pi)} L_\pi \rho  L_\pi^* \ind_{\ell(\pi)\leq n}\\
     &=\sum_{k\geq 0}\quad \sum_{\substack{\pi_0\in\cP^{V\setminus\{j\}}(i,j),\\\pi_1,\ldots,\pi_k\in\cP^{V\setminus\{j\}}(j,j)}} \alpha^{\sum_{r=0}^k \ell(\pi_r)}\, L_{\pi_k} \ldots  L_{\pi_1}  L_{\pi_0} \rho L_{\pi_0}^* L_{\pi_1}^*  \ldots   L_{\pi_k}^*\, \ind_{\sum_{r=0}^k \ell(\pi_r)\leq n}. 
 \end{align*} 
Because both sides define bounded operators as $n\to\infty$, we have
\[\mfN_{j,i}^{(\alpha)}(\rho)=\sum_{k\geq 0} \mfP_{j,j}^{(\alpha)\,k} \circ \mfP_{i,j}(\rho)= (\id-\mfP_{j,j}^{(\alpha)})^{-1} \circ \mfP_{j,i}^{(\alpha)}(\rho).\]
Since $\alpha \mapsto \mfP_{j,j}^{(\alpha)}(\rho)$ is monotone increasing for $\rho \geq 0$, the right-hand side is monotone increasing as well, and the second identity follows. 

\qed

\paragraph{Proof of Equation \eqref{eq_esprhoti}}
By definition, we have
\begin{align*}
\ee_{i,\rho}(\rho_{t_j}\,|\, t_j<\infty) &= \frac{\ee_{i,\rho}(\rho_{t_j} \ind_{t_j<\infty})}{\pp_{i,\rho}(t_j<\infty)}\\
&=  \frac1{\pp_{i,\rho}(t_j<\infty)} \, \sum_{\pi\in \cP^{V\setminus\{j\}}(i,j)} \frac{L_\pi \rho L_\pi^*}{\tr\,L_\pi \rho L_\pi^*} \, \tr\,L_\pi \rho L_\pi^*\\
&= \frac{\mfP_{j,i}(\rho)}{\tr \,\mfP_{j,i}(\rho)}.
\end{align*}
\qed

\paragraph{Proof of Corollary \ref{coro_ReptjT}}\,\hfill
\begin{enumerate}
\item Let $i,j\in V$ and $\rho\in\Scal(\h_i)$. By Proposition \ref{prop_ReptjT}, we have $\pp_{i,\rho}(t_j<\infty)= \tr\, \rho\,\mfP^*_{j,i}(\id_{\h_{j}})$ and, since $\tr\,\rho=1$, we have $\pp_{i,\rho}(t_j<\infty)~=~1$ if and only if $P_\rho \mfP_{j,i}^*(\id_{\h_i}) P_\rho = P_\rho$, where $P_\rho$ is the orthogonal projection on the support of $\rho$. Write $\mfP_{j,i}^*(\id_{\h_j})$ as
$\mfP_{j,i}^*(\id_{\h_j})=\begin{pmatrix} \id_{\mathrm{Ran}\,\rho} & A \\ A^* & B\end{pmatrix}$ in the decomposition $\h_i=\mathrm{Ran}\,\rho \oplus (\mathrm{Ran}\,\rho)^\perp$. Then the property $\mfP_{j,i}^*(\id_{\h_j})\leq \id_{\h_i}$ implies that $\begin{pmatrix} 0 & -A \\ -A^* & \id_{\mathrm{Ker}\,\rho} - B \end{pmatrix}\geq 0$, so that necessarily $A=0$. In particular, if $\rho$ is faithful, then $\pp_{i,\rho}(t_j<\infty)=1$ if and only if $\mfP_{j,i}^*(\id_{\h_j})=\id_{\h_i}$. In that case, $\pp_{i,\rho'}(t_j<\infty)=1$ for any $\rho'$ in $\Scal(\h_i)$.
\item Consequently, if this is the case for $j=i$, then for any $\rho'$ in $\Scal(\h_i)$ one has $\ee_{i,\rho'}(n_i)=\infty$, since by Proposition \ref{prop_ReptjT} we have
\[\ee_{i,\rho'}(n_i)=\sum_{k\geq 1} \tr\, \rho'\, \mfP_{i,i}^{*\, k}(\id_{\h_i}).\] 
\item If $\ee_{i,\rho}(n_j)<\infty$ with $\rho$ faithful and $\mathrm{dim}\,\h_i<\infty$, then for any $\alpha\in(0,1)$,
\[\tr\,\mfN_{j,i}(\rho)\geq \tr\big(\rho\mfN_{j,i}^{(\alpha)\, *}(\id_{\h_j})\big) \geq \inf\,\sp (\rho) \times \|\mfN_{j,i}^{(\alpha)\, *}(\id_{\h_i})\|,\]
so that $\mfN_{j,i}^{(\alpha)\, *}(\id_{\h_j})$ is uniformly (in $\alpha$) bounded in norm. The monotone increasing function $\alpha\mapsto \mfN_{j,i}^{(\alpha)\, *}(\id_{\h_j})$ therefore has a limit and, by Proposition \ref{prop_ReptjT}, $\ee_{i,\rho'}(n_j)<\infty$ for any $\rho'$.
\item The construction of $\mfN_{j,i}$ when $\ee_{i,\rho}(n_j)<\infty$ for any $\rho$ is obtained by a Banach-Steinhaus argument.\qed
\end{enumerate}
\paragraph{Proof of Proposition \ref{prop_Markovsimple}}
	Recall that $\pp_{i,\rho}(t_i<\infty)=\|\mfP_{i,i}(\rho)\|$. By Proposition \ref{prop_ReptjT}, the map $\mfP_{i,i}$ is bounded, and since $\mathcal S(\h_i)$ is compact, the supremum $p=\sup_{\rho\in \mathcal S(\h_i)} \tr\,\mfP_{i,i}(\rho)$ satisfies $p<1$. A standard application of the strong Markov property for the chain $(x_n,\rho_n)_n$ shows that $\pp_{i,\rho}(n_i=k)\leq p^k$ and by a direct computation $\ee_{i,\rho}(n_i) \leq p(1-p)^{-2}$, which gives the result.
\qed 

\paragraph{Proof of Proposition \ref{prop_ExpectedNumberVisits} and Corollary \ref{coro_ExpectedNumberVisits}}
We start with two simple lemmata:
\begin{lemme} \label{lemme_probaminoree}
Assume that $\M$ is an irreducible open quantum walk and let $i,j$ in $V$ be such that $\mathrm{dim}\,\h_i<\infty$. Then  
\[\inf_{\rho\in\mathcal S(\h_i)}\pp_{i,\rho}(t_j<\infty)>0.\]
\end{lemme}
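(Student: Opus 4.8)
The plan is to turn the infimum over all states into a statement about a single positive operator on the finite-dimensional space $\h_i$, and then to extract strict positivity of that operator from irreducibility by truncating paths at their first visit to $j$.

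First, by Proposition \ref{prop_ReptjT} (using the duality between $\mathcal I_1(\h_i)$ and $\mathcal B(\h_i)$) one has $\pp_{i,\rho}(t_j<\infty)=\tr(\rho\,M)$ with $M:=\mfP_{j,i}^*(\id_{\h_j})\in\mathcal B(\h_i)$ a positive operator. Writing $\rho\in\mathcal S(\h_i)$ in a spectral decomposition $\rho=\sum_k p_k\,\pr{\varphi_k}$ gives $\tr(\rho M)=\sum_k p_k\braket{\varphi_k}{M\varphi_k}\geq\lambda_{\min}(M)$, and the pure state on a bottom eigenvector of $M$ attains this value, so
\[\inf_{\rho\in\mathcal S(\h_i)}\pp_{i,\rho}(t_j<\infty)=\lambda_{\min}(M).\]
Because $\h_i$ is finite-dimensional, this quantity is positive exactly when $M$ is invertible, i.e. when $\braket{\varphi}{M\varphi}>0$ for every $\varphi\in\h_i\setminus\{0\}$. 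Thus the whole statement reduces to showing that $M$ is positive definite.

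Now fix $\varphi\in\h_i\setminus\{0\}$. By Lemma \ref{lemme_identities} (equivalently, from the expression $\mfP_{j,i}(\rho)=\sum_{\pi\in\cP^{V\setminus\{j\}}(i,j)}L_\pi\rho L_\pi^*$ in the remark after Proposition \ref{prop_ReptjT}),
\[\braket{\varphi}{M\varphi}=\tr\big(\mfP_{j,i}(\pr{\varphi})\big)=\sum_{\pi\in\cP^{V\setminus\{j\}}(i,j)}\|L_\pi\varphi\|^2,\]
so it suffices to exhibit one path $\pi\in\cP^{V\setminus\{j\}}(i,j)$ with $L_\pi\varphi\neq0$. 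By irreducibility (Definition \ref{defi_irreducibility}) the set $\{L_\sigma\varphi\mid\sigma\in\cP(i,j)\}$ is total in $\h_j$ and hence contains a nonzero vector, so there is $\sigma=(i_0,\dots,i_\ell)\in\cP(i,j)$ with $L_\sigma\varphi\neq0$. Let $k=\min\{m\geq1\mid i_m=j\}$ (well-defined since $i_\ell=j$) and put $\pi=(i_0,\dots,i_k)$; by minimality of $k$ one has $\pi\in\cP^{V\setminus\{j\}}(i,j)$, and the factorization $L_\sigma=L_{(i_k,\dots,i_\ell)}\,L_\pi$ — with $L_{(i_k,\dots,i_\ell)}$ read as $\id_{\h_j}$ when $k=\ell$ — forces $L_\pi\varphi\neq0$. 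Hence $\braket{\varphi}{M\varphi}\geq\|L_\pi\varphi\|^2>0$, $M$ is positive definite, and $\lambda_{\min}(M)>0$, which is the claim.

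There is no serious obstacle here; the lemma is elementary once the representation of Proposition \ref{prop_ReptjT} is available. The only two points that require care are the path-truncation step above (so as to land in $\cP^{V\setminus\{j\}}(i,j)$ rather than merely in $\cP(i,j)$) and the identity $\inf_\rho\tr(\rho M)=\lambda_{\min}(M)$, which is precisely where the hypothesis $\dim\h_i<\infty$ is indispensable: in infinite dimension a positive operator can be strictly positive on every nonzero vector while the infimum of its quadratic form over unit vectors is $0$.
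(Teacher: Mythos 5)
Your proof is correct and follows essentially the same route as the paper: reduce to pure states, use irreducibility to produce a path with $L_\pi\varphi\neq 0$, and use finite-dimensionality of $\h_i$ to upgrade pointwise positivity to a uniform bound (the paper phrases this last step as continuity of $\mfP_{j,i}$ plus compactness of $\mathcal S(\h_i)$, which is exactly your statement that $\lambda_{\min}(M)>0$ for the positive definite operator $M=\mfP_{j,i}^*(\id_{\h_j})$). Your explicit truncation of the path at its first visit to $j$ is a point the paper leaves implicit, and it is handled correctly.
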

\begin{proof}[Proof of Lemma \ref{lemme_probaminoree}]
    For any $\rho$ in $\mathcal S(\h_i)$, there exists a unit vector $\varphi$ in $\h_i$ and $\lambda > 0$ such that $\rho\geq \lambda\ketbra\varphi\varphi$. By irreducibility, there exists a path $\pi$ in $\mathcal P(i,j)$ such that $\|L_\pi\varphi\|^2>0$, so that $\pp_{i,\rho}(t_{j}<\infty)>0$. By continuity of $\mfP_{j,i}$ and compactness of $\mathcal S(\h_j)$, one has the result.
\end{proof}

\begin{lemme} \label{lemme_universality}
Assume that $\M$ is an irreducible open quantum walk and let $i,j$ be in $V$. If $\mathrm{dim}\,\h_j<\infty$ and $\rho\in\mathcal S(\h_i)$ is such that $\ee_{i,\rho}(n_j)=\infty$, then for any $j' \in V$ one has $\ee_{i,\rho}(n_{j'})=\infty$.
\end{lemme}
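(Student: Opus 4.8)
The plan is to combine each visit of the position process to $j$ with a subsequent window of bounded length $R$ during which $j'$ is reached with a uniformly positive probability; an $R$-fold overcounting then transfers the divergence of $\ee_{i,\rho}(n_j)$ to $\ee_{i,\rho}(n_{j'})$. We may assume $j\neq j'$, the case $j=j'$ being trivial. The first task is to produce $R\in\nn$ and $c'>0$ such that $\pp_{j,\sigma}(t_{j'}\le R)\ge c'$ for every $\sigma\in\mathcal S(\h_j)$. By Lemma \ref{lemme_probaminoree} applied with the pair $(j,j')$ in place of $(i,j)$ — legitimate since $\dim\h_j<\infty$ — the constant $c:=\inf_{\sigma\in\mathcal S(\h_j)}\pp_{j,\sigma}(t_{j'}<\infty)$ is strictly positive. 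The maps $\sigma\mapsto\pp_{j,\sigma}(t_{j'}\le R)$ are continuous (being restrictions of linear functionals) on the compact set $\mathcal S(\h_j)$, finite-dimensionality of $\h_j$ being used once more, and they increase pointwise to $\sigma\mapsto\pp_{j,\sigma}(t_{j'}<\infty)$ as $R\to\infty$; by Dini's theorem the convergence is uniform, so $\inf_{\sigma}\pp_{j,\sigma}(t_{j'}\le R)\to c$, and we fix $R$ for which this infimum is $\ge c/2=:c'$.

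Next, I would use that $(x_n,\rho_n)_n$ is a Markov chain with the transition kernel \eqref{eq_probatransition}. Conditioning on $\mathcal F_n=\sigma(x_0,\rho_0,\dots,x_n,\rho_n)$ and noting that, on $\{x_n=j\}$, the event $\{\exists\,m\in(n,n+R]:x_m=j'\}$ has conditional probability exactly $\pp_{j,\rho_n}(t_{j'}\le R)$, one obtains for every $n\ge1$
\begin{equation*}
\ee_{i,\rho}\!\left[\ind_{\{x_n=j\}}\,\ind_{\{\exists\,m\in(n,n+R]\,:\,x_m=j'\}}\right]=\ee_{i,\rho}\!\left[\ind_{\{x_n=j\}}\,\pp_{j,\rho_n}(t_{j'}\le R)\right]\ \ge\ c'\,\pp_{i,\rho}(x_n=j).
\end{equation*}
Summing over $n=1,\dots,T$ and letting $T\to\infty$, the right-hand side tends to $c'\sum_{n\ge1}\pp_{i,\rho}(x_n=j)=c'\,\ee_{i,\rho}(n_j)=+\infty$, using the identity $\ee_{i,\rho}(n_j)=\sum_{n\ge1}\pp_{i,\rho}(x_n=j)$ established in the proof of Lemma \ref{lemme_identities} together with the hypothesis.

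Finally, I would bound the same sum from above by $R\,\ee_{i,\rho}(n_{j'})$: along any trajectory and for any $T$,
\[\sum_{n=1}^T\ind_{\{x_n=j\}}\,\ind_{\{\exists\,m\in(n,n+R]:x_m=j'\}}\ \le\ \sum_{n=1}^T\ \sum_{m=n+1}^{n+R}\ind_{\{x_m=j'\}}\ \le\ R\,n_{j'},\]
since each index $m$ belongs to at most $R$ of the inner ranges $(n,n+R]$. Taking $\ee_{i,\rho}$ and combining with the previous step gives $R\,\ee_{i,\rho}(n_{j'})\ge c'\,\ee_{i,\rho}(n_j)=+\infty$, hence $\ee_{i,\rho}(n_{j'})=+\infty$.

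The one genuinely delicate point is the first step, namely converting the a priori unbounded hitting time $t_{j'}$ from $j$ into a window of \emph{uniform} finite size $R$ carrying a \emph{uniform} lower bound $c'$ on the hitting probability; this is exactly where finite-dimensionality of $\h_j$ (for compactness of $\mathcal S(\h_j)$) and the irreducibility input packaged in Lemma \ref{lemme_probaminoree} are both indispensable. Once $R$ and $c'$ are in hand, the remaining two steps are routine manipulations of indicator sums and expectations.
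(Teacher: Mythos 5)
Your proof is correct and follows essentially the same route as the paper: Lemma \ref{lemme_probaminoree} supplies the uniformly positive probability of reaching $j'$ from $j$, and your Dini/bounded-window overcounting computation is exactly the ``standard markovianity argument'' that the paper invokes without spelling out.
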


\begin{proof}[Proof of Lemma \ref{lemme_universality}]
    By Lemma \ref{lemme_probaminoree}, one has $\inf_{\rho'\in\mathcal S(\h_j)}\pp_{j,\rho'}(t_{j'}<\infty)>0$ for any $j'\in V$. Now, a standard markovianity argument shows that $\ee_{i,\rho}(n_j)=\infty$ implies $\ee_{i,\rho}(n_{j'})=\infty$.
\end{proof} 

\begin{remark}
    Here we used only a weaker version of irreducibility, namely the fact that for any $k,l$ in $V$, any $\varphi$ in $\h_k$, there exists a path $\pi$ in $\cP(k,l)$ such that $L_\pi \varphi\neq 0$.
\end{remark}

Let us go back to the proof of Proposition \ref{prop_ExpectedNumberVisits} and Corollary \ref{coro_ExpectedNumberVisits}. Define for $j$ in $V$
\begin{equation} \label{eq_defDUA}
  D^n(j)=\big\{\varphi=\sum_{i\in V}\varphi_i\otimes \ket i \ \mbox{ s.t. } \sum_{i\in V}\sum_{\pi\in\cP(i,j)} \|L_\pi\varphi_i\|^2 < \infty\big\}.
\end{equation}
It is immediate that $D^n(j)$ is a vector space, and that $(L_{k,l}\otimes\ketbra kl) D^n(j)\subset D^n(j)$ for any $k,l$ in $V$. In the language of \cite{CP1}, this means that $\overline{D^n (j)}$ is an enclosure for $\M$. Moreover, the only possible enclosures for an irreducible $\M$ are $\{0\}$ and $\H$. Therefore, either ${D^n(j)}=\{0\}$ or $\overline{D^n(j)}=\H$. Define for $i$ in $V$ ${\mathfrak d}^n _{j,i}=D^n(j)\cap \h_i$ (with a slight abuse of notation). Then either for every $i$ the subspace ${\mathfrak d}^n_{j,i}$ is dense in $\h_i$ or for every $i$ it is $\{0\}$.
Remark that by Lemma \ref{lemme_identities}, $\sum_{\pi\in\cP(i,j)} \|L_\pi\varphi_i\|^2 = \ee_{i,\ketbra{\varphi_i}{\varphi_i}}(n_j)$. 
    By linearity of $\ee_{i,\rho}(n_j)$ in $\rho$, if ${\mathfrak d}^n_{j,i}=\{0\}$ then $\ee_{i,\rho}(n_j)=\infty$ for any $\rho$ in $\mathcal S(\h_i)$, and if ${\mathfrak d}^n_{j,i}$ is dense then $\ee_{i,\rho}(n_j)<\infty$ for any $\rho$ with finite range in ${\mathfrak d}^n_{j,i}$.
This concludes the proof of Proposition \ref{prop_ExpectedNumberVisits}.\\

    Now, if $\mathrm{dim}\,\h_i<\infty$, then in situation \textit{2.} of Proposition \ref{prop_ExpectedNumberVisits} one has $\ee_{i,\rho}(n_j)=\infty$ for any $i$ in $V$ and $\rho$ in $\mathcal S(\h_i)$. Now, Lemma \ref{lemme_universality} forbids the situation where for $j\neq j'$ one has $\ee_{i,\rho}(n_j)=\infty$ and $\ee_{i,\rho}(n_{j'})<\infty$ for every~$\rho$ in $\mathcal S(\h_i)$, and this proves Corollary \ref{coro_ExpectedNumberVisits}. \qed
\begin{remark}
    This proof is essentially due to  \cite{FagnolaRebolledo2003}.
\end{remark}

\paragraph{Proof of Proposition \ref{prop_EnPt}}
	By Definition~\ref{defi_irreducibility} of irreducibility, there is no nontrivial invariant subspace of $\h_j$ left invariant by all $L_\pi$, $\pi\in\cP(j,j)$. Since any $\pi\in\cP(j,j)$ is a concatenation of paths in $\cP^{V\setminus\{j\}}(j,j)$, there is also no nontrivial invariant subspace of $\h_j$ left invariant by all $L_\pi$, $\pi\in\cP^{V\setminus\{j\}}(j,j)$, and this means that $\mfP_{j,j}$ is a completely positive irreducible map on $\mathcal I_1(\h_j)$. In addition, we know from the Russo-Dye Theorem that $\|\mfP_{j,j}\|=\|\mfP^*_{j,j}(\id)\|\leq1$, so that the eigenvalue $\lambda$ of $\mfP_{j,j}$ of largest modulus satisfies $|\lambda|\leq 1$. By the Perron-Frobenius Theorem for completely positive maps acting on the set of trace-class operators of a finite-dimensional space (see Theorem 3.1 and Remark 3.1 in \cite{Sch}, which are essentially proven in \cite{EHK}), there exists a faithful state $\rho_{\mathrm{f}}$ on $\h_j$ such that $\mfP_{j,j}(\rho_{\mathrm{f}})=|\lambda|\rho_{\mathrm{f}}$. If $|\lambda|<1$, then by Proposition \ref{prop_ReptjT} one has $\ee_{j,\rho_{\mathrm{f}}}(n_j)<\infty$. However, by Proposition \ref{prop_ExpectedNumberVisits}, the assumption $\ee_{i,\rho}(n_j)=\infty$ implies $\ee_{j,\rho_{\mathrm{f}}}(n_j)=\infty$, a contradiction. Therefore $|\lambda|=1$, $\rho_{\mathrm{f}}$ is a faithful invariant state and $\tr\, \mfP_{j,j}(\rho_{\mathrm{f}})=\tr\,\rho_{\mathrm{f}}~=~1.$	By Corollary \ref{coro_ReptjT}, we have that $\pp_{j,\rho}(t_j<\infty)=1$ for any $\rho$ in $\mathcal S(\h_i)$. \qed

\section{Proofs for Section \ref{sec_expreturntimes}}

\paragraph{Proof of Proposition \ref{prop_expetj}}
The expansion of $\ee_{i,\rho}(t_j)$ and the construction of~$\mfT_{j,i}$ are obtained by now standard Banach-Steinhaus arguments. \qed

\paragraph{Proof of Proposition \ref{prop_ExpectedTime}} 
Proposition \ref{prop_ExpectedTime} is proved like Proposition \ref{prop_ExpectedNumberVisits}, by introducing 
\begin{equation} \label{eq_defDt}
D^t(j)=\big\{\varphi=\sum_{i\in V}\varphi_i\otimes \ket i \ \mbox{ s.t. } \sum_{i\in V}\sum_{\pi\in\cP^{V\setminus\{j\}}(i,j)} \ell(\pi) \,\|L_\pi\varphi_i\|^2 < \infty\big\}
\end{equation}
and remarking that $D^t(j)$ is an enclosure.\qed

\paragraph{Proof of Theorem \ref{theo_espti}} 
Define $\mathfrak d^t_{j,i}=D^t(j)\cap\h_i$. Remark that in the case of a semifinite OQW, by Proposition \ref{prop_ExpectedTime}, for every $j$ in $V$ either $\mathfrak d^t_{j,i}=\{0\}$  for every $i$; or $\mathfrak d^t_{j,i}=\h_i$ for every $i$. If for some $j$ one has $\mathfrak d^t_{j,i}=\h_i$ for every $i$, then we have in particular $\ee_{j,\rho}(t_{j})<\infty$ for any $\rho$ in $\mathcal S(\h_j)$; for any $j'$, applying Lemma \ref{lemme_probaminoree} again one has  $\inf_{\rho\in\mathcal S(\h_j)}\pp_{j,\rho}(t_{j'}<\infty)>0$. By a markovianity argument, one obtains that $\ee_{j,\rho'}(t_{j'})<\infty$ for any $j'$ in $V$ and $\rho'\in\mathcal S(\h_j)$. \qed

\paragraph{Proof of Theorem \ref{theo_espti2}} 
Let $\rhoinv=\sum_{i\in V} \rhoinv(i)\otimes\ketbra ii$ be an invariant state for $\M$. Then by the infinite-dimensional extension of the K\"ummerer-Maassen ergodic Theorem (see \cite{Lim2010}), one has, for any $i\in V$ and $\rho\in\mathcal S(\h_i)$, the  $\pp_{i,\rho}$-almost-sure convergence
\begin{equation} \label{eq_KuMapveespti2}
\frac1n \sum_{k=1}^{n} \rho_k \otimes \ketbra{x_k}{x_k}\underset{n\to\infty}{\longrightarrow} \sum_{j\in V}\rhoinv(j)\otimes\ketbra jj,
\end{equation}
where convergence is in the weak-* sense. This implies in particular that
\[n_j^{(k)} = \mathrm{card}\{n\leq k\,|\, x_n=j\}\]
satisfies, for any $j\in V$, ${n_j^{(k)}}/k\underset{k\to\infty}\to \tr\,\rhoinv(j)$, $\pp_{i,\rho}$-almost-surely. Therefore, $t_j^{(k)}<\infty$ but $t_j^{(k)}\underset{k\to\infty}\to\infty$. Considering $m=t_j^{(k)}$, we have ${n_j^{(m)}}/{m}={k}/{t_j^{(k)}}$ and therefore, $\pp_{i,\rho}$-almost-surely, $t_j^{(k)}/k \to \big(\tr\,\rhoinv(j)\big)^{-1}$ .

Observe now that, as shown in Example \ref{ex_invariantstate}, our assumptions imply in particular that $\pp_{j,\rho}(t_j<\infty)=1$ for any $\rho$ in $\Scal(\h_j)$, so that $\mfP_{j,j}$ is a completely positive, trace-preserving map, with Kraus decomposition
\[\mfP_{j,j}(\rho)=\sum_{\pi\in\mathcal P^{V\setminus\{j\}}} L_\pi \rho L_\pi^*.\] 
In addition, we have $\pp_{j,\rho}$-almost-surely from \eqref{eq_KuMapveespti2}	
\[\frac1n \sum_k\rho_{t_{j}^{(k)}} \ind_{t_j^{(k)}\leq n} \underset{n\to\infty}{\longrightarrow} \rhoinv(j)\]
(the convergence needs not be specified, as $\h_j$ is finite-dimensional), but the K\"ummerer-Maassen ergodic Theorem applied to $\mfP_{j,j}$ shows that $\frac1{n_j^{(m)}} \sum_{k=1}^{n_j^{(m)}} \rho_{t_{i}^{(k)}}$ converges almost-surely to an invariant of $\mfP_{j,j}$. Therefore, $\frac{\rhoinv(j)}{\tr\,\rhoinv(j)}$ is an invariant state for $\mfP_{j,j}$ and $\pp_{j,\rho}$-almost-surely,
\begin{equation} \label{eq_acontredire}
 	\frac1{n_j^{(m)}} \sum_{k=1}^{n_j^{(m)}} \rho_{t_{i}^{(k)}} \underset{m\to\infty}\longrightarrow \frac{\rhoinv(j)}{\tr\,\rhoinv(j)}
\end{equation}
In addition, since ${\rhoinv(j)}$ is faithful on $\h_j$, one has by necessity that $\mfP_{j,j}$ is irreducible: if there existed an invariant subspace for all $L_\pi$, $\pi\in \mathcal P^{V\setminus\{j\}}$, then there would exist an invariant state $\rho'_j$ for $\mfP_{j,j}$ with support on this invariant subspace, and considering initial data $(j,\rho_j')$ in \eqref{eq_acontredire} above would show that $\rhoinv(j)$ has support no larger than the support of $\rho'_j$, a contradiction.

We now define a new probability space by $\Omega^{(j)}= \big(\mathcal P^{V\setminus\{j\}}(j,j)\big)^{\otimes \nn}$, and let
\[\pp^{(j)}(\pi_1,\ldots,\pi_m)= \tr \big(L_{\pi_m}\ldots L_{\pi_1} \frac{\rhoinv(j)}{\tr\,\rhoinv(j)} L_{\pi_1}^* \ldots L_{\pi_m}^*\big).\]
The trace-preserving property of $\mfP_{j,j}$ shows that this defines a consistent family and by the Daniell-Kolmogorov extension Theorem this defines a probability $\pp^{(j)}$ on $\Omega^{(j)}$. In addition, the invariance of $\frac{\rhoinv(j)}{\tr\,\rhoinv(j)}$ by $\mfP_{j,j}$ implies 
\[\sum_{\pi_1 \in \mathcal P^{V\setminus\{j\}}(j,j)} \pp^{(j)}(\pi_1,\ldots,\pi_m)= \pp^{(j)}(\pi_2,\ldots,\pi_m),\]
which shows that $\pp^{(j)}$  is invariant by the left shift
\begin{equation*}
	\begin{array}{cccc}
		\Theta : & \Omega^{(j)} & \rightarrow & \Omega^{(j)}\\
				 & (\pi_1,\pi_2,\ldots)&\mapsto & (\pi_2,\pi_3,\ldots)
	\end{array}
 \end{equation*} 
Now, the Perron-Frobenius Theorem implies that $1$ is a simple eigenvalue for~$\mfP_{j,j}$. This immediately shows that for any two cylinder sets $E$ and $F$, 
\[\frac1m \sum_{k=1}^m \pp^{(j)}\big(E \cap \Theta^{-k}(F)\big)\underset{m\to\infty} \longrightarrow \pp^{(j)}(E) \,\pp^{(j)}(F),\]
so that $(\Omega^{(j)},\pp^{(j)})$ is ergodic for $\Theta$. Now, if we consider the map $\ell^{(k)}$ defined by
\[\ell^{(k)}(\pi_1,\pi_2,\ldots)=\ell(\pi_1)+\ldots+\ell(\pi_k),\]
then this map satisfies $\ell^{(k+k')}=\ell^{(k)}+\ell^{(k')}\circ \Theta^k$. By Birkhoff's ergodic Theorem one has $\pp^{(j)}$-almost-sure convergence of ${\ell^{(k)}}/k$ to the expectation of $\ell^{(1)}$ for $\pp^{(j)}$. It is immediate, however, that the distribution of $\ell^{(k)}$ under $\pp^{(j)}$ is the same as the distribution of $t_j^{(k)}$ under $\pp_{j,\frac{\rhoinv(j)}{\tr\,\rhoinv(j)}}$. We therefore have
\[{t_j^{(k)}}/{k}\underset{k\to\infty}\longrightarrow \ee_{j,\frac{\rhoinv(j)}{\tr\,\rhoinv(j)}}(t_j^{(1)}),\]
where convergence is both almost-sure and in the $\mathrm L^1$ sense, with respect to $\pp_{j,\frac{\rhoinv(j)}{\tr\,\rhoinv(j)}}$. The first part of the proof shows that 
\[\ee_{j,\frac{\rhoinv(j)}{\tr\,\rhoinv(j)}}(t_j^{(1)})=\big(\tr\,\rhoinv(j)\big)^{-1},\]
and this concludes the proof.
\qed

\section{Proofs for Section \ref{sec_exittimesDirichletproblems}}
\paragraph{Proof of Proposition \ref{prop_probDirichletV}} 

Consider $A=\sum_{i\in V} A_i\otimes \ketbra ii$ such that for any~$i$ in $V$, $\sum_{j\in V}\|\mfN_{j,i}^*(A_j)\|<\infty$. Then \eqref{eq_solutionpbDirichletV} defines an operator $Z$. Proving that $Z$ satisfies \eqref{eq_pbDirichletV} is then a straightforward computation. By linearity it is enough to assume that $A=A_k\otimes \ketbra kk$. We then have
    \begin{align*}
        \M^*(Z)
        &= \sum_{i\in V} \Big(\sum_{j\in V} L_{j,i}^* \big(\ind_{j=k}\,A_k+\sum_{\pi\in \cP(j,k)}L_\pi^* A_k L_\pi\big) L_{j,i}\Big)\otimes \ketbra jj.
    \end{align*} 
Since the set of paths obtained by concatenating one step from a given $i$ to a variable $j$, then some $\pi$ from $j$ to $k$, is exactly the set of paths from $i$ to $k$ of length $\geq 2$, and $(i,k)$ is the only path from $i$ to $k$ of length $1$, we obtain $\M^*(Z)=Z-A_k\otimes\ketbra kk$, so that $(\id-\M^*)(Z)=A$. If $Z'$ is another solution of \eqref{eq_pbDirichletV}, then $Y=Z'-Z$ satisfies $\M^*(Y)=Y$ and by the Perron-Frobenius Theorem of \cite{Gro} applied to the irreducible map~$\M^*$, we have $Y\in \cc\id_{\mathcal H}$. \qed

\paragraph{Proof of Proposition \ref{prop_sigmaDfinite}}
It is now a routine argument to construct $\mfP_{i,j}^D$ using the Banach-Steinhaus Theorem, as
\[\mfP_{j,i}^D(\rho)=\sum_{\pi\in \cP^{D}(i,j)} L_\pi \rho L_\pi^*.\]
One then has by definition $\pp_{i,\rho}(t_j\leq t_\bD<\infty)=\tr\, \mfP^D_{j,i}(\rho)$, and the second identity follows from $\pp_{i,\rho}(t_\bD<\infty)=\sum_{j\in \bD}\pp_{i,\rho}(t_j\leq t_\bD<\infty)$. Relations \eqref{eq_erhotj1} and \eqref{eq_erhotj2} are obtained as Equation \eqref{eq_esprhoti}.
\qed
\paragraph{Proof of Proposition \ref{prop_sortieps}}

We define
    \[p= \inf_{i\in D}\inf_{\rho \in\mathcal S(\mathfrak h_i)} \pp_{i,\rho}( t_{\bD} <  +\infty).\]
    We will show independently that $p>0$ and that $p\in\{0,1\}$, therefore proving Proposition \ref{prop_sortieps}.
    
    To prove that $p>0$, we use a simple adaptation of Lemma \ref{lemme_probaminoree}. Fix some~$\rho$ in $\mathcal S (\mathfrak h_i)$; there exist a unit vector $\varphi$ in~$\mathfrak h_i$ and $\lambda >0$ such that $\rho \geq \lambda \ketbra \varphi\varphi$. By irreducibility, for any~$j$ in $\bD$ there exists a path $\pi$ in $\cP(i,j)$ such that $L_\pi \,\varphi\neq 0$. There exists
$j'$ in $\bD$ (the first point of $\bD$ visited by the trajectory $\pi$) and a subpath $\pi'$ of $\pi$ belonging to $\cP^{D}(i,j')$, with necessarily $L_{\pi'} \,\varphi \neq 0$. We have shown $\tr\,\mfP^D_{i}(\rho)>0$ and, $\mfP^D_{i}$ being continuous, we have by a compactness argument that $\inf_{\rho\in \mathcal S(\h_i)}\tr\,\mfP^D_{i}(\rho)>0$, and therefore $p>0$ as $D$ is finite.
   
    We next prove that $p \in\{0,1\}$. By the strong Markov properrty, for any $n$ one has
    \begin{align*}
    1-p &=\sup_{i\in D}\sup_{\rho \in\mathcal S(\mathfrak h_i)} \pp_{i,\rho}(t_{\bD}=+\infty)\\
    &=  \sup_{i,\rho} \ee_{i,\rho} \big( \ind_{x_1,\ldots, x_n \in D} \, \pp_{x_n,\rho_n} (t_{\bD}=+\infty) \big)\\
    &\leq (1-p) \,  \pp_{i,\rho} ( x_1,\ldots, x_n \in D),
    \end{align*}
    and taking $n\to\infty$ leads to $(1-p)\leq (1-p)^2$, so that $p\in\{0,1\}$. This concludes our proof.
\qed

\paragraph{Proof of Lemma \ref{lemme_mfND}}

Let $j$ in $V$ with $\mathrm{dim}\,\h_j<\infty$. By irreducibility, there exists a path $\pi$ in $\mathcal P^D(j,k)$ for some $k\in \bD$ such that $\tr L_\pi \rho L_\pi^*\neq 0$. There exists $k'$ in $\bD$ and a subpath $\pi'$ of $\pi$ which belongs to $\mathcal P^{D\setminus\{j\}}(j,k')$ such that $\tr L_{\pi'} \rho L_{\pi'}^*\neq 0$, which implies that $\pp_{j,\rho}(t_j\leq t_\bD)~<~1$. In particular, $\tr\,\mfPD_{j,j}(\rho)<1$ for any $\rho$ in $\mathcal S(\h_j)$, so that $\|\mfPD_{j,j}\|<1$. The same discussion that allowed us to construct $\mfN_{j,i}$ shows that $\mfND_{j,i}$ is well-defined by $\mfND_{j,i}=(\id - \mfPD_{j,j})^{-1}\circ \mfPD_{j,i}$ and satisfies relations \eqref{eq_eeirho} and \eqref{eq_mfND}.
%construction of  is the same as that of $\mfN_{j,i}$. 
\qed

\paragraph{Proof of Proposition \ref{prop_pbDirichletD}}
By Lemma \ref{lemme_mfND}, all operators $\mfND_{j,i}$ and therefore the operator $Z$, are well-defined. Obviously $Z_j=B_j$ for $j\in \bD$; the proof that $(\id-\M^*)(Z)_i=A_i$ for $i\in D$ is similar to that for Proposition \ref{prop_probDirichletV}. Now consider two solutions $Z$ and $Z'$; then $Y=Z-Z'$ satisfies $Y_j=0$ for $j\in \bD$ and $(\id-\M^*)(Y)_i=0$ for $i\in D$. As in Lemma \ref{lemme_martingale} we can prove that, if $m_n=\big(\tr(\rho_n Y_{x_n})\big)_n$, then $m^D_n=m_{\inf(n,t_\bD)}$ is a $\pp_{i,\rho}$-martingale for any $i$ in $D$ and $\rho$ in $\mathcal S(\h_i)$. The optional sampling Theorem applied to the bounded martingale $\tr(\rho_n Y_{x_n})$ and the stopping time $t_\bD$ implies that 
 \[\tr\big(\rho \,Y_i\big)=\ee_{i,\rho}\big(\tr(\rho_{t_\bD} Y_{x_{t_\bD}}\big)=0.\]
Since this is true for any $\rho$ in $\mathcal S(\h_i)$, we obtain that $Y_i=0$, for any $i\in D$. \qed

\section{Proof for Section \ref{sec_varDirichlet}}

\paragraph{Proof of Lemma \ref{lemme_ptesQDB}}
Since $\|\M^*\|=1$, the quantum detailed balance condition implies that the spectrum of $\M^*$ is contained in $[-1,+1]$, so that $I-\M^*$ is a positive operator and $\Ecal(X)\geq0$ for all $X\in\Bcal(\Hcal)$. In addition, $\Ecal(X)=0$ if and only if $\M^*(X)=X$. If $\M$ is irreducible, which by the Perron-Frobenius Theorem for operators on a C*-algebra (see \cite{Gro}) applied to the irreducible map~$\M^*$, the identity  $\M^*(X)=X$ is equivalent with $X\in \cc\id_{\mathcal H}$. \qed

\paragraph{Proof of Theorem \ref{theo_VarDirichletpb}}
Let us write $Z=B+X+X'$ with $X\in \mathcal B(\mathcal H_D)$ and $X'\in \mathcal B(\mathcal H_{V\setminus(D\cup \bD)})$. By definition of $\bD$, one has $(\id-\M^*)(X')\in \mathcal B(\mathcal H_{V\setminus D})$. Denoting $C=(\id-\M^*)(B)$ we have that $Z$ is a solution of \eqref{eq_pbDirichletD} if and only if $(\id-\M^*)(X)_k=(A-C)_k$ for $k\in D$, or equivalently if 
\begin{equation} \label{eq_LaxMilgram}
\mathcal E(T,X)=\langle T, A-C\rangle_\diamond \quad \mbox{ for any } T\in\mathcal B(\mathcal H_D).
\end{equation}
By Lemma \ref{lemme_ptesQDB}, $\E(X,X)$ is non-negative and vanishes only if $X\in \cc \id_{\mathcal H}$. However, since $\bD\neq \emptyset$,  $\id_{\mathcal H}\not\in \mathcal B(\mathcal H_D)$ and one has $\mathcal E(X,X)>0$ for any $X\in \mathcal B(\mathcal H_D)$. Consequently, by a compactness argument, there exists $\lambda>0$ such that $\mathcal E(X,X)\geq \lambda\|X\|_\diamond^2$ for $X\in \mathcal B(\mathcal H_D)$.  One can then apply the Lax-Milgram Theorem (see \cite{Brezis}): there exists a unique $X_0$ satisfying \eqref{eq_LaxMilgram}, which in addition is the minimizer of 
\[\mathcal B(\mathcal H_D) \ni X\mapsto \frac12\,\mathcal E(X,X) - \langle X,A-C\rangle_\diamond = \frac12\,\mathcal E(X,X) + \mathcal E(X,B)- \langle X,A\rangle_\diamond.\]
The solutions of Equation \eqref{eq_pbDirichletD} are therefore the operators of the form  
\[Z=B+X_0+X'\]
for $X'\in\mathcal B(\mathcal H_{V\setminus(D\cup\bD)})$.
\qed 

\paragraph{Proof of Proposition \ref{prop_doublysto}}

The proof is simply a matter of computation.
For doubly stochastic OQW, $L_{ij}=L^*_{ji}$, the invariant state $\tau_\diamond$ is the identity and the Dirichlet form reads
\begin{equation*}
 \mathcal{E}(X)= \mathrm{Tr}\big(X^*(\mathrm{Id}-\mathfrak{M})X\big) =\sum_{i,j\in V}\mathrm{Tr}\big(X^*_i \delta_{ij} X_j - X_i^*L_{ij}X_jL_{ji}\big). 
\end{equation*}
On the other hand we have
\begin{eqnarray*}
\frac{1}{2} \|(\nabla X)\|_V^2 &=& \frac{1}{2} \sum_{i,j\in V} \mathrm{Tr}\big((X_iL_{ij}-L_{ij}X_j)(L_{ji}X_i^*-X_j^*L_{ji})\big)\\
&=& \frac{1}{2}\sum_{i,j\in V} \mathrm{Tr}\big( X_iL_{ij}L_{ji}X_i^* +X^*_iL_{ij}L_{ji}X_i - 2 L_{ij}X_jL_{ji}X_i^*\big).
\end{eqnarray*}
The two formulas coincide since $\sum_{j\in V} L_{ij}L_{ji}=\mathrm{Id}$ for doubly stochastic OQW.
\qed
\end{document}